\documentclass[12pt,openany]{article}
\usepackage{ae,aecompl}

\usepackage[T1]{fontenc}
\usepackage[utf8]{inputenc}
\usepackage{geometry}
\usepackage{amsmath}
\usepackage{amsthm}
\usepackage{amssymb}
\usepackage{setspace}
\usepackage[authoryear]{natbib}
\usepackage{stackrel}
\usepackage{graphicx}
\usepackage{caption}
\usepackage{subcaption}
\usepackage{hyperref}

\newcounter{innersubfigure}[subfigure]
\renewcommand{\theinnersubfigure}{\alph{innersubfigure}}

\makeatletter
\usepackage{setspace}
\usepackage{pdflscape}
\usepackage{afterpage}
\usepackage{datetime}
\usepackage{xcolor}

\DeclareMathOperator{\Var}{Var}

\DeclareMathOperator{\Cov}{Cov}

\DeclareMathOperator{\KL}{KL}

\DeclareMathOperator*{\argmin}{arg\,min}

\DeclareMathOperator*{\tr}{Trace}

\newtheorem{prop}{Proposition}

\newtheorem{assu}{Assumption}
\newtheorem{lemma}{Lemma}
\newtheorem{thm}{Theorem}
\newtheorem{cor}{Corollary}
\newtheorem{axiom}{Axiom}
\newtheorem{defn}{Definition}
\newcommand{\norm}[1]{\left\lVert#1\right\rVert}

\date{}

\title{Misspecification-Averse Estimation}
\author{ Isaiah Andrews, Ricky Li, and Yucheng Shang\thanks{\scriptsize This version: April 24, 2026.  First posted version: March 30, 2026. Andrews: MIT Department of Economics and NBER; iandrews@mit.edu.  Li, MIT Department of Economics; rickyli@mit.edu.  Shang, MIT Department of Economics; ycshang@mit.edu.  We thank Ashesh Rambachan, Brit Sharoni, and Tomasz Strzalecki for collaboration on an earlier project which derived related results, Drew Fudenberg and seminar participants at Harvard/MIT and the University of Michigan for very helpful feedback, and Claude Opus 4.6 and GPT 5.4 Pro for outstanding research assistance.}}

\makeatother

\begin{document}

\maketitle

\begin{abstract} 
We study optimal estimation when the likelihood may be misspecified.  Building on
tools from the theory of decision-making under uncertainty, we analyze a class of axiomatically grounded optimality criteria which nests several existing misspecification-robust objectives. Within this class, we introduce the constrained multiplier criterion, which allows for flexible misspecification attitudes. We prove a local
asymptotic minimax theorem for this criterion, extending a classical efficiency
bound to a limit experiment which incorporates moment-constrained misspecification concerns. We characterize asymptotically optimal estimators as Bayes decision rules 
under a flat prior and an exponentially tilted likelihood that incorporates the moment
constraints, and show that feasible plug-in analogs are asymptotically optimal.
\end{abstract}

\section{Introduction}

    Researchers in economics are often concerned that their models are wrong,
and about the consequences  for estimation and inference.  A large literature
considers the problem of estimation and inference under misspecification
with different optimality criteria, including worst-case mean squared
error \citep{Huber1964, BonhommeWeidner2022} and confidence interval length
\citep{ArmstrongKolesar2020}.  Other work considers the related
problem of optimal decision-making in settings with misspecification concerns
\citep{HansenSargent2001, HansenSargent2008,DuchiNamkoong2021DRO,GaoKleywegt2023DRO,
CerriaViolioetal24}.  These contributions adopt different objective
functions, and the choice among criteria has substantive consequences
for the resulting procedures.  A natural question is thus what
objective a researcher \emph{should} use when their model (and in particular, the model-implied mapping from parameters to data distributions) may be
wrong, and what different choices of criteria imply for the resulting ranking of decision rules.

The question of how to evaluate decisions in the presence of model
uncertainty is a central concern of the microeconomic theory of
choice under ambiguity.  That literature provides axiomatic
foundations for classes of preferences, linking axioms on preferences to 
 functional representations of the decision-maker's objective
\citep{GilboaSchmeidler1989, maccheroni2006ambiguity,
CerriaViolioetal24}.  
This paper applies these tools to the problem of choosing estimation
criteria when the model may be misspecified, connecting axiomatic
foundations for ambiguity-averse preferences to the design of
statistical procedures.

Our first contribution is to introduce \emph{constrained multiplier
preferences} as an optimality criterion for estimation under
misspecification.  Under this criterion, a researcher evaluates an
estimator by its worst-case expected loss over distributions that
satisfy constraints encoding what is known about potential
misspecification, penalized by a Kullback-Leibler divergence term that
reflects the difference from the baseline model.  We start
from a broad class of misspecification-averse preferences axiomatized
by \citet{CerriaViolioetal24}, which nests many different
misspecification-robust criteria.  We then discuss additional axioms
that select specific subclasses within this family: constraint
preferences \citep[as in][]{BonhommeWeidner2022}, which are
characterized by the certainty independence axiom of
\citet{GilboaSchmeidler1989}; multiplier preferences
\citep[as in][]{HansenSargent2001}, characterized by the sure thing
principle following \citet{strzalecki2011axiomatic}; and the
constrained multiplier class, which combines constraint and multiplier
features and whose axiomatic characterization is new.  Building on results from the
generalized empirical likelihood literature, we further show that when the constraints
can be expressed as moment conditions, the constrained 
multiplier preference has a computationally tractable dual representation.  The moment
constraints allow the researcher to express partial trust in the
model: for instance, one may trust the model's implications for certain moments of the data, but
not the full data distribution.

Our second contribution is to derive optimal estimators for this
criterion.  We prove a local asymptotic minimax theorem that extends
the classical lower bound of \citet{Hajek1972}
to the constrained multiplier objective: for convex loss, no sequence
of estimators can achieve worst-case risk below our bound.  Obtaining this bound
requires deriving a novel limit experiment that
incorporates both misspecification concerns and moment information.
We then characterize estimators that attain this bound.  They
take the form of Bayes rules under an exponentially tilted likelihood 
that incorporates the moment constraints.  Under squared error loss, the
structure simplifies: when no moment constraints are imposed, or when only the mean of a vector of sample average moments is constrained, the maximum likelihood estimator
is optimal; when higher moments of the sample average moment vector are constrained, the
optimal rule linearly adjusts the MLE using the sample average moments.\footnote{Contemporaneous and independent work by \cite{adusumilli2026} derives related asymptotic results for worst-case Bayes risk in the case with no constraints on misspecification, likewise showing that the rules which are asymptotically optimal under the model remain optimal under misspecification concern.  His results cover parametric and semiparametric estimation with general symmetric and quasi-convex loss functions, as well as treatment assignment problems.}  We further show that plug-in finite-sample
analogs of the optimal rules in the limit experiment are asymptotically optimal under regularity
conditions.  Our results imply that the common practice in economics of specifying a full parametric model but conducting estimation based solely on a subset of model-implied moments due to misspecification concern \citep[see e.g.][for examples]{andrews2025purpose} is suboptimal absent extreme misspecification aversion.

Optimal estimation requires the researcher to specify their (scalar) degree of misspecification aversion since, absent such concern, conventional optimality results apply.  
Since it may sometimes be difficult for researchers to specify this preference parameter, following \cite{ArmstrongKlineSun2025} we close by exploring the scope for adapting to the degree of misspecification concern.  Focusing on settings with a scalar parameter, a scalar moment condition, and squared error loss, we find that (as in \citealt{ArmstrongKlineSun2025}) two simple classes of decision rules attain nearly the same degree of adaptation as do much more flexible rules.  The resulting simple adaptive rules may be attractive to researchers interested in applying our results but uncertain how to specify the misspecification-aversion parameter.

Section \ref{sec: loss axioms} introduces the decision-theoretic
framework and the axiomatization of constrained multiplier
preferences.  Section \ref{sec: LAM} states the asymptotic risk
bound.  Section \ref{sec: optimal rules} characterizes optimal
estimators in the limit experiment and their feasible analogs, 
and Section \ref{sec: adaptive rules} studies adaptive rules.
Appendix \ref{sec: appendix axioms} provides further detail on
the axiomatic foundations of our approach, while the remaining proofs appear in
Appendix \ref{sec: proofs}.

\section{Preferences Over Losses}\label{sec: loss axioms}

Consider a researcher who will observe data $X$ from a sample space $\mathcal{X}$.  For an unknown parameter $\theta$ in a parameter space $\Theta$, this researcher must choose an action $a$ from a set of feasible actions $\mathcal{A}$.  The researcher's objective is specified by a loss function $l:\mathcal{A}\times\Theta\to\mathbb{R}$, where $l(a,\theta)$ describes the loss from taking action $a$ when the true parameter is $\theta$.

Since the researcher does not observe $\theta$ directly, they must choose an action based on the data.  A (randomized) decision rule $\delta:\mathcal{X}\to\Delta(\mathcal{A})$ maps data realizations to distributions over actions.  A decision rule $\delta$ thus induces an expected loss that depends on $(\theta,X)$,
\[L_\delta(\theta,X)=\int l(a,\theta)d\delta(a;X).\]
We will take the induced loss functions $L:\Theta\times\mathcal{X}\to\mathbb{R}$ as our starting point and rank decision rules based on our preference over their induced losses $L_\delta$.  Our approach is thus similar in spirit to \citet{Stoye2012}, who takes risk (expected loss integrating over the distribution of the data) functions as a primitive to study the choice of decision rules. We work with loss rather than risk since risk depends on the distribution of the data, and we are interested in settings where this distribution may differ from that assumed by the model.

\paragraph{Example: Average Treatment Effects}

Suppose the researcher observes a sample of $n$  observations $X=(X_1,...,X_n)$ for $X_i=(Y_i,D_i)$, where $D_i\in\{0,1\}$ is a binary treatment and $Y_i\in\{0,1\}$ is a binary outcome. The unknown parameter $\theta=(\mu_0,\mu_1)$ collects the mean potential outcomes $\mu_d=E[Y_i(d)]$ in a population of policy interest (e.g. where the researcher is considering rolling out the treatment), and the target parameter is the average treatment effect (ATE) $\kappa(\theta)=\mu_1-\mu_0$. We consider loss $l(a,\theta)=(a-\kappa(\theta))^2$ and action space $\mathcal{A}=[-1,1]$.  

One decision rule the researcher could consider is the (non-randomized) difference-in-means estimator $\hat\kappa_{DM}=\bar{Y}_1-\bar{Y}_0=\frac{\sum_i D_i Y_i}{\sum_i D_i }-\frac{\sum_i (1-D_i) Y_i}{\sum_i (1-D_i) }$, which induces loss $L_{DM}(\theta,X)=(\hat\kappa_{DM}-\kappa(\theta))^2$.\footnote{If either the treatment or control group is empty, define $\hat\kappa_{DM}=0$.}  Another decision rule is the (randomized) half-sample difference-in-means $\hat\kappa_{DM,\frac{1}{2}}$, which drops half of the treatment and control observations, selected at random, and computes the difference-in-means over the remaining observations.  This induces loss $L_{DM,\frac{1}{2}}(\theta,X)=E\left[\left(\hat\kappa_{DM,\frac{1}{2}}-\kappa(\theta)\right)^2|X\right]$ where the expectation is over the randomness induced by dropping observations. $\triangle$

\medskip

Standard decision-theoretic analysis proceeds by first characterizing preferences over a large menu of options (e.g. loss functions) including ones which may be infeasible, and then applying these preferences to select from the feasible set for a particular problem.  This approach is useful for characterizing preferences because the feasible set in a given setting is often quite restrictive, and it is easier to characterize preferences when working with a larger menu of choices.  For instance, constant loss functions that assign the same loss regardless of the state are used in many decision theory results, but will often be infeasible in estimation problems: if we consider an estimation problem under squared error loss $l(a,\theta)=(a-\kappa(\theta))^2,$ $L_\delta(\theta,X)$ will generally vary with $\theta$ so long as $\kappa(\theta)$ does.\footnote{By contrast, \cite{breza2025generalizability} consider an estimation problem where the researcher may decline to report an estimate at a constant cost, in which case certain constant loss functions are feasible.}

We assume the researcher has a statistical model $\mathcal{Q}$, which they may not fully trust. Formally, let $\mathcal{Q}=\{Q_\theta:\theta\in\Theta\}$ denote the model, where each $Q_\theta\in\Delta(\mathcal{X})$ is a distribution for the data.\footnote{For simplicity, we further assume that $\mathcal{X}$ is Polish, that all distributions are defined on the Borel $\sigma$-algebra, and that functions discussed are Borel-measurable.  Going forward, we suppress discussion of measure-theoretic details where possible.}
Let $\mathcal{L}$ denote the set of all bounded loss functions $L:\Theta\times\mathcal{X}\to\mathbb{R}$.\footnote{As is standard in the decision theory literature, proofs of the sufficiency direction of our representation theorems (the direction where the axioms imply the representation) will proceed on the domain of \textit{simple} (finite-valued) loss functions. Under an appropriate continuity axiom, such a representation admits a unique extension to bounded loss functions. See Appendix \ref{sec: appendix axioms} for details.} We take as primitive a family of preferences $\{\succsim_\Theta\},\{\succsim_\theta:\theta\in\Theta\}$ over loss functions in $\mathcal{L}$.  Throughout, $L\succsim L'$ means that $\succsim$ weakly prefers $L$ to $L'$.
The preference $\succsim_\Theta$ is the researcher's overall ranking of loss functions, and is the object we ultimately seek to characterize.  The auxiliary preferences $\{\succsim_\theta\}_{\theta\in\Theta}$ represent the researcher's evaluations given parameter value $\theta$; we axiomatize representations for these first and then impose additional axioms to derive the desired representation for $\succsim_\Theta$.  Intuitively, $\succsim_\theta$ represents the researcher's preference when they are certain that $\theta$ is the true parameter value but remain concerned that the data distribution $Q_\theta$ implied by their model may be misspecified. We therefore refer to $\{\succsim_\theta\}_{\theta \in \Theta}$ as conditional preferences. While we take these as primitive in the main text for ease of exposition, in Appendix \ref{sec: appendix axioms} we show how they may be derived from the researcher's actual preference over loss functions $\succsim_\Theta$.

By misspecification, we will mean that the data $X$ are distributed according to $P$ which differs from the distribution $Q_\theta$ implied by the true $\theta$.  Importantly, we assume the true value of $\theta$ remains well-defined even when the statistical model is wrong.  This is natural for parameters like causal effects and counterfactuals which remain well-posed even when the model the researcher uses to estimate them is incorrect, but is less natural for e.g. parameters in a parametric utility, for which it is difficult to define a ``true value'' absent correct specification.\footnote{We could weaken this assumption to instead require only that some function $\kappa(\theta)$ that enters the loss, $l(a,\theta)=\tilde{l}(a,\kappa(\theta)),$ have a model-agnostic definition, where the model implies a set of data distributions $\mathcal{Q}(\kappa^*)=\{Q_\theta:\theta\in\Theta,\kappa(\theta)=\kappa^*\}$ compatible with a given value $\kappa^*$. In particular, observe that for any function $f(Q_\theta)$, we have $\sup_{\theta\in\Theta}f(Q_\theta)=\sup_{\kappa^*}\sup_{Q\in\mathcal{Q}(\kappa^*)}f(Q),$ which may be used to reformulate our results in this format.  While this is a weaker assumption on the interpretation of $\theta,$ it leads to substantially heavier exposition for some of our results, so we impose the stronger condition that the full vector $\theta$ has a model-free interpretation, or equivalently that $\kappa$ does and that $\kappa(\cdot)$ is invertible.}  It also rules out the case where $\theta$ is defined as a statistical functional $\theta:\mathcal{Q}\to\mathbb{R}^p,$ since in this case the model is either well-specified (when $P\in\mathcal{Q}$) or $\theta$ is undefined (when $P\not\in\mathcal{Q}$).  In the terminology of \cite{andrews2025purpose}, we thus consider the problem of misspecification in the context of an \emph{econometric model} described by the pair $(\theta,P),$ rather than a  \emph{statistical} model described solely by the distribution $P$ of the data.
 
\paragraph{Example: Average Treatment Effects, Continued}

Suppose the researcher's model posits that the data are generated by (i) drawing a random sample from a population of interest which (ii) satisfies the potential outcomes model with each unit's outcome depending only on their own treatment and (iii) running a randomized trial where treatment is independently assigned to each unit with probability $\frac{1}{2}$. Under these assumptions $Y_i=Y_i(D_i)$, where $D_i$ is independent of the potential outcomes $(Y_i(0),Y_i(1))$.  Thus, under the researcher's model the 
observations $X_i=(Y_i,D_i)$ are i.i.d. draws from a multinomial distribution supported on $\{0,1\}^2$ with $E_{Q_\theta}[D_i]=\frac{1}{2},$ $E_{Q_\theta}[Y_i|D_i=0]=\mu_0,$ and $E_{Q_\theta}[Y_i|D_i=1]=\mu_1.$   

There are many ways in which the researcher's model could be wrong.  For instance, the population from which experimental participants are sampled could differ from the population of policy interest (in which case the marginal distribution of potential outcomes $Y_i(d)$ in the trial could differ from that implied by $\theta$).  Alternatively, the experimental sample might not be drawn i.i.d., e.g. oversampling the friends of early experimental participants due to recruitment through social networks (in which case the outcomes would not be independent across units).  Finally, treatment might not be assigned as prescribed by the protocol, e.g. treating participants with lower baseline outcomes with higher probability (in which case treatment would not be independent of potential outcomes).  If we combine these possibilities, any value of $\theta=(\mu_0,\mu_1)$ could in principle be compatible with any distribution $P\in\Delta(\{0,1\}^{2n})$ for the observable data $X$.

For the forms of misspecification discussed above, $\theta$ remains well-defined as the average potential outcomes in the target population, though it will not be identified absent restrictions on the possible misspecification.  There are other forms of misspecification one could contemplate, e.g. spillovers across units, where the definition of $\theta$ becomes more delicate. If spillovers are present, for instance, does $\mu_0$ correspond to the average outcome when no unit is treated, or when a given unit is untreated while others are treated i.i.d. with probability $\frac{1}{2}$?  For our analysis, we presume the researcher has adopted a definition for the ``true'' $\theta$ which remains well-posed under the forms of misspecification they contemplate, and wishes to choose among decision rules in a way which is robust to this misspecification concern. $\triangle$

\medskip

Results by \cite{maccheroni2006ambiguity} and \cite{CerriaViolioetal24} imply axioms on the preference relations $\{\succsim_\Theta\},\{\succsim_\theta:\theta\in\Theta\}$ which hold if and only if these preferences are represented by $V_\Theta$ and $\left\{V_\theta:\theta\in\Theta\right\}$ respectively, for 
\begin{equation}\label{eq:CV-rep}
V_\Theta(L)=\sup_{\theta\in\Theta}\sup_{P\in\Delta(\mathcal{X})}\left\{\int L(\theta,x)dP(x)-c_\theta(P)\right\}
\end{equation}
\begin{equation}\label{eq:variational-rep}
V_\theta(L)=\sup_{P\in\Delta(\mathcal{X})}\left\{\int L(\theta,x)dP(x)-c_\theta(P)\right\}
\end{equation}
in the sense that 
\[ 
L\succsim_\Theta L' \iff V_\Theta(L)\le V_\Theta(L')\]
\[
L\succsim_\theta L'\iff V_\theta(L)\le V_\theta(L').
\]
In both representations, $c_\theta:\Delta(\mathcal{X})\to[0,\infty]$ is a convex, lower-semicontinuous function with $c_\theta(Q_\theta)=0$ for each $\theta\in\Theta$. Intuitively, the preference $\succsim_\theta$
evaluates loss functions $L$ by (i) focusing on their behavior at parameter value $\theta$ and (ii) considering the penalized worst-case risk, which averages over $P$ but then subtracts off $c_\theta(P)$, effectively penalizing expected loss under data distributions $P$ that the researcher finds less plausible.  We further assume that the researcher finds the model-implied distribution $Q_\theta$ at least as plausible as any other. The preference $\succsim_\Theta$ does the same for each $\theta,$ but further takes the worst case over all $\theta$. As an immediate consequence, $V_\Theta$ implies the (penalized) worst-case risk bound
\[
\int L(\theta,x)dP(x) \le V_\Theta(L)+c_\theta(P) ~~ \forall \theta\in\Theta,P\in\Delta(\mathcal{X})
\]
which controls, uniformly over $(\theta,P)$, how large the risk may be.  Choosing a loss function which minimizes $V_\Theta(L)$ is thus the same as minimizing an upper bound on the risk.

\paragraph{Example: Average Treatment Effects, Continued}

Consider a researcher choosing between the Horvitz-Thompson estimator $\hat\kappa_{HT}=\frac{1}{n}\sum_i 2(2D_i-1)Y_i$ and the difference-in-means estimator $\hat\kappa_{DM}=\bar{Y}_1-\bar{Y}_0$.  Provided the researcher's preferences are represented by \eqref{eq:CV-rep}, they weakly prefer the difference-in-means estimator if and only if 
\[
V_\Theta(L_{DM})=\sup_{\theta\in[0,1]^2}\sup_{P\in\Delta(\{0,1\}^{2n})}\int \left(\hat\kappa_{DM}-\kappa(\theta)\right)^2dP(x)-c_\theta(P)\le\]
\[
\sup_{\theta\in[0,1]^2}\sup_{P\in\Delta(\{0,1\}^{2n})}\int \left(\hat\kappa_{HT}-\kappa(\theta)\right)^2dP(x)-c_\theta(P)=V_\Theta(L_{HT}).
\]
Moreover, the optimized value $V_\Theta(L_{DM})$ of the left-hand side provides a bound on how quickly the performance of the difference in means estimator moves away from that prescribed by the researcher's model.
 $\triangle$

\medskip

The representations \eqref{eq:CV-rep} and \eqref{eq:variational-rep} follow from arguments in  \cite{maccheroni2006ambiguity} and \cite{CerriaViolioetal24}, but our framing of the problem (e.g. choice domain, preferences) differs from theirs.  For completeness, and to aid interpretation for readers more used to working with risk and loss functions than the conventional decision-theory setup, we thus provide an axiomatization for our choice domain, along with proofs, in Appendix \ref{sec: appendix axioms}.

The axiomatization discussed in Appendix \ref{sec: appendix axioms} leaves the form of the penalty function $c$ unspecified.  Different choices of $c$ correspond to different attitudes toward misspecification, and restricting these attitudes narrows the class of penalties.  We next discuss three important cases: constraint preferences, multiplier preferences, and a novel class of constrained multiplier preferences that combines the other two. In each case, we discuss the additional axioms that distinguish these preferences from the broader class characterized \eqref{eq:CV-rep} and \eqref{eq:variational-rep}.

\subsection{Constraint Preferences}

A particularly simple penalty arises when the researcher requires that the true distribution $P$ lie in some set but imposes no further penalties. Consider a closed, convex set $\mathcal{P}_\theta\subseteq\Delta(\mathcal{X})$ of distributions (the ``ambiguity set'') that the researcher treats as plausible under parameter value $\theta$. We require that $Q_\theta \in \mathcal{P}_\theta$ so that the researcher finds their model plausible. Preferences of the form

\[
V_\Theta(L)=\sup_{\theta\in\Theta}\sup_{P\in\mathcal{P}_\theta}\int L(\theta,x)dP(x)
\]

\begin{equation}\label{eq: cond GS-pref}
V_\theta(L)=\sup_{P\in\mathcal{P}_\theta}\int L(\theta,x)dP(x)
\end{equation} 
evaluate loss functions based on their worst-case performance over both parameters $\theta$ and distributions $P$ in the ($\theta$-specific) ambiguity set.  Criteria of this form have a long history, and were discussed by e.g. \cite{Huber1964} in the context of contamination neighborhoods, and more recently by \cite{BonhommeWeidner2022} for estimation of misspecified economic models, while similar misspecification-aware setups are considered for confidence set construction by \cite{ArmstrongKolesar2020} and \cite{christensen2023sensitivity}.\footnote{Note that while the class $\mathcal{P}_\theta$ considered in e.g. \cite{BonhommeWeidner2022} need not be convex, $V_\Theta(L)$ and $V_\theta(L)$ are unchanged if we replace all non-convex $\mathcal{P}_\theta$ by their convex hull.}  Much of the large and active literature on distributionally robust optimization \citep[e.g.][]{DuchiNamkoong2021DRO,GaoKleywegt2023DRO,MontielOleaetal2026} can also be cast in this form.\footnote{For instance, the problem of \cite{DuchiNamkoong2021DRO} can be cast into our setting by taking $X=(X_1,...,X_n)$ to represent $n$ i.i.d. draws $X_i\in \mathcal{X}_0$, $\Theta\subseteq\Delta(\mathcal{X}_0),$ $Q_\theta=\times_{i=1}^n\theta=\theta^n$, $l(a,\theta)=E_\theta[l^*(a,X_i)]$ for some loss function $l^*,$ and $\mathcal{P}_\theta$ the convex hull of $\left\{P_0^n:P_0\in\Delta(\mathcal{X}_0),D_f(\theta|P_0)\le\rho\right\}$ for $D_f$ an $f$-divergence.}

\paragraph{Example: Average Treatment Effects, Continued}

Now suppose the experiment is conducted by two teams, with unit $i$ randomly assigned to team $C_i\in\{1,2\}$, where $\pi_j=P\{C_i=j\}$.  The researcher's model maintains that outcomes do not depend on team assignment, so $E[Y_i(d)\mid C_i=j]=\mu_d$ for all $j$, and the likelihood pools data from both teams.  
However, the researcher has greater confidence in team 1's execution than in team 2's: for instance, team 2 may have deviated from the treatment assignment protocol, implemented the treatment less carefully, or measured outcomes differently.  We may express this using constraint preferences.  

Let $\mathcal{P}_\theta$ be the set of distributions which arise by (i) drawing $C_i$ i.i.d. with $P\{C_i=j\}=\pi_j$ (ii) drawing $(Y_i,D_i)|C_i=1$ as described by the researcher's model and (iii) drawing the remaining outcomes $\{(Y_i,D_i)|i\in\{1,...,n\}\text{ such that }C_i=2\}$ from some other distribution.  This case lies strictly between full trust in the model ($\mathcal{P}_\theta=\{Q_\theta\}$) and complete agnosticism ($\mathcal{P}_\theta=\Delta(\mathcal{X})$): it requires that team 1's observations follow the model, while placing no restrictions (including independence) on team 2's observations. $\triangle$

\medskip

The preference \eqref{eq: cond GS-pref} is a special case of \eqref{eq:variational-rep} with the penalty function
\begin{equation}\label{eq: GS indicator}
c_\theta(P)=\begin{cases}0 & \text{if } P\in\mathcal{P}_\theta\\ \infty & \text{if } P\notin\mathcal{P}_\theta\end{cases}.
\end{equation}
These preferences, which can be viewed as a version of \citet{GilboaSchmeidler1989} min-max preferences, are distinguished from the more general class of variational preferences by the Certainty Independence axiom.    Certainty independence is stated using a \emph{constant loss}, by which we mean a loss function $r\in\mathcal{L}$ that takes the same value $r\in\mathbb{R}$ for all $(\theta,x)$, and we slightly abuse notation by identifying constant functions with their value. 
\begin{axiom}[Certainty Independence]\label{ax:CI}
For all $\theta\in\Theta$, $L,L'\in\mathcal{L}$, $r\in\mathbb{R}$, and $\alpha\in(0,1)$,
\[L \succsim_\theta L' \iff \alpha L+(1-\alpha)r \succsim_\theta \alpha L'+(1-\alpha)r.\]
\end{axiom}

Intuitively, certainty independence requires that our preference over estimators not change if, with some probability independent of the data and parameter, we will be randomly switched to instead use an estimator with constant loss.

\paragraph{Example: Average Treatment Effects, Continued}

Again consider a researcher choosing between Horvitz-Thompson and difference-in-means.  Now suppose that for each data realization this researcher is, with probability $1-\alpha$, randomized into instead using a noisy oracle $\hat\kappa_O=\kappa(\theta)+\varepsilon,$ where $\varepsilon\sim N(0,\sigma^2)$.  This results in randomized procedures with induced losses
$\alpha L_{HT}+(1-\alpha)\sigma^2$ and
$\alpha L_{DM}+(1-\alpha)\sigma^2$.
Certainty Independence requires that, for all $\alpha\in(0,1)$ and $\sigma^2$, the preference between
these two randomized procedures be the same as the preference between the pure
estimators $\hat\kappa_{HT}$ and $\hat\kappa_{DM}$. $\triangle$

\medskip

\begin{prop}[Gilboa and Schmeidler, 1989]\label{prop:GS}
Suppose that the conditional preferences $\{\succsim_\theta:\theta\in\Theta\}$ are  represented by \eqref{eq:variational-rep}.  Certainty Independence holds if and only if $c_\theta(\cdot)$ is of the form \eqref{eq: GS indicator} for all $\theta\in\Theta$.
\end{prop}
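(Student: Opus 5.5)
The plan is to prove the two directions separately, fixing $\theta$ throughout and writing $V_\theta(L)=\sup_P\{\int L(\theta,x)\,dP-c_\theta(P)\}$. The easy direction is sufficiency. If $c_\theta$ is the indicator \eqref{eq: GS indicator} of a closed convex $\mathcal{P}_\theta$, then for any constant $r$ and $\alpha\in(0,1)$ we get $V_\theta(\alpha L+(1-\alpha)r)=\alpha V_\theta(L)+(1-\alpha)r$. This holds because every $P$ is a probability measure, so $\int r\,dP=r$. The map $t\mapsto \alpha t+(1-\alpha)r$ is strictly increasing, so the ranking of $L$ and $L'$ under $\succsim_\theta$ is preserved, which is exactly Certainty Independence.

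For necessity, I will first record the general facts about \eqref{eq:variational-rep}. Because $c_\theta\ge 0$ with $c_\theta(Q_\theta)=0$, we have $V_\theta(r)=r$ for constants. The functional $V_\theta$ is also monotone, translation equivariant ($V_\theta(L+r)=V_\theta(L)+r$), and convex. The key step is to show that Certainty Independence forces positive homogeneity, $V_\theta(\alpha L)=\alpha V_\theta(L)$.

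Step one is to define $r_L=V_\theta(L)$. Since $V_\theta(r_L)=r_L=V_\theta(L)$, we have $L\sim_\theta r_L$. Certainty Independence with $r=0$ then gives $\alpha L\sim_\theta \alpha r_L$, so $V_\theta(\alpha L)=V_\theta(\alpha r_L)=\alpha r_L=\alpha V_\theta(L)$ for all $\alpha\in(0,1)$. Inverting the scaling extends this to all $\alpha>0$. The constant $r_L$ must be an admissible constant loss; since $L$ is bounded, $V_\theta(L)$ is finite and the constant $r_L$ lies in $\mathcal{L}$.

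Step two converts positive homogeneity of $V_\theta$ into an indicator penalty. I will pass to the minimal (canonical) penalty, the convex conjugate
\[
c^*_\theta(P)=\sup_{L}\Big\{\int L(\theta,x)\,dP-V_\theta(L)\Big\},
\]
where the supremum runs over bounded $L$. Positive homogeneity of $V_\theta$ makes $c^*_\theta$ positively homogeneous in its own way: replacing $L$ by $\lambda L$ gives $c^*_\theta(P)=\lambda c^*_\theta(P)$ for all $\lambda>0$, so $c^*_\theta$ takes only the values $0$ and $\infty$. The set $\mathcal{P}_\theta=\{P: c^*_\theta(P)=0\}$ is convex, closed (by lower semicontinuity), and contains $Q_\theta$. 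This yields a representation of the form \eqref{eq: GS indicator}.

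The main obstacle is identifying $c_\theta$ with $c^*_\theta$, i.e. establishing uniqueness of the penalty. A given $c_\theta$ in \eqref{eq:variational-rep} need only be convex and lower semicontinuous, and several penalties can represent the same functional. Following \cite{maccheroni2006ambiguity}, the convex, lsc, grounded penalty is unique once the relevant topology is fixed: by Fenchel–Moreau duality on the pairing between bounded measurable functions and (countably additive) probability measures, a convex lsc $c_\theta$ equals the conjugate of its own $V_\theta$. I will invoke this biconjugation, carefully matching the topology used for lower semicontinuity to the dual pairing, in order to conclude $c_\theta=c^*_\theta$. This shows $c_\theta$ is of the form \eqref{eq: GS indicator}. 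If countable additivity causes trouble, I would fall back on stating uniqueness in the sense that $V_\theta$ is unchanged, which is what the representation requires.
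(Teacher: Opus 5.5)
Your argument is correct, and its logic matches what the paper relies on. The paper's proof (Theorem \ref{thm:GS-expanded}(i)) cites Proposition 19 of \cite{maccheroni2006ambiguity}, together with their uniqueness result for the cost. The content of those results is exactly your chain: Certainty Independence gives positive homogeneity of $V_\theta$, so the minimal cost is $\{0,\infty\}$-valued, and uniqueness of the cost then makes $c_\theta$ an indicator.

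The difference is the setting. The paper works with finitely additive $\Delta^F(\mathcal{X})$, weak$^*$-lsc costs and simple acts. It transfers these to bounded losses via Lemmas \ref{lem:utility acts}--\ref{lem: simple-to-bounded}, and returns to countably additive measures only through Axiom \ref{ax:18} and Theorem \ref{thm:D-rep}(iv). You work directly with the countably additive representation \eqref{eq:variational-rep}. Your route is more self-contained and avoids relating a given $c_\theta$ on $\Delta(\mathcal{X})$ to the cost on $\Delta^F(\mathcal{X})$. The paper's route, in exchange, derives the representation from the axioms rather than presupposing it.

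The step you flag as the main obstacle does go through, so you do not need your fallback. That fallback would in any case not prove the statement as written, since the claim concerns $c_\theta$ itself, not just some penalty yielding the same $V_\theta$. The argument runs as follows:
\begin{itemize}
\item Use the dual pair $(ca(\mathcal{X}),B(\mathcal{X}))$ of bounded signed countably additive measures and bounded Borel functions, with topology $\sigma(ca(\mathcal{X}),B(\mathcal{X}))$.
\item Lower semicontinuity of $c_\theta$ under weak convergence (the notion in Theorem \ref{thm:D-rep}(v)) implies lower semicontinuity in this finer topology.
\item $\Delta(\mathcal{X})$ is closed in this topology, since nonnegativity and unit mass survive setwise limits.
\item So extending $c_\theta$ by $+\infty$ off $\Delta(\mathcal{X})$ gives a proper, convex, lsc function on a Hausdorff locally convex space with dual $B(\mathcal{X})$, and its conjugate is $V_\theta$.
\item Fenchel--Moreau then gives $c_\theta=c^*_\theta$ on $\Delta(\mathcal{X})$.
\end{itemize}
Combined with your homogeneity computation, this yields $c_\theta\in\{0,\infty\}$. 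The set $\mathcal{P}_\theta=\{c_\theta=0\}$ is then a closed, convex sublevel set containing $Q_\theta$.
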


While Proposition \ref{prop:GS} characterizes when the researcher's preferences have an (unpenalized) min-max form, the class of possible ambiguity sets $\mathcal{P}_\theta,$ and thus preferences, remains quite large.   \cite{GhirardatoMarinacci2002} show that one can go further and infer the ambiguity sets $\mathcal{P}_\theta$ directly from $\succsim_\theta$.

To state this result, we must introduce some additional notation.  For $(\theta,P)\in\Theta\times\Delta(\mathcal{X})$, define the \emph{subjective expected utility (SEU) preference} $\succsim_{\theta,P}^{SEU}$ by
\[L\succsim_{\theta,P}^{SEU} L' \iff \int L(\theta,x)dP(x)\le\int L'(\theta,x)dP(x).\]
This is the preference of a decision maker who ranks decision rules based on their expected loss under $(\theta,P)$.
Following \citet{GhirardatoMarinacci2002},  $\succsim_\theta$ is \emph{more ambiguity averse} than $\succsim_{\theta,P}^{SEU}$ if for all $L\in\mathcal{L}$ and $r\in\mathbb{R}$,
\[L\succsim_\theta r \Rightarrow L\succsim_{\theta,P}^{SEU} r.\]
In words, any time the preference $\succsim_\theta$ ranks a state-dependent loss $L$ as weakly better than a constant loss $r$, $\succsim_{\theta,P}^{SEU}$ must do so as well.  Equivalently, $\succsim_\theta$ is ``more cautious'' than expected loss under $(\theta,P)$: it has a (weakly) higher bar for preferring uncertain losses to certain ones.
\citet{GhirardatoMarinacci2002} establish that such comparisons identify $\mathcal{P}_\theta$.
\begin{prop}[Ghirardato and Marinacci, 2002]\label{prop:GS 2}
Suppose the conditions of Proposition \ref{prop:GS} hold, and let 
\[
\mathcal{P}_\theta=\{P\in\Delta(\mathcal{X}):\succsim_\theta\text{ is more ambiguity averse than }\succsim_{\theta,P}^{SEU}\}.\]  Then $c_\theta(\cdot)$ is equal to \eqref{eq: GS indicator} with this $\mathcal{P}_\theta$.
\end{prop}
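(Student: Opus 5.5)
By Proposition \ref{prop:GS}, $c_\theta$ is the indicator penalty \eqref{eq: GS indicator} for some closed convex set $\mathcal{P}^*_\theta\ni Q_\theta$, so $V_\theta(L)=\sup_{P\in\mathcal{P}^*_\theta}\int L(\theta,x)\,dP(x)$. The plan is to show that the set $\mathcal{P}_\theta$ defined in the statement equals $\mathcal{P}^*_\theta$. Once that holds, the representation is unchanged when $\mathcal{P}^*_\theta$ is replaced by $\mathcal{P}_\theta$. The key reduction is to compare a general $L$ with a constant $r$. Since $V_\theta(r)=r$, we have $L\succsim_\theta r$ if and only if $\sup_{P'\in\mathcal{P}^*_\theta}\int L\,dP'\le r$. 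Likewise, $L\succsim^{SEU}_{\theta,P} r$ if and only if $\int L\,dP\le r$.

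\emph{Inclusion $\mathcal{P}^*_\theta\subseteq\mathcal{P}_\theta$.} Take $P\in\mathcal{P}^*_\theta$ and suppose $L\succsim_\theta r$. Then
\[
\int L\,dP\le\sup_{P'\in\mathcal{P}^*_\theta}\int L\,dP'\le r,
\]
so $L\succsim^{SEU}_{\theta,P} r$. Hence $\succsim_\theta$ is more ambiguity averse than $\succsim^{SEU}_{\theta,P}$, and $P\in\mathcal{P}_\theta$.

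\emph{Inclusion $\mathcal{P}_\theta\subseteq\mathcal{P}^*_\theta$.} This is the main step, and I would argue it by contradiction using a separating hyperplane. Suppose $P\notin\mathcal{P}^*_\theta$. The set $\mathcal{P}^*_\theta$ is closed and convex in $\Delta(\mathcal{X})$, taken in the weak topology paired with bounded continuous functions, which is the topology under which lower semicontinuity of $c_\theta$ is understood. Hahn--Banach separation in the dual pairing of measures with $C_b(\mathcal{X})$ then yields a bounded continuous $f$ with
\[
\int f\,dP>\sup_{P'\in\mathcal{P}^*_\theta}\int f\,dP'.
\]
Set $L(\theta',x)=f(x)$ for all $\theta'$, which lies in $\mathcal{L}$, and let $r=\sup_{P'\in\mathcal{P}^*_\theta}\int f\,dP'$. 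Then $L\succsim_\theta r$, but $\int L\,dP>r$, so $L\not\succsim^{SEU}_{\theta,P} r$. Thus $P\notin\mathcal{P}_\theta$, proving this inclusion.

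The main obstacle is the topological detail in the separation step. Separation must produce a functional represented by an integrand belonging to $\mathcal{L}$. If closedness is instead understood in a finer topology, such as the one induced by all bounded measurable functions, the same argument applies with $f$ bounded measurable, since every such $f$ still lies in $\mathcal{L}$. Either way, the conclusion is $\mathcal{P}_\theta=\mathcal{P}^*_\theta$, so $c_\theta$ is the indicator \eqref{eq: GS indicator} of the set defined in the statement.
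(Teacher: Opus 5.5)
Your proof is correct, but it takes a different route from the paper. The paper gives no direct argument. It works on the space $\Delta^F(\mathcal{X})$ of finitely additive probabilities with the weak$^*$ topology, where $c_\theta$ is pinned down uniquely as $c_\theta(P)=\sup_{L\in\mathcal{L}_0}(\int L_\theta\,dP-CE_\theta(L))$ (Theorem \ref{thm:D-rep}(ii)). It then cites Proposition 19 and Lemma 32 of \citet{maccheroni2006ambiguity}. The first gives that $c_\theta$ is $\{0,\infty\}$-valued. The second gives that the set of $P$ for which $\succsim_\theta$ is more ambiguity averse than $\succsim^{SEU}_{\theta,P}$ is exactly $\{c_\theta=0\}$. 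Finally, Axiom \ref{ax:18} (equivalent to Axioms \ref{ax:14}--\ref{ax:15}) returns the result to countably additive measures.

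Your two inclusions amount to a self-contained proof of that cited lemma in the indicator case. Your Hahn--Banach step does the work that the uniqueness formula for $c_\theta$ does in the paper.

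\emph{What the paper's route buys.} In the pairing of $ba(\mathcal{X})$ with bounded measurable functions, every separating functional is automatically an element of $\mathcal{L}$. So the topological issue you flag never arises, and the argument covers general variational costs.

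\emph{What your route buys.} It is an elementary argument directly on $\Delta(\mathcal{X})$, matching the main-text statement. However, it rests entirely on lower semicontinuity being taken in a topology whose continuous linear functionals are integrals of functions in $\mathcal{L}$, such as the weak or setwise topology. That reading is consistent with the weak lower semicontinuity in Theorem \ref{thm:D-rep}(v).

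\emph{Why the caveat matters.} Your topological remark is substantive, not cosmetic. On $\mathcal{X}=[0,1]$, the purely atomic probability measures form a convex set that is closed in total variation. Yet its indicator induces the same $V_\theta(L)=\sup_x L(\theta,x)$ as the indicator of all of $\Delta(\mathcal{X})$. So under norm closedness the set defined in the statement would be all of $\Delta(\mathcal{X})$, and the conclusion would fail.

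\emph{A small gap to fill.} You separate in the ambient space of measures, but you only know $\mathcal{P}^*_\theta$ is closed relative to $\Delta(\mathcal{X})$. That suffices: since $P\in\Delta(\mathcal{X})\setminus\mathcal{P}^*_\theta$, relative closedness places $P$ outside the ambient closure of $\mathcal{P}^*_\theta$. That closure is the closed convex set you actually separate $P$ from.
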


In particular, since we have assumed that $Q_\theta \in \mathcal{P}_\theta$, Proposition \ref{prop:GS 2} implies that
$\succsim_\theta$ is more ambiguity averse than $\succsim_{\theta,Q_\theta}^{SEU}$.

\paragraph{Example: Average Treatment Effects, Continued}

Consider a researcher choosing between the difference-in-means estimator $\hat\kappa_{DM}=\bar{Y}_1-\bar{Y}_0$ and the noisy oracle
$\hat\kappa_O=\kappa(\theta)+\varepsilon,$ $\varepsilon\sim N(0,\sigma^2)$.  If the researcher's conditional preferences $\succsim_\theta$ are represented by \eqref{eq:variational-rep} and satisfy Certainty Independence, and thus have a constraint representation with $Q_\theta \in \mathcal{P}_\theta$ for each $\theta \in \Theta$, it follows that $L_O\succsim_\theta L_{DM}$ whenever $\sigma^2\le E_{Q_\theta}[(\hat\kappa_{DM}-\kappa(\theta))^2].$ $\triangle$

\subsection{Multiplier Preferences}

Constraint preferences compute the worst-case expected loss over $\mathcal{P}_\theta$, and thus do not privilege any particular distribution in this set.  Since we have assumed the researcher has a model $\mathcal{Q}$, however, it also seems natural that under parameter value $\theta$ they might find distributions ``close'' to $Q_\theta$ more plausible than those ``far'' from $Q_\theta$, and thus wish to penalize some notion of distance from the base model.  As has previously been observed many times \citep[e.g. by][]{HansenSargent2001, HansenSargent2008},  Kullback-Leibler (KL) divergence is an especially convenient penalty for many purposes, and leads to
\[
V_\Theta(L)=\sup_{\theta\in\Theta}\sup_{P\in\Delta(\mathcal{X})}\left\{\int L(\theta,x)dP(x)-\lambda\cdot \KL(P\|Q_\theta)\right\},
\]
\begin{equation}\label{eq: mult preference}
V_\theta(L)=\sup_{P\in\Delta(\mathcal{X})}\left\{\int L(\theta,x)dP(x)-\lambda\cdot \KL(P\|Q_\theta)\right\},
\end{equation}
where $\lambda>0$ and
\[\KL(P\|Q)=\int\log\left(\frac{dP}{dQ}(x)\right)dP(x)\]
when $P\ll Q$ (i.e. all events which are probability zero under $Q$ are also probability zero under $P$) and $\KL(P\|Q)=\infty$ otherwise.
The parameter $\lambda$ controls the degree of concern for misspecification: as $\lambda\to\infty$, the preference converges to expected loss under $Q_\theta$, while as $\lambda\to 0$, the preference places more weight on worst-case scenarios.  The multiplier risk index $V_\Theta$ is equivalent to the worst-case Bayes risk used by \cite{adusumilli2026} for asymptotic analysis of point estimation and treatment assignment problems.

Results in the literature again characterize this penalty relative to the broad class $c_\theta(\cdot)$.  \citet{strzalecki2011axiomatic} shows a tight connection between the KL divergence and Savage's Sure Thing Principle.  To state the Sure Thing Principle, for all $L,L'\in \mathcal{L}$ define a spliced loss equal to $L$ on an event $\mathcal{E}$ and $L'$ otherwise,
\[
L_\mathcal{E}L'(\theta,X)=\begin{cases}L(\theta,X) & \text{if } X\in\mathcal{E}\\ L'(\theta,X) & \text{if } X\not\in\mathcal{E}\end{cases}.
\]
\begin{axiom}[Sure Thing Principle]\label{ax:STP}
For all $\theta\in\Theta$, $\mathcal{E}\subseteq\mathcal{X}$, and $L,L',M,M'\in\mathcal{L}$, 
\[L_\mathcal{E}M\succsim_\theta L'_\mathcal{E}M \iff L_\mathcal{E}M'\succsim_\theta L'_\mathcal{E}M'.\]
\end{axiom}

The Sure Thing Principle requires that preferences between loss functions that agree on some set ($\mathcal{E}^c$, in this case) depend only on their values elsewhere.  In our setting, this axiom has a natural connection to pretesting.

\paragraph{Example: Average Treatment Effects, Continued}

The experimental design implies $E_{Q_\theta}[D_i]=\frac{1}{2}$, which the researcher could test in order to check implementation fidelity.  Suppose that, conditional on not rejecting the null $H_0:E_P[D_i-\frac{1}{2}]=0$, the researcher will choose between the Horvitz-Thompson estimator $\hat\kappa_{HT}$ and the difference-in-means estimator $\hat\kappa_{DM}$, while if the null is rejected they will use some alternative procedure.  The Sure Thing Principle requires that the ranking between $\hat\kappa_{HT}$ and $\hat\kappa_{DM}$ under $\succsim_\theta$, conditional on the test not rejecting, not vary depending on what procedure is used when the test rejects.
$\triangle$

\medskip
 
The results of \cite{strzalecki2011axiomatic} imply that the Sure Thing Principle holds if and only if $c_\theta(\cdot)$ is KL divergence relative to some centering distribution. Our requirement that $c_\theta(Q_\theta)=0$ further ensures that this centering distribution is $Q_\theta$ (see Appendix \ref{sec: appendix axioms} for an axiomatic justification).  We further impose an axiom, 
 adapted from \cite{lanzani2025supplement}, which enforces that the misspecification-aversion parameter $\lambda$ is constant across $\theta$. 

\begin{axiom}[Uniform Misspecification Concern]
For each $\theta,\theta' \in \Theta$ and each $L,L' \in \mathcal{L}$ with
\begin{equation}\label{eq: pushforward equality}
Q_\theta \circ L_\theta^{-1}=Q_{\theta'} \circ (L_{\theta'}')^{-1}
\end{equation}
where $Q_\theta \circ L_\theta^{-1}$ denotes the distribution of $L(\theta,X)$ when $X\sim Q_\theta,$ it holds that
\[
L \succsim_\theta r \iff L'\succsim_{\theta'} r \quad \forall r \in \mathbb{R}
\]
\end{axiom}

\paragraph{Example: Average Treatment Effects, Continued} Suppose the researcher is considering an estimator $\hat{\kappa}$ whose induced distribution over squared estimation errors $ (\hat{\kappa}-\kappa(\theta))^2$
does not depend on $\theta$. For example, consider the (infeasible) estimator $\hat{\kappa}$ which yields an estimation error of $1/2$ if the sample fraction of treated individuals exceeds $1/2$ and $0$ otherwise.\footnote{Note that in the limit experiment discussed in the next section, there do exist estimators whose model-implied distribution does not depend on the parameter.} Since the law of the sample fraction of treated individuals is $n^{-1}\text{Bin}(n,1/2)$ under $Q_\theta$ for all $\theta$, $\hat{\kappa}$ satisfies the property \eqref{eq: pushforward equality} for all $\theta,\theta'\in\Theta$. $\triangle$

\medskip

\begin{axiom}[Monotone Continuity]\label{ax:6}
For all $\theta \in \Theta$, $L,L' \in \mathcal{L}$, $r \in \mathbb{R}$, and sequences of events\footnote{In this axiom, events are Borel subsets of $\mathcal{X}$.} $\{\mathcal{E}_n\}_{n\geq 1}$ with $\mathcal{X}\supseteq \mathcal{E}_1 \supseteq \mathcal{E}_2 \supseteq \cdots$ and $\cap_{n\geq 1} \mathcal{E}_n=\emptyset$: if $L \succ_\theta L'$, there exists $n^*\geq 1$ such that $r_{\mathcal{E}_{n^*}}L\succ_\theta L'$.
\end{axiom}

Loosely speaking, Monotone Continuity ensures that the representation only cares about countably additive probabilities. When $\mathcal{X}$ is finite, it has no bite.

Finally, say that an event $\mathcal{E} \subseteq \mathcal{X}$ is \textit{nonnull} under $\succsim_\theta$ if there exist $L,L',M \in \mathcal{L}$ such that $L_{\mathcal{E}} M \succ_\theta L'_{\mathcal{E}} M$. To state our representation theorem, we make the mild assumption that $\mathcal{X}$ has at least three disjoint nonnull events for each $\succsim_\theta$, and that there exists $q\in(0,1)$ such that for each $\theta,$ there exists an event $\mathcal{E}_\theta\subseteq \mathcal{X}$ with $Q_\theta(\mathcal{E}_\theta)=q$. 

\begin{prop}[Strzalecki 2011, Cerreia-Vioglio et al. 2025]\label{prop:multiplier}
Suppose that the conditional preferences $\{\succsim_\theta:\theta\in\Theta\}$ are  represented by \eqref{eq:variational-rep}. The Sure Thing Principle, Uniform Misspecification Concern, and Monotone Continuity hold if and only if $c_\theta(P)=\lambda\cdot \KL(P\|Q_\theta)$ for some $\lambda>0$.
\end{prop}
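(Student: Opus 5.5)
The plan has two directions. Necessity is a direct check, and sufficiency rests on Strzalecki's theorem plus two identification arguments: one pinning the centering measure and one making $\lambda$ constant across $\theta$.

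\emph{Necessity.} Suppose $c_\theta(P)=\lambda\KL(P\|Q_\theta)$. By the Donsker--Varadhan variational formula,
\[
V_\theta(L)=\lambda\log\int \exp\bigl(L(\theta,x)/\lambda\bigr)\,dQ_\theta(x).
\]
Each axiom then follows from this expression.
\begin{itemize}
\item Sure Thing Principle: splitting the integral over $\mathcal{E}$ and $\mathcal{E}^c$ gives $\int_{\mathcal{E}}e^{L/\lambda}dQ_\theta+\int_{\mathcal{E}^c}e^{M/\lambda}dQ_\theta$. The ranking of $L_{\mathcal{E}}M$ against $L'_{\mathcal{E}}M$ therefore depends only on the $\mathcal{E}$-part, whatever $M$ is.
\item Uniform Misspecification Concern: $V_\theta(L)$ depends on $L$ only through the law of $L(\theta,X)$ under $Q_\theta$, and $\lambda$ does not depend on $\theta$. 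Hence \eqref{eq: pushforward equality} gives $V_\theta(L)=V_{\theta'}(L')$, and comparisons with any constant $r$ coincide.
\item Monotone Continuity: this follows from dominated convergence, since $Q_\theta(\mathcal{E}_n)\to 0$ and the loss is bounded.
\end{itemize}

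\emph{Sufficiency, step 1 (representation for a fixed $\theta$).} Fix $\theta$. The preference $\succsim_\theta$ is variational by \eqref{eq:variational-rep}, satisfies the Sure Thing Principle and Monotone Continuity, and has at least three disjoint nonnull events. This is exactly the setting of \citet{strzalecki2011axiomatic}'s characterization, translated to our loss domain by the sign flip relative to utilities. That theorem yields a countably additive $\mu_\theta$ and some $\lambda_\theta\in(0,\infty]$ with $c_\theta(P)=\lambda_\theta\KL(P\|\mu_\theta)$. The case $\lambda_\theta=\infty$ means $c_\theta$ is the indicator of $\{\mu_\theta\}$, i.e.\ subjective expected utility. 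Monotone Continuity is what delivers countable additivity of $\mu_\theta$.

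\emph{Sufficiency, step 2 (centering measure).} We have $c_\theta(Q_\theta)=0$. Since KL vanishes only at its center, this forces $\mu_\theta=Q_\theta$ when $\lambda_\theta<\infty$. When $\lambda_\theta=\infty$, the penalty is zero only at $\mu_\theta$, so again $\mu_\theta=Q_\theta$.

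\emph{Sufficiency, step 3 (constant $\lambda$).} This is the main obstacle. The idea is to use Uniform Misspecification Concern on binary losses.
\begin{itemize}
\item For each $\theta$, take the event $\mathcal{E}_\theta$ with $Q_\theta(\mathcal{E}_\theta)=q$, and set $L=1\{x\in\mathcal{E}_\theta\}$, viewed as a function of $(\theta,x)$. The law of $L(\theta,X)$ under $Q_\theta$ is Bernoulli$(q)$ for every $\theta$, so the hypothesis \eqref{eq: pushforward equality} holds for every pair $\theta,\theta'$.
\item The axiom then says that the sets $\{r: L\succsim_\theta r\}$ coincide across $\theta$. Their common boundary is the certainty equivalent $\lambda_\theta\log(1-q+qe^{1/\lambda_\theta})$, or the expected loss $q$ if $\lambda_\theta=\infty$.
\item The map $\lambda\mapsto\lambda\log(1-q+qe^{1/\lambda})$ is strictly decreasing on $(0,\infty)$, with limit $q$ as $\lambda\to\infty$; this follows from strict convexity of the cumulant generating function of a nondegenerate Bernoulli. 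It is therefore injective on $(0,\infty]$, so $\lambda_\theta$ is the same value $\lambda$ for all $\theta$.
\end{itemize}

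\emph{Ruling out $\lambda=\infty$.} The remaining difficulty is to exclude the common value $\lambda=\infty$, i.e.\ pure expected utility under $Q_\theta$. Strzalecki's theorem admits this case, so I would first check whether the intended statement is meant to include it as the degenerate limit. Failing that, I would look for a nondegeneracy condition in the axiomatization in Appendix~\ref{sec: appendix axioms} that excludes it, for example strict ambiguity aversion built into the variational class. Short of such a condition, the conclusion would have to be read as $\lambda\in(0,\infty]$, with the case $\lambda=\infty$ interpreted as $V_\theta(L)=E_{Q_\theta}[L]$.
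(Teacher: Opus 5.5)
Your proposal is correct and follows the paper's own argument (Theorem~\ref{thm:strz-extended} in the appendix). It applies Strzalecki's theorem for each fixed $\theta$. The paper transfers that theorem to bounded losses via its simple-to-bounded and nonnull-event lemmas, a step you only gesture at with the sign flip. It then uses $c_\theta(Q_\theta)=0$ plus uniqueness of the cost function to pin the reference measure at $Q_\theta$. Finally, it applies Uniform Misspecification Concern to Bernoulli$(q)$ indicator losses, with strict monotonicity of their certainty equivalent in $\lambda$, to force a common $\lambda$.

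Your concern about $\lambda=\infty$ is resolved as you suggest. The paper's appendix version concludes only $\lambda_\theta\in(0,\infty]$, so the main-text condition $\lambda>0$ must be read as including the expected-loss case $V_\theta(L)=E_{Q_\theta}[L(\theta,X)]$.
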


\subsection{Constrained Multiplier Preferences}

Multiplier preferences treat all forms of misspecification symmetrically: deviations from the model are penalized solely based on their KL divergence.  
This is natural when the researcher has no view about which aspects of the model are more likely to fail.  In many economic applications, however, researchers appear to have more confidence in some model predictions than in others.  For such researchers, it is natural to combine both elements: a hard constraint ruling out certain forms of misspecification a priori, and a KL penalty governing concern about remaining misspecification within the ambiguity set.

To capture the resulting preferences, we introduce a novel class of \emph{constrained multiplier preferences}, which rule out some DGPs a priori and then continuously penalize deviations within the ambiguity set using Kullback-Leibler divergence:
\begin{equation}\label{eq:CCM-rep} 
V_\Theta(L)=\sup_\theta\sup_{P\in\mathcal{P}_\theta}\left\{\int L(\theta,x)dP(x)-\lambda\cdot \KL(P\|Q_\theta)\right\}
\end{equation}
\begin{equation}\label{eq:CM rep}
V_\theta(L)=\sup_{P\in\mathcal{P}_\theta}\left\{\int L(\theta,x)dP(x)-\lambda\cdot \KL(P\|Q_\theta)\right\}.
\end{equation}
Mathematically, \eqref{eq:CCM-rep} and \eqref{eq:CM rep} correspond to special cases of \eqref{eq:CV-rep} and \eqref{eq:variational-rep}, respectively, which take $c_\theta(\cdot)$ equal to the sum of the convex indicator for the set $\mathcal{P}_\theta$, as in constraint preferences, and the KL divergence from the model distribution $Q_\theta$, as in multiplier preferences.

\paragraph{Example: Average Treatment Effects, Continued}

In our discussion of constraint preferences we considered the case where $\mathcal{P}_\theta$ exactly pins down the 
distribution of the data collected by team 1 while imposing no constraints on the data from team 2.  This represents extreme distrust of the second team, for instance treating it as equally plausible that this team adhered faithfully to the experimental protocol and that they fabricated the data wholesale.  Constrained multiplier preferences accommodate the
intermediate case where the researcher thinks the data from the second team may have a distribution different than that predicted by the model (in principle allowing any $P\in\mathcal{P}_\theta$ with $\KL(P\|Q_\theta)<\infty$) but continuously discounts distributions $P$ which are further from $Q_\theta$ as measured by KL divergence. $\triangle$

\medskip

Our axiomatization of constrained multiplier preferences will build on the constraint and multiplier preference axiomatizations.  Specifically, let $\succsim_{\theta}^C$ denote the preference the researcher would have if they (i) took as given that the true parameter is $\theta$ and (ii) were certain the true distribution lay in $\mathcal{P}_\theta$ but did not privilege any distribution in this set.  Formally, we assume the preferences $\left\{\succsim_{\theta}^C:\theta\in\Theta\right\}$ satisfy the conditions of Propositions \ref{prop:GS} and \ref{prop:GS 2}, and thus are constraint preferences with ambiguity set $\mathcal{P}_\theta$.  

Similarly, let $\succsim_\theta^M$ denote the preference the researcher would have if they (i) took as given that the true parameter is $\theta$ and (ii) were more concerned with performance ``close'' to the model-implied distribution $Q_\theta$ but did not have hard constraints on the class of possible data distributions and (iii) satisfied the Sure Thing Principle.  Formally, we assume the preferences $\left\{\succsim_{\theta}^M:\theta\in\Theta\right\}$ satisfy conditions of Proposition \ref{prop:multiplier}, and thus are multiplier preferences with centering distribution $Q_\theta$.

We consider a researcher who both believes the constraints on misspecification imposed by $\succsim_{\theta}^C$ and is more concerned with DGPs close to $Q_\theta$ as in $\succsim_\theta^M$.  The next axiom captures how their conditional preference given $\theta$, $\succsim_\theta$, combines these two elements.
\begin{axiom}[Indirect Pareto]\label{ax:indirect-pareto}
For all $\theta\in\Theta$, $L\in\mathcal{L}$, and $r\in\mathbb{R}$,
\[
L\succ_\theta r \iff 
\]
\[~ \exists~ L_C,L_M\in\mathcal{L},~r_C,r_M\in\mathbb{R}~ \text{s.t.}~L=L_C+L_M,~ r=r_C+r_M,~ L_C\succ_\theta^C r_C,~ L_M\succ_\theta^M r_M.\]

\end{axiom}

The Indirect Pareto axiom requires that the conditional preference $\succsim_\theta$ strictly prefers the loss function $L$ to a constant loss $r$ if and only if $L$ can be decomposed into two parts, corresponding to the constraint and multiplier preferences respectively, each of which is strictly preferred to its share of the constant under the respective component preference.  We show that this property characterizes the constrained multiplier preference.

\begin{thm}\label{thm:constrained-multiplier}
Suppose each $\succsim_\theta^C$ is a constraint preference with ambiguity set $\mathcal{P}_\theta$ and each $\succsim_\theta^M$ is a multiplier preference with centering measure $Q_\theta$ and parameter $\lambda$.  Under Axiom \ref{ax:indirect-pareto}, the preference $\succsim_\theta$ has the constrained multiplier representation (\ref{eq:CM rep}).
\end{thm}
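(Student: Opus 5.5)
The plan is to work entirely with the certainty-equivalent (upper-contour) description of preferences and to read off the penalty through an infimal-convolution argument on the conjugate side. The component preferences have concrete representations. $\succsim_\theta^C$ is represented by the support-function-type functional $V^C_\theta(L)=\sup_{P\in\mathcal{P}_\theta}\int L\,dP$, whose penalty is the convex indicator $\iota_\theta$ of $\mathcal{P}_\theta$ (Propositions \ref{prop:GS} and \ref{prop:GS 2}). $\succsim_\theta^M$ is represented by $V^M_\theta(L)=\sup_P\{\int L\,dP-\lambda \KL(P\|Q_\theta)\}$ (Proposition \ref{prop:multiplier}). Both are normalized: $V(r)=r$ for constants. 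Since smaller loss is better, $L\succ r$ means $V(L)<r$.

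With these representations, Indirect Pareto says
\[
V_\theta(L)<r \iff \exists\, L_C+L_M=L,\ r_C+r_M=r \text{ with } V^C_\theta(L_C)<r_C,\ V^M_\theta(L_M)<r_M.
\]
The right-hand side holds exactly when $r>\inf_{L_C+L_M=L}\{V^C_\theta(L_C)+V^M_\theta(L_M)\}$. For the forward direction, add the two strict inequalities. For the converse, pick a near-optimal split and distribute the slack $r-V^C-V^M>0$ between $r_C$ and $r_M$. Hence the strict lower contour sets of $\succsim_\theta$ relative to constants are pinned down, and I will define
\[
W_\theta(L):=\inf_{L_C+L_M=L}\{V^C_\theta(L_C)+V^M_\theta(L_M)\},
\]
the infimal convolution of the two functionals. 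One subtlety needs checking: the paper treats $\succsim_\theta$ as a preference whose representation is to be derived. I will assume, as in the paper's framework, that each $L$ has a certainty equivalent (from the variational-representation axioms). Then $V_\theta(L)=\inf\{r: L\succ_\theta r\}$ coincides with $W_\theta(L)$, and $\succsim_\theta$ is represented by $W_\theta$ via $L\succsim L'\iff W_\theta(L)\le W_\theta(L')$.

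It remains to compute $W_\theta$. Both $V^C_\theta$ and $V^M_\theta$ are convex, monotone and translation-equivariant, and each is the Fenchel conjugate (over the pairing of bounded functions with measures) of its penalty: $V^C_\theta=\iota_\theta^*$ and $V^M_\theta=(\lambda\KL(\cdot\|Q_\theta))^*$. The standard identity $(f+g)^*=f^*\,\square\, g^*$ then gives $W_\theta=(\iota_\theta+\lambda\KL(\cdot\|Q_\theta))^*$, which is exactly (\ref{eq:CM rep}). Concretely, the inequality $W_\theta(L)\ge \sup_{P\in\mathcal{P}_\theta}\{\int L\,dP-\lambda\KL(P\|Q_\theta)\}$ is elementary: for any split $L_C+L_M=L$ and $P\in\mathcal{P}_\theta$,
\[
V^C_\theta(L_C)+V^M_\theta(L_M)\ge \int L_C\,dP+\int L_M\,dP-\lambda\KL(P\|Q_\theta).
\]

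The main obstacle is the reverse inequality, i.e. exactness of the inf-convolution formula with no duality gap. Here I would invoke a Fenchel–Rockafellar / Moreau–Rockafellar theorem. This needs continuity of one of the functions at a point of the other's domain on the primal side; equivalently, on the dual side, lower semicontinuity of $\iota_\theta+\lambda\KL$ and a qualification condition. The qualification is satisfied because $V^M_\theta$ is finite, convex and sup-norm continuous (indeed 1-Lipschitz) everywhere on bounded $L$. So on the space of bounded (or simple) functions, with dual the finitely additive measures, $(V^C\square V^M)^*=\iota+\lambda\KL$ and $V^C\square V^M$ is continuous, hence equal to its biconjugate.

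Two details need care. First, the biconjugate lives over finitely additive measures. Because $\KL(\cdot\|Q_\theta)$ is finite only on $P\ll Q_\theta$ (countably additive), the supremum restricts to countably additive $P$, matching (\ref{eq:CM rep}). Second, I would carry out the argument first for simple $L$, or for finite $\mathcal{X}$ where duality is finite-dimensional. I would then extend to bounded $L$ using sup-norm Lipschitz continuity of both sides.
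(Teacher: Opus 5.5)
Your proposal is correct and is essentially the paper's proof of Theorem \ref{thm: IP-var}. Indirect Pareto is read as $\text{epi}_S(V_\theta)=\text{epi}_S(V^C_\theta)+\text{epi}_S(V^M_\theta)=\text{epi}_S(V^C_\theta\Box V^M_\theta)$ (Lemma \ref{lem:epi-sum}), and the inf-convolution is then identified with the conjugate of $\chi_{\mathcal{P}_\theta}+\lambda\KL(\cdot\|Q_\theta)$ on $ba(\mathcal{X})$. That identification combines the always-valid identity $(V^C_\theta\Box V^M_\theta)^*=(V^C_\theta)^*+(V^M_\theta)^*$ with properness (your inequality at $P=Q_\theta\in\mathcal{P}_\theta$), $1$-Lipschitz continuity, and Fenchel--Moreau, after which finiteness of $\KL$ forces countable additivity. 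The differences are cosmetic: the paper derives the representation from the basic axioms via $\inf\{r:L\succ_\theta r\}$ rather than assuming certainty equivalents, and it works directly with bounded functions, so no detour through simple functions is needed.
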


\subsection{Dual Representation of Constrained Multiplier Preferences}

One practically appealing feature of multiplier preferences is that they imply highly tractable dual representations.
\begin{prop}[Dupis and Ellis, 1997]\label{thm:multiplier-dual}
For $V_\theta$ as in \ref{eq: mult preference},
\begin{equation}\label{eq:multiplier-dual}
V_\theta(L)=\lambda\cdot\log\left(E_{Q_\theta}\left[\exp\left(\frac{1}{\lambda}L(\theta,X)\right)\right]\right).
\end{equation}
\end{prop}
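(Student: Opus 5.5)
This is the Donsker--Varadhan (Gibbs) variational formula, and I will prove it by exhibiting the maximizer explicitly. Fix $\theta$ and write $f(x)=L(\theta,x)/\lambda$, which is bounded because $L\in\mathcal{L}$. Set $Z=E_{Q_\theta}[\exp(f(X))]$; this lies in $(0,\infty)$ by boundedness. Define the exponentially tilted measure $P^*$ by
\[
\frac{dP^*}{dQ_\theta}(x)=\frac{\exp(f(x))}{Z}.
\]
The claim reduces to showing that $\int L\,dP-\lambda\KL(P\|Q_\theta)\le\lambda\log Z$ for every $P\in\Delta(\mathcal{X})$, with equality at $P=P^*$.

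First I will dispose of the trivial case. If $P\not\ll Q_\theta$, then $\KL(P\|Q_\theta)=\infty$. Since $\int L\,dP$ is finite, the objective equals $-\infty$, and such $P$ do not affect the supremum.

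Next, for $P\ll Q_\theta$, note that $Q_\theta$ and $P^*$ are mutually absolutely continuous, because the density $\exp(f)/Z$ is strictly positive and bounded. Hence $P\ll P^*$. The chain rule for Radon--Nikodym derivatives gives
\[
\log\frac{dP}{dQ_\theta}=\log\frac{dP}{dP^*}+f-\log Z \qquad P\text{-a.s.}
\]
Integrating against $P$ yields
\[
\KL(P\|Q_\theta)=\KL(P\|P^*)+\frac{1}{\lambda}\int L\,dP-\log Z.
\]
This holds as an identity in $[0,\infty]$: the last two terms are finite, so both KL terms are finite or infinite together. Rearranging,
\[
\int L\,dP-\lambda\KL(P\|Q_\theta)=\lambda\log Z-\lambda\KL(P\|P^*).
\]

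Finally, by Gibbs' inequality, $\KL(P\|P^*)\ge0$, with equality if and only if $P=P^*$. So the objective is at most $\lambda\log Z$ for every $P$, and it equals $\lambda\log Z$ at $P=P^*$, which is a probability measure. Taking the supremum gives \eqref{eq:multiplier-dual}.

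The only step requiring care is the KL decomposition. One has to justify splitting the integral when $\KL$ may be infinite, and the boundedness of $L$ is exactly what makes this legitimate. Gibbs' inequality itself follows from Jensen's inequality applied to the convex function $t\log t$.
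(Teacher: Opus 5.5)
Your argument is correct and follows essentially the paper's own route. The paper cites Dupuis and Ellis for this proposition, but its proof of Proposition~\ref{prop:constrained-dual init} uses exactly your exponential tilt $dP_0/dQ_\theta\propto\exp(L(\theta,\cdot)/\lambda)$ and your identity $\KL(P\|Q_\theta)-\frac{1}{\lambda}\int L\,dP=\KL(P\|P_0)-\log E_{Q_\theta}[\exp(L(\theta,X)/\lambda)]$; with no moment constraints this gives the statement because $\inf_P\KL(P\|P_0)=0$, and your handling of $P\not\ll Q_\theta$ and of the identity in $(-\infty,\infty]$, under the boundedness of $L\in\mathcal{L}$ that the statement assumes, tightens the paper's ``direct calculation.''
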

This result shows that the ranking over losses implied by multiplier preferences is precisely the same as the ranking one would obtain by assuming the researcher's model $Q_\theta$  is correct but using the exponentiated loss $\exp\left(\frac{1}{\lambda}L\right)$.  Consequently, one can compute optimal decision rules under multiplier preferences by applying standard  arguments for the correctly-specified case to the transformed loss function.

Parallel convex duality arguments also imply a tractable dual for constrained multiplier preferences in some contexts.  In particular, we focus on the case where the ambiguity set $\mathcal{P}_\theta$ can be written as the set of distributions satisfying a collection of moment equalities. For ambiguity sets of this form, convex duality arguments similar to those in the generalized empirical likelihood literature \citep{NeweySmith2004,Kitamura2009} imply a finite-dimensional dual for the constrained multiplier problem, even when $L$ may be unbounded.

\begin{prop}\label{prop:constrained-dual init}
Suppose $\mathcal{P}_\theta=\{P\in\Delta(\mathcal{X}):E_P[\varphi(\theta,X)]=0\}$ for a vector of moment functions $\varphi:\Theta\times \mathcal{X} \to\mathbb{R}^b$ satisfying $E_{Q_\theta}[\varphi(\theta,X)]=0$.  Then for $V_\theta$ as in (\ref{eq:CM rep}), where $L$ may be unbounded but $V_\theta(L)<\infty$,
\begin{equation}\label{eq:constrained-dual}
V_\theta(L)=\inf_{\beta\in\mathbb{R}^b}\lambda\cdot\log\left(E_{Q_\theta}\left[\exp\left(\frac{1}{\lambda}L(\theta,X)-\beta'\varphi(\theta,X)\right)\right]\right).
\end{equation}
\end{prop}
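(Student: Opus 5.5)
Write $f=L(\theta,\cdot)$, $\varphi=\varphi(\theta,\cdot)$ and $Q=Q_\theta$. The plan is a Lagrangian duality argument in three steps: weak duality, a change-of-measure identity for fixed $\beta$, and strong duality via a dual attainment argument of the generalized empirical likelihood type.

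\emph{Weak duality.} For any $\beta\in\mathbb{R}^b$ and any $P\in\mathcal{P}_\theta$ with $\KL(P\|Q)<\infty$, the constraint gives $E_P[\beta'\varphi]=0$. Hence
\[
E_P[f]-\lambda\KL(P\|Q)=E_P[f-\lambda\beta'\varphi]-\lambda\KL(P\|Q).
\]
By the Donsker--Varadhan variational inequality (the inequality half of Proposition \ref{thm:multiplier-dual}, applied to the loss $f-\lambda\beta'\varphi$), the right-hand side is at most
\[
\lambda\log E_Q\bigl[\exp(f/\lambda-\beta'\varphi)\bigr].
\]
Taking the sup over $P$ and the inf over $\beta$ gives $V_\theta(L)\le$ the right side of (\ref{eq:constrained-dual}). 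Unbounded $f$ needs a little care, which I would handle by truncation and monotone convergence.

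\emph{Strong duality.} Define
\[
M(\beta)=E_Q\bigl[\exp(f/\lambda-\beta'\varphi)\bigr]\in(0,\infty].
\]
This function is convex in $\beta$, lower semicontinuous by Fatou, and $\log M$ is convex as well. The reverse inequality comes from two cases.

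\begin{itemize}
\item \emph{Interior minimizer.} Suppose the infimum is attained at some $\beta^*$ in the interior of $\{M<\infty\}$. Then the first-order condition $E_Q[\varphi e^{f/\lambda-\beta^{*\prime}\varphi}]=0$ holds. The tilted law $dP^*/dQ\propto e^{f/\lambda-\beta^{*\prime}\varphi}$ therefore lies in $\mathcal{P}_\theta$. A direct computation shows that this $P^*$ attains equality in Donsker--Varadhan, since $\KL(P^*\|Q)=E_{P^*}[f/\lambda-\beta^{*\prime}\varphi]-\log M(\beta^*)$ and the $\varphi$ term vanishes. So $V_\theta(L)\ge\lambda\log M(\beta^*)$.

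\item \emph{General case.} When the infimum is not attained or sits on the boundary, I would use an approximation or perturbation argument. One route adds a small penalty $\epsilon\|\beta\|^2$, or truncates $\varphi$ and $f$ so that $M$ is finite everywhere and coercive. The coercivity requires $0$ to lie in the relative interior of the convex hull of the support of $\varphi$ under $Q$. One can reduce to this case by projecting out directions in which $\beta'\varphi$ is $Q$-a.s. sign-definite; since $E_Q\varphi=0$, such directions have $\beta'\varphi=0$ a.s. and are therefore irrelevant. Finally, pass to the limit.
\end{itemize}

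An alternative to the case split is to invoke a general Fenchel/Rockafellar duality theorem for the convex program of maximizing $E_P f-\lambda\KL(P\|Q)$ subject to the linear constraint $E_P\varphi=0$. The constraint qualification holds because $0=E_Q\varphi$ lies in the relative interior of the set of attainable moments $\{E_P\varphi:\KL(P\|Q)<\infty\}$: mixing $Q$ with small tilts achieves any nearby moment in that subspace. Under this qualification, the value equals $\inf_\beta \sup_P\{E_P[f-\lambda\beta'\varphi]-\lambda\KL(P\|Q)\}$, and the inner sup is exactly $\lambda\log M(\beta)$ by Proposition \ref{thm:multiplier-dual}, which extends to unbounded losses by monotone convergence.

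\emph{Main obstacle.} The hard step is justifying strong duality, i.e. the interchange of sup and inf, with unbounded $f$ and $\varphi$ and without assuming that $M$ is finite in a neighborhood. I would verify the relative-interior constraint qualification carefully, using $V_\theta(L)<\infty$ to guarantee that the primal is proper. If the qualification proves awkward to check directly, I would fall back on the truncation/limit route from the general case above.
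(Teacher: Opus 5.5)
Your proposal is correct in outline, but it is organized differently from the paper's proof.

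\textbf{The paper's route.} The paper never forms a Lagrangian for the original problem. It absorbs the loss into the reference measure via $dP_0/dQ_\theta\propto\exp(L/\lambda)$. It then uses the identity $\KL(P\|Q_\theta)-\frac{1}{\lambda}E_P[L]=\KL(P\|P_0)-\log E_{Q_\theta}[e^{L/\lambda}]$ to write $V_\theta(L)$ as $\lambda\log E_{Q_\theta}[e^{L/\lambda}]$ minus $\lambda$ times the I-projection value $\inf_{E_P[\varphi]=0}\KL(P\|P_0)$. Finally it cites the exponential-tilting duality $\inf_{E_P[\varphi]=0}\KL(P\|P_0)=\sup_\beta\{-\log E_{P_0}[e^{\beta'\varphi}]\}$ from Kitamura (2009). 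The strong-duality step you single out as the main obstacle is therefore outsourced to the literature.

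\textbf{Your route.} You treat loss and constraint jointly: Donsker--Varadhan gives weak duality, and the Fenchel argument gives strong duality. That argument goes through as you sketch it.
\begin{itemize}
\item The perturbation function $v(a)=\sup_{E_P[\varphi]=a}\{E_P[f]-\lambda\KL(P\|Q)\}$ is concave.
\item Its domain lies in the subspace $S$ spanned by the $Q$-support of $\varphi$ and contains a neighborhood of $0$ in $S$. This is because mixtures $(1-t)Q+tQ(\cdot\mid A)$ have bounded densities, and $\{u'\varphi>0\}$ has positive probability for every unit $u\in S$ since $E_Q[u'\varphi]=0$.
\item Concavity together with $v(0)=V_\theta(L)<\infty$ rules out $v=+\infty$ anywhere on that neighborhood.
\item A supergradient of $v$ at $0$ then gives no duality gap, and even a minimizing $\beta$.
\end{itemize}

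\textbf{What each route buys.} Your route is self-contained and works under the hypothesis actually stated. The paper's change of measure needs $E_{Q_\theta}[e^{L/\lambda}]<\infty$, which is strictly stronger than $V_\theta(L)<\infty$. For example, take $Q_\theta=N(0,1)$, $\varphi(x)=x^2-1$, $L(x)=x^2$ and $\lambda\le 2$. Every feasible $P$ has $E_P[L]=1$, so $V_\theta(L)=1$, which the dual matches at $\beta=1/\lambda$. Yet $E_{Q_\theta}[e^{L/\lambda}]=\infty$, so $P_0$ does not exist. The paper's route buys brevity.

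\textbf{If you use the case-split version instead}, two details need repair.
\begin{itemize}
\item In the interior case, the tilted law $P^*$ can have $\KL(P^*\|Q)=\infty$, namely when $E_Q[f\,e^{f/\lambda-\beta^{*\prime}\varphi}]=\infty$. Equality is then only approached, not attained, and the approximation must preserve the moment constraint.
\item Truncation should restrict $P$ to be supported on $\{|f|\le n,\ \|\varphi\|\le n\}$, not truncate $\varphi$ itself, since truncating $\varphi$ changes the constraint set.
\end{itemize}
The Fenchel route avoids both issues.
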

The representation (\ref{eq:constrained-dual}) introduces Lagrange multipliers $\beta$ that enforce the moment constraints.  The infimum over $\beta$ is the dual to the original problem of maximizing over $\mathcal{P}_\theta$.  While this optimization problem does not in general have a closed form solution, it is convex and sufficiently tractable to enable both computation and theoretical analysis.

\paragraph{Example: Average Treatment Effects, Continued}

Consider the two-team ATE setting introduced above.  The ambiguity set $\mathcal{P}_\theta$ restricts the joint distribution of the team-1 observations.  Fully expressing these constraints using moment equalities would require per-observation restrictions (e.g. $E_P[(Y_i-\mu_0)(1-D_i)1\{C_i=1\}]=0$ for each $i$) and cross-observation restrictions (e.g. zero covariance between $(D_i-\frac{1}{2})1\{C_i=1\}$ and $(D_j-\frac{1}{2})1\{C_j=1\}$ for $i\ne j$), with the total number of moment equalities growing with the sample size.

To obtain a more parsimonious (but also more permissive) set of moment conditions, we may instead constrain a sample average moment function. Let
\begin{equation}\label{eq: psi def}
\psi(\theta,X_i)=
\begin{pmatrix}(Y_i-\mu_0)(1-D_i)1\{C_i=1\}\\
(Y_i-\mu_1)D_i1\{C_i=1\}\\
(D_i-\tfrac{1}{2})1\{C_i=1\}\\
1\{C_i=1\}-\pi_1 \end{pmatrix}
\end{equation}
and define $\varphi(\theta,X)=\frac{1}{n}\sum_{i=1}^n\psi(\theta,X_i)$. The first two components capture the conditional mean outcomes among treated and control units in team 1, while the third and fourth elements restrict treatment and team assignment, respectively.  
The constraint $E_P[\varphi(\theta,X)]=0$ requires that the moment conditions hold on average across units, but does not constrain the marginal for a given unit and thus leads to a weakly higher worst-case risk.  As we discuss in the next section we can strengthen the constraint by including higher moments of the sample average.

With the moment-equality constraint set $\mathcal{P}_\theta=\{P:E_P[\varphi(\theta,X)]=0\}$, Proposition \ref{prop:constrained-dual init} applies, and the researcher's preference $\succsim_\Theta$ ranks estimators $\hat\kappa_\delta$ based on
\[
\sup_{\theta\in\Theta}\inf_{\beta\in\mathbb{R}^4}\lambda\cdot\log\left(E_{Q_{n,\theta}}\left[\exp\left(\frac{1}{\lambda}(\hat\kappa_\delta-\kappa(\theta))^2-\beta'\varphi(\theta,X)\right)\right]\right),\]
where we have used that the ranking is unchanged by monotone transformations of $V_\Theta$.
$\triangle$

Before moving on, we briefly note that the analytic tractability of constrained multiplier preferences extends to settings with moment inequality, rather than equality, constraints.  Specifically, if $\mathcal{P}_\theta=\{P:E_P[\varphi(\theta,X)]\le 0\}$, one can extend Proposition \ref{prop:constrained-dual init} to show that
\[
V_\theta(L)=\inf_{\beta\in\mathbb{R}^b}\sup_{\eta\in\mathbb{R}^b_+}\lambda\cdot\log\left(E_{Q_\theta}\left[\exp\left(\frac{1}{\lambda}L(\theta,X)-\beta'(\varphi(\theta,X)+\eta)\right)\right]\right)=\]
\[ \inf_{\beta\in\mathbb{R}^b_+}\lambda\cdot\log\left(E_{Q_\theta}\left[\exp\left(\frac{1}{\lambda}L(\theta,X)-\beta'\varphi(\theta,X)\right)\right]\right).\]

\section{Asymptotic Risk Bounds}\label{sec: LAM}

While we derived constrained multiplier preferences in a finite-sample setting, in most interesting economic models finite-sample performance is analytically intractable.  Following a foundational analytic approach for models without misspecification concern, we thus study local asymptotic performance instead.  This section develops a local asymptotic minimax (LAM) theorem for constrained multiplier preferences, paralleling a classical result for the correctly specified case.  We begin by introducing the asymptotic framework and reviewing the classical LAM theorem as a benchmark, then state our LAM result and discuss its implications.

\subsection{The Classical LAM Theorem}

Consider a sequence of estimation problems indexed by the sample size $n$.  For sample size $n$ the researcher observes data $X^n=(X_1,\ldots,X_n)$, which under their model is drawn from a distribution in $\mathcal{Q}_n=\{Q_{n,\theta}:\theta\in\Theta\}$, where $\Theta\subseteq\mathbb{R}^p$.\footnote{While we focus on parametric models to tractably accommodate time-series applications, for i.i.d. data we expect that our analysis below, like that of \cite{adusumilli2026}, will extend to semiparametric models.}  For instance, if the researcher's model implies the data are i.i.d. then $Q_{n,\theta}=\times_{i=1}^n Q_{1,\theta}$ for $Q_{1,\theta}$ the distribution of a single observation, though the framework allows for more general dependence structures.  We study performance when the true parameter is local to a base value $\theta_0$, taking the form $\theta_{n,h}=\theta_0+h/\sqrt{n}$ for a local parameter $h\in H=\mathbb{R}^p$, and we shorthand $Q_{n,\theta_{n,h}}=Q_{n,h}$.  The loss function in the sample of size $n$ is
\[
l_n(a,\theta)=\ell(\sqrt{n}(a-\kappa(\theta))),
\]
where $\kappa:\Theta\to\mathbb{R}^d$ is the target parameter and $\ell:\mathbb{R}^d\to\mathbb{R}$ is a fixed loss function.  We are interested in the worst-case asymptotic performance of decision rule sequences $\delta_n$ over the local parameter space $H$.

The notion of local asymptotic normality, a foundational tool for characterizing asymptotic performance, formalizes a sense in which regular statistical models are asymptotically equivalent to Gaussian location experiments.

\begin{defn}\label{defn: LAN}
The sequence of models $\{Q_{n,\theta}:\theta\in\Theta\}$ is \emph{locally asymptotically normal} (LAN) at $\theta_0$ with scaling coefficient $\sqrt{n}$ if there exists a sequence of random vectors $S_{n,\theta_0}$ and a nonsingular matrix $I_0$ such that for every sequence $h_n\to h$,
\[
\log\left(\frac{dQ_{n,\theta_0+h_n/\sqrt{n}}}{dQ_{n,\theta_0}}\right)=h'S_{n,\theta_0}-\frac{1}{2}h'I_0h+o_{Q_{n,\theta_0}}(1),
\]
where $S_{n,\theta_0}\stackrel[d]{}\to N(0,I_0)$ under $Q_{n,\theta_0}$.
\end{defn}

The LAN condition says that, in a local neighborhood of $\theta_0$, the log-likelihood ratio is asymptotically quadratic in the local parameter $h$ with Hessian $-I_0$, and thus resembles the log-likelihood of a normal model with Fisher information (and inverse variance) $I_0$.  For i.i.d. data, LAN follows from standard differentiability conditions on the single-observation density \citep[see e.g.][Chapter 7]{van_der_vaart_asymptotic_1998}.

For LAN models, the classical local asymptotic minimax theorem gives a lower bound on the worst-case local asymptotic risk of any sequence of estimators.

\begin{prop}[Local Asymptotic Minimax; \citealt{van_der_vaart_asymptotic_1998}, Theorems 8.11 and 9.4]\label{prop: LAM}
Suppose $\{Q_{n,\theta}:\theta\in\Theta\}$ is LAN at $\theta_0\in\mathrm{int}(\Theta)$ with nonsingular Fisher information $I_0$.  Then for any symmetric, quasi-convex loss $\ell:\mathbb{R}^d\to\mathbb{R}_+$ minimized at zero, any $\kappa:\Theta\to\mathbb{R}^d$ with derivative $K=\frac{\partial}{\partial\theta'}\kappa(\theta_0)\in\mathbb{R}^{d\times p}$, and any sequence of decision rules $\delta_n$,
\[
\sup_I\liminf_{n\to\infty}\sup_{h\in I}E_{Q_{n,h}}\left[\ell\left(\sqrt{n}(\delta_n(X^n)-\kappa(\theta_{n,h}))\right)\right]\ge\int\ell\,dN(0,KI_0^{-1}K'),
\]
where the supremum ranges over finite subsets $I$ of $H=\mathbb{R}^p$.
\end{prop}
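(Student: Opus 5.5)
This is the classical Hájek–Le Cam local asymptotic minimax theorem, so the plan is the standard limit-experiment argument of \citet[Theorems 8.11 and 9.4]{van_der_vaart_asymptotic_1998}.

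\textbf{Step 1: the limit experiment.} Under LAN (Definition \ref{defn: LAN}), the local experiments $\{Q_{n,h}:h\in H\}$ converge to the Gaussian shift $\{N(h,I_0^{-1}):h\in\mathbb{R}^p\}$. Concretely, for any finite set $h_1,\ldots,h_k$, Le Cam's third lemma together with the quadratic expansion gives joint weak convergence of the log-likelihood ratios $\log(dQ_{n,h_j}/dQ_{n,0})$ to those of the Gaussian shift, evaluated at $Z=I_0^{-1}\Delta$ with $\Delta\sim N(0,I_0)$.

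\textbf{Step 2: matching every estimator sequence with a limit rule.} Without loss of generality the left-hand side is finite. Truncate $\ell$ to $\ell\wedge M$ and let $M\uparrow\infty$ at the end by monotone convergence. Fix a countable dense set of $h$. By tightness of $\sqrt{n}(\delta_n-\kappa(\theta_0))$, compactifying $\bar{\mathbb{R}}^d$ if needed, and a diagonal argument, pass to a subsequence along which $\sqrt{n}(\delta_n-\kappa(\theta_0))\rightsquigarrow L_h$ under every $Q_{n,h}$ in that set. The asymptotic representation theorem (vdV Theorem 7.10) then gives a randomized rule $T(Z,U)$ in the limit experiment with $T\sim L_h$ when $Z\sim N(h,I_0^{-1})$. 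Since $\kappa$ is differentiable, $\sqrt{n}(\kappa(\theta_{n,h})-\kappa(\theta_0))\to Kh$. By the portmanteau theorem with lower semicontinuous $\ell\wedge M$, for each finite $I$,
\[
\liminf_n\sup_{h\in I}E_{Q_{n,h}}\ell\bigl(\sqrt n(\delta_n-\kappa(\theta_{n,h}))\bigr)\ge\sup_{h\in I}E_h\,(\ell\wedge M)(T-Kh).
\]
Taking the supremum over finite $I$ yields $\sup_{h}E_h(\ell\wedge M)(T-Kh)$, using the countable dense set together with lower semicontinuity in $h$.

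\textbf{Step 3: minimax risk in the Gaussian shift.} It remains to show $\sup_h E_h\ell(T-Kh)\ge\int\ell\,dN(0,KI_0^{-1}K')$ for every randomized $T$. Bound the supremum below by the Bayes risk under the prior $h\sim N(0,\Lambda^{-1})$ with $\Lambda\to0$. The posterior of $Kh$ is Gaussian with covariance $K(I_0+\Lambda)^{-1}K'$. By Anderson's lemma, which requires $\ell$ symmetric and quasi-convex, the posterior expected loss is minimized at the posterior mean, so the Bayes risk is at least $\int(\ell\wedge M)\,dN(0,K(I_0+\Lambda)^{-1}K')$. Letting $\Lambda\to0$, by continuity of Gaussian measures and boundedness of $\ell\wedge M$, and then $M\to\infty$, gives the bound.

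The main obstacle is Step 2: handling mass escaping to infinity in the representation and justifying the passage from a countable dense set of $h$ to the full supremum. The cleanest route is to apply the Bayes argument of Step 3 directly to the finite-$n$ risks integrated over a grid-discretized Gaussian prior, as vdV do in the proof of Theorem 8.11. That way only finitely many $h$ are ever needed, and no uniformity in $h$ is required.
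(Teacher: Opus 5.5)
The paper gives no proof of its own here; it cites van der Vaart. Your architecture (Gaussian-shift limit, asymptotic representation, Gaussian prior plus Anderson's lemma) is that argument, and Step 3 is fine for $\mathbb{R}^d$-valued randomized $T$. The problem is Step 2, which uses two properties the statement does not give you: lower semicontinuity of $\ell$, and uniform tightness of $t_n=\sqrt n(\delta_n-\kappa(\theta_0))$. These are precisely the extra assumptions under which van der Vaart writes out his proof of Theorem 8.11.

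Lower semicontinuity is easy to restore. The envelope $\underline\ell(x)=\liminf_{y\to x}\ell(y)\le\ell$ is still symmetric and quasi-convex, with sublevel sets $\bigcap_{c'>c}\overline{\{\ell\le c'\}}$. It also has the same integral against $N(0,KI_0^{-1}K')$, because $\{\underline\ell<\ell\}\subseteq\bigcup_{c\in\mathbb{Q}}\bigl(\overline{\{\ell\le c\}}\setminus\{\ell\le c\}\bigr)$ meets every ray from the origin in a countable set.

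Tightness is the genuine gap, and \emph{compactifying if needed} does not close it. Step 3 needs a real-valued $T$. On $[-\infty,\infty]^d$ the portmanteau step only licenses an lsc extension of $(\ell\wedge M)(\cdot-Kh)$, the largest being $\liminf_{s\to t}(\ell\wedge M)(s-Kh)$, and this can be far below $\ell$ when sublevel sets are unbounded. For example, $\ell(x)=\mathbf{1}\{|x_1-x_2|>1\}$ on $\mathbb{R}^2$ is symmetric and quasi-convex. For any $\sqrt n$-consistent $\hat\kappa_n$, the rule $\delta_n=\hat\kappa_n+n^{-1/2}(n,n)'$ has exactly the risk of $\hat\kappa_n$, yet $t_n\to(+\infty,+\infty)$, where the extension vanishes for every $h$. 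Your Step 2 bound is then $0$. Compactification does work if every $\{\ell\le c\}$ with $c<\sup\ell$ is bounded, since mass at infinity then costs $\sup\ell$ and can be moved to $0$.

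Separately, your Step 2 inequality is proved only along the extracted subsequence, so it bounds $\liminf_s$ rather than $\liminf_n$. You must either start from an arbitrary subsequence and diagonalize over an increasing sequence of finite sets $I$, or use a finitely supported prior whose Bayes risk does not depend on $T$.

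Your fallback of integrating the finite-$n$ risks against a grid prior fixes the subsequence issue but not tightness. You still need a way to let $n\to\infty$ that never takes a limit of $\delta_n$ itself. The missing idea is Le Cam's. For a finite $I$ with prior $\pi>0$ on $I$, set $\bar Q_n=\sum_{h\in I}\pi(h)Q_{n,h}$, $w_{n,h}=\pi(h)\,dQ_{n,h}/d\bar Q_n$ and $K_nh=\sqrt n(\kappa(\theta_{n,h})-\kappa(\theta_0))$. Then bound the integrand pointwise: $\sum_{h\in I}w_{n,h}(\ell\wedge M)(t_n-K_nh)\ge\Phi(w_n)$, where $\Phi(w)=\inf_{a\in\mathbb{R}^d}\sum_{h\in I}w_h(\ell_\epsilon\wedge M)(a-Kh)$ and $\ell_\epsilon(x)=\min_{\|y\|\le\epsilon}\underline\ell(x+y)$. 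This $\ell_\epsilon$ is again bowl-shaped, increases to $\underline\ell$ as $\epsilon\downarrow 0$, and absorbs $K_nh-Kh$ once $\max_{h\in I}\|K_nh-Kh\|\le\epsilon$.

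Because $\Phi$ is concave and bounded on the simplex, it is continuous on the relative interior, where the LAN limit $w$ of $w_n$ lies almost surely. Hence $\liminf_n\sum_{h\in I}\pi(h)E_{Q_{n,h}}\ell(t_n-K_nh)\ge E[\Phi(w)]$, which is the Bayes risk of $\pi$ in $N(h,I_0^{-1})$ for loss $\ell_\epsilon\wedge M$ with actions in $\mathbb{R}^d$. No tightness is used.

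You then still need to show that such finite-prior Bayes risks approach the $N(0,\Lambda^{-1})$-prior bound before invoking Anderson's lemma. One way is to couple $h\sim N(0,\Lambda^{-1})$ with its nearest point $g$ on a fine truncated grid, use that $N(g,I_0^{-1})$ and $N(h,I_0^{-1})$ are close in total variation, and absorb $K(g-h)$ into a slightly larger $\epsilon$.
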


The right-hand side is the minimax risk in the Gaussian limit experiment where one observes $X\sim N(h,I_0^{-1})$ and wishes to estimate $Kh$ under loss $\ell$.  The bound says no sequence of estimators can achieve lower worst-case local asymptotic risk than the finite-sample risk in this Gaussian problem.  Under squared error loss, the bound reduces to $KI_0^{-1}K'$, achieved by any efficient estimator, including the maximum likelihood estimator.  The classical LAM theorem thus provides a theoretical foundation for familiar efficiency claims.

\paragraph{Example: Average Treatment Effects, Continued}

Under the model described in Section \ref{sec: loss axioms}, the family $\{Q_{n,\theta}\}$ is LAN at any interior $\theta_0=(\mu_0,\mu_1)$ with Fisher information $I_0=\frac{1}{2}\operatorname{diag}(1/\sigma_0^2,1/\sigma_1^2)$ for $\sigma_d^2=\mu_d(1-\mu_d)$.  Since $K=(-1,1)$, the classical LAM bound under squared error loss is $KI_0^{-1}K'=2(\sigma_0^2+\sigma_1^2)$, achieved by the difference-in-means estimator. $\triangle$

\subsection{The Constrained Multiplier LAM Theorem}

We now develop an analogous result for constrained multiplier preferences.  The key additional ingredient is a moment function $\psi:\Theta\times\mathcal{X}_0\to\mathbb{R}^k$ satisfying $E_{Q_{n,\theta}}[\psi(\theta,X_i)]=0$ for all $n$ and $\theta$, where $\mathcal{X}_0$ denotes the sample space for a single observation $X_i$.  This moment function encodes the researcher's beliefs about which aspects of the model are correctly specified.  We assume that the researcher believes misspecification does not affect the first $M$ moments of the scaled sample average of $\psi$ evaluated at the true value of $\theta$.  Including higher moments is potentially important, since the forms of misspecification allowed by constrained multiplier preferences include ones which change the dependence structure of the data.  Thus, even if misspecification does not affect the marginal distribution of each observation, the distribution of sample averages could still change.  Constraining the moments of the sample average up to order $M$ restricts such possibilities.  We emphasize, however, that the moment conditions $\psi$ need not point-identify $\theta$ (e.g. the dimension of $\psi$ might be lower than that of $\theta$) so our results apply even in settings where there are too few moment functions to allow moment-based estimation.

Formally, let
\[
Y_{n,h}=\frac{1}{\sqrt{n}}\sum_{i=1}^n\psi(\theta_{n,h},X_i),
\]
and for $m=(m_1,\ldots,m_k)\in\mathbb{N}_0^k$ with $1\le \sum_{s=1}^k m_s\le M$, define
$\tilde W^m_{n,h}=\prod_{s=1}^k Y_{n,h,s}^{m_s}.$
Let $\tilde W_{M,n,h}$ collect $\tilde W^m_{n,h}$ over all such $m$, and define
\[
W_{M,n,h}=\tilde W_{M,n,h}-E_{Q_{n,h}}[\tilde W_{M,n,h}].
\]
The constraint set
\[
\mathcal{P}^M_{n,h}=\left\{P\in\Delta(\mathcal{X}_0^n):E_P[W_{M,n,h}]=0\right\}
\]
consists of all data distributions that preserve the first $M$ moments of $Y_{n,h}$.  This takes the form assumed in Proposition \ref{prop:constrained-dual init} with $\varphi(\theta_{n,h},X^n)=W_{M,n,h}$, so the duality result applies.

\paragraph{Example: Average Treatment Effects, Continued}
Returning to the two-team ATE example, recall the per-observation moment function \eqref{eq: psi def}.
The constraint set $\mathcal{P}^M_{n,h}$ enforces that misspecification not affect the first $M$ moments of $Y_{n,h}=\frac{1}{\sqrt{n}}\sum_i\psi(\theta_{n,h},X_i)\in\mathbb{R}^4$.  For $M=1$, as discussed above this constrains the mean of $Y_{n,h}$, requiring that $\sum_{i=1}^nE_P[\psi(\theta_{n,h},X_i)]=0$. For $M\ge 2$, the constraints additionally involve moments of $Y_{n,h}$ that depend on pairwise covariances of $\psi(\theta,X_i)$ across observations, restricting the impact of misspecification of the cross-observation dependence within team 1's data. $\triangle$

To derive our LAM theorem we impose two assumptions.  The first collects regularity conditions on the model and the moment function.

\begin{assu}\label{assu: moments}
The family $\{Q_{n,\theta}:\theta\in\Theta\}$ is stationary for each $n$ with
\[
E_{Q_{n,\theta}}[\psi(\theta,X_i)]=0 \quad \text{for all }\theta\in\Theta,
\qquad
E_{Q_{n,\theta_0}}\left[\frac{\partial}{\partial\theta'}\psi(\theta_0,X_i)\right]=\Psi.
\]
Moreover,  $\{Q_{n,\theta}:\theta\in\Theta\}$ has densities $\{q_{n,\theta}:\theta\in\Theta\}$, where for
\[
S_n=\frac{1}{\sqrt{n}}\frac{\partial}{\partial\theta}\log q_{n,\theta_0}(X^n),
\qquad
Y_{n,0}=\frac{1}{\sqrt{n}}\sum_{i=1}^n\psi(\theta_0,X_i),
\]
\[
\left(
\begin{array}{c}
S_n\\
Y_{n,0}
\end{array}
\right)
\stackrel[d]{Q_{n,\theta_0}}{\to}
N\left(
\left(
\begin{array}{c}
0\\
0
\end{array}
\right),
\Sigma\right),~~\Sigma=\left(
\begin{array}{cc}
I_0 & -\Psi'\\
-\Psi & \Omega
\end{array}
\right)
\]
where $\Omega$ has full rank.  Finally, $\psi(\theta,X_i)$ and $q_{n,\theta}(X^n)$ are differentiable at $\theta_0$ for all $X_i$ and $X^n$, and there exist an open neighborhood $\mathcal{N}$ of $\theta_0$ and a constant $C<\infty$ such that
\[
E_{Q_{n,\theta_0}}\left[\sup_{\theta\in\mathcal{N}}\left(\left\|\frac{\partial}{\partial\theta}\psi(\theta,X_i)\right\|\right)\right]\le C
\]
for all $n$.
\end{assu}

These conditions are standard: stationarity, a central limit theorem for the scaled moments, and smoothness of the moment function.
We also restrict the loss.
\begin{assu}\label{assu: loss}
The loss function in the sample of size $n$ is equal to $l_n(a,\theta)=\ell(\sqrt{n}(a-\kappa(\theta)))$, where $\kappa:\Theta\to\mathbb{R}^d$ is continuously differentiable at $\theta_0$ and $\ell:\mathbb{R}^d\to[0,\infty)$ is convex, finite-valued, and satisfies $\ell(u)\to\infty$ as $\|u\|\to\infty$.
\end{assu}

Relative to the classical LAM theorem, Assumption \ref{assu: loss} strengthens quasi-convexity to full convexity, but allows asymmetric loss.  Convexity ensures that randomized estimators cannot improve on deterministic ones, which simplifies our asymptotic results.  Note that while the results of Section \ref{sec: loss axioms} consider the case of bounded loss to simplify the axiomatic derivations, here we apply the resulting preferences to unbounded loss functions.

By Proposition \ref{prop:constrained-dual init}, in the sample of size ${n}$ the worst-case constrained multiplier risk of an estimator $\delta_n$ over a set of local parameters $I \subset H$ is
\[
\sup_{h\in I}\sup_{P\in\mathcal{P}^M_{n,h}}\left\{\mathbb{E}_P\left[l_n(\delta_n(X^n),\theta_{n,h})\right]-\lambda \KL(P\|Q_{n,h})\right\}=
\]
\[
\sup_{h\in I}\inf_\beta\lambda\cdot\log\left(E_{Q_{n,h}}\left[\ell^*\left(\sqrt{n}\left(\delta_n(X^n)-\kappa(\theta_{n,h})\right)\right)\exp\left(\beta'W_{M,n,h}\right)\right]\right)
\]
for $\ell^*(u)=\exp\left(\frac{1}{\lambda}\ell(u)\right)$.  
Under the conditions above, if we consider the liminf as $n\to\infty$ and take the worst case over $I$, 
we obtain the following local asymptotic minimax bound.

\begin{thm}\label{thm: Constrained LAM}
Assume the model $\{Q_{n,\theta}:\theta\in\Theta\}$ is locally asymptotically normal at $\theta_0$ with scaling coefficient $\sqrt{n}$ and nonsingular $I_0$, that $\theta_0\in\mathrm{int}(\Theta)$, and that for all $h\in\mathbb{R}^p$, the moments of $Y_{n,h}=\frac{1}{\sqrt{n}}\sum_i\psi(\theta_{n,h},X_i)$ up to order $M$ converge to the corresponding moments of $\xi\sim N(0,\Omega)$,
\[
E_{Q_{n,h}}\left[(v'Y_{n,h})^m\right]\to E\left[(v'\xi)^m\right]\quad\text{for all }v\in\mathbb{R}^k\text{ and all }m\in\{0,\ldots,M\}.
\]
Then under Assumptions \ref{assu: moments} and \ref{assu: loss}, for any sequence of decision rules $\delta_n$,
\[
\sup_I\liminf_{n\to\infty}\sup_{h\in I}\sup_{P\in\mathcal{P}^M_{n,h}}\left\{\mathbb{E}_P\left[l_n(\delta_n(X^n),\theta_{n,h})\right]-\lambda \KL(P\|Q_{n,h})\right\}\ge
\]
\[
\inf_\delta\sup_{h\in\mathbb{R}^p}\inf_{\beta\in\mathbb{R}^{b}}\lambda\cdot\log\left(E_{Q_h}\left[\ell^*\left(\delta(X,Y)-Kh\right)\exp\left(\beta'W_{M,h}\right)\right]\right),
\]
where the supremum on the left-hand side ranges over finite subsets $I\subset\mathbb{R}^p$, $K=\frac{\partial}{\partial\theta'}\kappa(\theta_0)\in\mathbb{R}^{d\times p}$, $Q_h$ is given by
\begin{equation}\label{eq: Limit Experiment}
\left(\begin{array}{c}
X\\
Y
\end{array}\right)\sim N\left(\left(\begin{array}{c}
h\\
-\Psi h
\end{array}\right),\left(\begin{array}{cc}
I_0^{-1} & -I_0^{-1}\Psi'\\
-\Psi I_0^{-1} & \Omega
\end{array}\right)\right),
\end{equation}
and $W_{M,h}$ collects the centered moments of $Y_h=Y+\Psi h$ up to order $M$,
\begin{equation}\label{eq: W_{M,h}}
W_{M,h}=\left(\prod_{s=1}^k Y_{h,s}^{m_s}-E\left[\prod_{s=1}^k\xi_s^{m_s}\right]:m\in\mathbb{N}_0^k,\ 1\le \sum_{s=1}^k m_s\le M\right).
\end{equation}
\end{thm}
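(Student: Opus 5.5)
The plan follows the classical proof of Proposition \ref{prop: LAM}: pass to the limit experiment via Le Cam's third lemma along subsequences, and then use a minimax/Bayes argument in the limit experiment. Throughout we work with the dual form from Proposition \ref{prop:constrained-dual init}. Because $\log$ is monotone, it suffices to lower bound
\[
\liminf_n \sup_{h\in I}\inf_\beta E_{Q_{n,h}}\left[\ell^*(\sqrt n(\delta_n-\kappa(\theta_{n,h})))\exp(\beta'W_{M,n,h})\right].
\]

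\textbf{Step 1 (reduction to a fixed $\beta$-free lower bound per $h$).} For fixed $h$ and $\beta$, the inner expectation is an upper bound for the infimum. So we need a lower bound that holds uniformly in $\beta$. Rather than fixing $\beta$, we use the primal side: for each $h$, the infimum over $\beta$ equals $\sup_{P\in\mathcal P^M_{n,h}}\{\cdot\}$. Restricting this supremum to a convenient subclass of $P$ gives a valid lower bound. Take $P$ of the form $dP=g(Y_{n,h})\,dQ_{n,h}$ with $g$ bounded, continuous and positive, normalized, and satisfying $E_{Q_{n,h}}[g(Y_{n,h})W_{M,n,h}]=0$. For such $P$, the value is $E_{Q_{n,h}}[\ell(\cdot)g]-\lambda E[g\log g]$. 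By the moment-convergence hypothesis and a small correction, these tilts can be chosen to converge to a tilt $g_\infty(Y_h)$ satisfying the limit constraint $E[g_\infty(\xi)W_M]=0$, where the adjustment uses the full rank of $\Omega$ and a perturbation in finitely many polynomial directions. In the limit experiment, the same duality (Proposition \ref{prop:constrained-dual init}) says that the supremum over such tilts, together with general tilts depending on $(X,Y)$, recovers the right-hand side. We approximate the limit optimizer by bounded continuous tilts of $(X,Y)$ and truncate $\ell$ at a level $B$ to keep everything bounded.

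\textbf{Step 2 (limit along subsequences).} Under LAN and Assumption \ref{assu: moments}, the joint law of $(S_n,Y_{n,0})$ under $Q_{n,\theta_0}$ is asymptotically $N(0,\Sigma)$. Le Cam's third lemma then gives the joint law of $(\sqrt n(\delta_n-\kappa(\theta_0)),S_n,Y_{n,0})$ under $Q_{n,h}$. The derivative bound on $\psi$ gives $Y_{n,h}=Y_{n,0}+\Psi h+o_p(1)$, and $\kappa$ is differentiable. If $\sqrt n(\delta_n-\kappa(\theta_0))$ is not tight, the loss diverges: since $\ell\to\infty$, the bound is trivial. Otherwise, by Prohorov, extract a subsequence along which it converges jointly. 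The asymptotic representation theorem (van der Vaart Thm 7.10) then yields a randomized rule $\delta(X,Y,U)$ in the experiment \eqref{eq: Limit Experiment}, with $X=I_0^{-1}S$, and
\[
\sqrt n(\delta_n-\kappa(\theta_{n,h}))\rightsquigarrow \delta(X,Y,U)-Kh \quad\text{under }Q_{n,h},
\]
jointly with $Y_{n,h}\rightsquigarrow Y_h$. Portmanteau for the bounded continuous truncated integrands then gives, for each $h\in I$ and each chosen tilt, a liminf of at least the limit-experiment value. By convexity of $\ell$ and Jensen, conditioning on $(X,Y)$ removes $U$, since the penalized objective is convex in the action given the tilt; this lets us replace $\delta$ by a nonrandomized rule.

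\textbf{Step 3 (finite $I$ to all $h$).} The previous steps give, for each finite $I$, the bound $\liminf\ge \inf_\delta\sup_{h\in I}V^{\lim}_h(\delta)$, after taking $B\to\infty$ and a sup over tilts. It remains to show
\[
\sup_I\inf_\delta\sup_{h\in I}V_h=\inf_\delta\sup_{h}V_h.
\]
As in van der Vaart Thm 8.11, this follows from a compactness argument. Rules in the limit experiment form a compact set under weak convergence of the joint law, and by Step 2 each map $\delta\mapsto V_h(\delta)$ is lower semicontinuous there. A finite-intersection argument then exchanges the sup over finite $I$ with the inf. I expect this exchange, together with ensuring that constrained tilts in the finite-$n$ problem approximate the limit constraints (Step 1), to be the main obstacle. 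The constraint must hold exactly at each $n$, which requires a small finite-dimensional correction. This correction is feasible because the moment matrix of $W_M$ under $N(0,\Omega)$ is nonsingular, and by the moment-convergence hypothesis the finite-$n$ matrix is eventually nonsingular too.
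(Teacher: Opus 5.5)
Your overall architecture is a legitimate alternative: lower-bound the primal by restricting to explicit feasible tilts, then exchange $\sup_I$ and $\inf_\delta$ by compactness. However, the step you flag as the main obstacle does not go through as you describe it. For fixed $h$ you need densities $g_n$ that are \emph{exactly} feasible at every $n$ (nonnegative, $E_{Q_{n,h}}[g_n]=1$, $E_{Q_{n,h}}[g_nW_{M,n,h}]=0$) and whose objective converges to that of a limit tilt.

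An additive correction $g+\gamma_n'W_{M,n,h}$ in polynomial directions does not give this. It is unbounded and, depending on the sign pattern of $\gamma_n$ (which you do not control), negative for large $\|Y_{n,h}\|$ however small $\gamma_n$ is. So it is not a density. Solving for $\gamma_n$ also involves $E_{Q_{n,h}}[W_{M,n,h}W_{M,n,h}']$, i.e.\ moments of $Y_{n,h}$ up to order $2M$. And $\gamma_n\to0$ requires $E_{Q_{n,h}}[g\,W_{M,n,h}]\to E_{Q_h}[g\,W_{M,h}]$, i.e.\ uniform integrability of the degree-$M$ monomials.

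The hypothesis only gives convergence of $E[(v'Y_{n,h})^m]$ for $m\le M$. That says nothing at order $2M$ and gives no uniform integrability of the top-degree terms when $M$ is odd. For $M=1$ it is vacuous, since $E_{Q_{n,h}}[Y_{n,h}]=0$ by Assumption~\ref{assu: moments}. So neither ``the finite-$n$ matrix is eventually nonsingular'' nor the vanishing of the correction follows.

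A repair is possible. Use tilts $g(I_0^{-1}S_n,Y_{n,h})$ that are bounded away from zero and satisfy $g\equiv1$ outside a compact set of $y$-values. Tilts of $Y_{n,h}$ alone cannot reproduce the bound, since the adversary must distort the law of $X$ given $Y$. Correct with bounded continuous directions supported on a compact $y$-set. Then every quantity is the expectation of a bounded continuous function of $(T_{n,h},S_n,Y_{n,h})$ and converges by weak convergence alone. You would still owe a density lemma in the limit experiment: such exactly-feasible, compactly perturbed tilts must approach the dual value on the right-hand side. Your proposal only asserts this.

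The paper avoids constructing feasible distributions altogether. It uses finite-$n$ duality (Proposition~\ref{prop:constrained-dual init}) to write the left side as an $\inf_\beta$. Lemma~\ref{lem: fixed-h-liminf} then shows that $\liminf_s\inf_\beta E_{Q_{s,h}}[\ell^*(T_{s,h})\exp(\beta'W_{M,s,h})]\ge\inf_\beta E_{Q_h}[\ell^*(T_h)\exp(\beta'W_{M,h})]$, using only convergence in distribution of $(T_{s,h},W_{M,s,h})$. Bounded $\beta_s$ are handled by truncation and the portmanteau theorem. The case $\|\beta_s\|\to\infty$ is impossible: $u'W_{M,h}$ is a nonzero centered polynomial in a full-support Gaussian, so it exceeds some $\eta>0$ with positive probability, and the objective grows like $e^{\eta\|\beta_s\|}$. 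Moment convergence enters only through the centering constants.

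Your Step 3 is also heavier than needed. The representation step gives a single limit rule $T=t(X,Y,U)$ valid for every $h$ along one subsequence. The paper therefore bounds $\sup_{h\in\mathbb{Q}^p}$ for that fixed $T$ and extends to $\mathbb{R}^p$ via lower semicontinuity of $h\mapsto\inf_\beta G(h,\beta)$. This uses Berge's theorem, with $h$-uniform compact sublevel sets from Lemma~\ref{lem: beta-objective}. Your finite-intersection argument can be made to work, since sublevel sets are tight (the risk at $h=0$ is at least $E_{Q_0}[\ell(T)]$). But it needs lower semicontinuity of $\delta\mapsto V_h(\delta)$ in the limit experiment, which in your route rests on the same unproven density claim.
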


Like the classical LAM theorem, Theorem \ref{thm: Constrained LAM} shows that the local asymptotic risk, now considering the constrained multiplier risk, is lower bounded by the risk in a Gaussian limit experiment.  The limit experiment now involves two statistics: $X$, which plays the same role as in the standard LAM theorem and corresponds to the asymptotic analog of the maximum likelihood estimator, and $Y$, which is the limit of the scaled sample average $Y_{n,0}$ and captures the information in the moment conditions.  The finite-sample constraint set $\mathcal{P}^M_{n,h}$, which requires that the first $M$ moments of $Y_{n,h}\approx Y_{n,0}+\Psi h$ be preserved, maps to the constraint $E_P[W_{M,h}]=0$ in the limit experiment.

Our use of a fixed, sample-size independent $\lambda$ in Theorem \ref{thm: Constrained LAM} is important for the result.\footnote{Formally, our use of a fixed $\lambda$ corresponds to using a single, sample-size independent constrained multiplier preference with state space $H\times\mathcal{X}$ where $\mathcal{X}=\mathcal{X}_0^\infty$.  Our asymptotic results then concern choice from the sequence of loss menus $\mathcal{L}_n=\{l_n(\delta_n(X),\theta_{n,h}):h\in H,X\in\mathcal{X},\delta_n\in\mathcal{D}_n\}$ where the rules in $\mathcal{D}_n$ are constrained to depend on $X=(X_1,X_2,...)$ only through the first $n$ observations.  Work in progress by Ricky Li explores the axiomatic implications of the other aspects of the LAM criterion, in particular the focus on the liminf and the use of finite index sets $I$.}  Note, in particular, that if $Q_{n,\theta}$ and $P$ are both product measures (corresponding to i.i.d. sampling), then $\KL(P\|Q_{n,\theta})$ is equal to $n$ times the KL divergence for a single draw from each distribution.  Thus, to attain a fixed KL divergence as $n\to\infty,$ the per-observation KL divergence must shrink at a $\frac{1}{n}$ rate.  For many decision rules $\delta_n$ the worst-case distributions $P$ which attain or approximate the constrained multiplier risk will not be i.i.d., but as the above calculation suggests they will nevertheless often correspond to ``local'' misspecification.  Thus, while the preferences we consider do not directly impose an assumption of local misspecification, a focus on the locally misspecified cases emerges naturally.

\paragraph{Example: Average Treatment Effects, Continued}

In the limit experiment \eqref{eq: Limit Experiment}, $X\in\mathbb{R}^2$ corresponds to the maximum likelihood estimator for $(\mu_0,\mu_1)$ pooling data from both teams, while $Y\in\mathbb{R}^4$ captures the team-1 specific moment information.  The matrices governing the limit experiment are
\[I_0^{-1}=2\operatorname{diag}(\sigma_0^2,\sigma_1^2),\quad \Psi=\begin{pmatrix}-\pi_1/2 & 0\\ 0 & -\pi_1/2 \\ 0 & 0\\ 0 & 0\end{pmatrix},\quad \Omega=\operatorname{diag}\left(\frac{\pi_1\sigma_0^2}{2},\,\frac{\pi_1\sigma_1^2}{2},\,\frac{\pi_1}{4},\pi_1(1-\pi_1)\right).\]
Note that the third and fourth rows of $\Psi$ are zero, so the corresponding elements of  $Y$ are uninformative on their own, but 
narrow the (asymptotic analog of the) ambiguity set. $\triangle$

When $M=0$ there are no moment constraints, so the statistic $Y$ plays no role: the adversary is free to distort its distribution.  If $\ell$ is additionally symmetric around zero, the result reduces to the classical LAM theorem applied to the loss $\ell^*$.  Specifically, in this case $\ell^*(u)=\exp(\ell(u)/\lambda)$ is symmetric and (quasi-)convex, so the classical LAM theorem (Proposition \ref{prop: LAM}) applies for each value of $\lambda$.
More generally, the choice of $M$ reflects what restrictions the researcher places on the forms of misspecification they consider.  When the researcher is uncertain what value of $M$ to impose, the $M\to\infty$ limit provides a natural benchmark.

\begin{cor}\label{cor: infinite case}
If the assumptions of Theorem \ref{thm: Constrained LAM} hold for all $M\in\mathbb{N}$, then
\[
\inf_M\sup_I\liminf_{n\to\infty}\sup_{h\in I}\sup_{P\in\mathcal{P}^M_{n,h}}\left\{\mathbb{E}_P\left[l_n(\delta_n(X^n),\theta_{n,h})\right]-\lambda \KL(P\|Q_{n,h})\right\}\ge
\]
\[
\inf_\delta\sup_{h\in\mathbb{R}^p}\lambda\cdot E_{Q_{Y,h}}\left[\log\left(E_{Q_{X|Y,h}}\left[\ell^*(\delta(X,Y)-Kh)\mid Y\right]\right)\right]
\]
where $Q_h$ is as in \eqref{eq: Limit Experiment}.
\end{cor}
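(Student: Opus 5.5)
Corollary~\ref{cor: infinite case} follows from Theorem~\ref{thm: Constrained LAM} once we identify the $M\to\infty$ limit of the limit-experiment bound. For each $M$, the theorem gives
\[
\sup_I\liminf_{n}\cdots\ \ge\ B_M:=\inf_\delta\sup_{h}\inf_{\beta}\lambda\log E_{Q_h}\!\left[\ell^*(\delta(X,Y)-Kh)\exp(\beta'W_{M,h})\right].
\]
Hence the left side of the corollary is at least $\inf_M B_M$. It therefore suffices to show that, for every $M$ and every fixed rule $\delta$,
\[
\sup_h\inf_\beta \lambda\log E_{Q_h}[\cdots]\ \ge\ \sup_h \lambda E_{Q_{Y,h}}\!\left[\log E_{Q_{X|Y,h}}[\ell^*(\delta(X,Y)-Kh)\mid Y]\right].
\]
Taking $\inf_\delta$ on both sides then gives $B_M\ge$ the right side for all $M$, which yields the claim. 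Note that we do not even need monotonicity in $M$.

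To prove this pointwise inequality, fix $h$ and $\delta$ and return to the primal form. By Proposition~\ref{prop:constrained-dual init}, applied in the limit experiment,
\[
\inf_\beta\lambda\log E_{Q_h}[\ell^*e^{\beta'W_{M,h}}]=\sup_{P:E_P[W_{M,h}]=0}\left\{E_P[\ell(\delta-Kh)]-\lambda\KL(P\|Q_h)\right\}.
\]
If the supremum is infinite, the inequality is trivial. We then exhibit a feasible $P$ that keeps the $Q_h$-marginal of $Y$ and tilts only the conditional law of $X$ given $Y$:
\[
\frac{dP}{dQ_h}(x,y)=\frac{\ell^*(\delta(x,y)-Kh)}{E_{Q_{X|Y,h}}[\ell^*(\delta(X,Y)-Kh)\mid Y=y]}.
\]
Because $P$ has the same $Y$-marginal as $Q_h$, and $W_{M,h}$ is a function of $Y$ alone (namely $Y_h=Y+\Psi h$) centered at its $Q_h$-mean, the constraint $E_P[W_{M,h}]=0$ holds for every $M$ simultaneously. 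The chain rule for KL gives $\KL(P\|Q_h)=E_{Q_{Y,h}}[\KL(P_{X|Y}\|Q_{X|Y,h})]$. Proposition~\ref{thm:multiplier-dual} then attains its supremum conditionally at exactly this Gibbs tilt, so
\[
E_P[\ell]-\lambda\KL(P\|Q_h)=\lambda E_{Q_{Y,h}}\!\left[\log E_{Q_{X|Y,h}}[\ell^*\mid Y]\right].
\]
This gives the required lower bound for each $h$; taking $\sup_h$ and then $\inf_\delta$ finishes the argument.

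The main obstacle is technical: making the tilted $P$ legitimate and the identity exact. Three things need checking. First, the conditional normalizer may be infinite on a set of $y$; in that case we truncate $\ell$ at level $c$, apply the argument to $\min(\ell,c)$, and use monotone convergence in $c$, which also covers the case where the right side is $+\infty$. Second, the log terms must be integrable, or else the identity is read as an inequality. Third, we need $\ell^*\ge1$ so that the logarithm is well defined, which holds since $\ell\ge0$. These are routine given $\ell\ge 0$ and finite-valued, as in Assumption~\ref{assu: loss}.

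The intuition is that larger $M$ pins down more moments of $Y$, and in the limit the only robust constraint is on the law of $Y$. Our argument does not need to show that $B_M$ converges to this bound; it only needs the bound to lie below $B_M$ for every $M$.
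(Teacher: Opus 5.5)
Your proof is correct and follows essentially the same route as the paper's: dualize back to the primal, restrict the adversary to distributions that keep the $Q_h$-marginal of $Y$ (so every moment constraint $E_P[W_{M,h}]=0$ holds automatically), and evaluate the objective via the KL chain rule and the conditional multiplier dual (Proposition \ref{thm:multiplier-dual}). The only difference is that you exhibit the marginal-preserving Gibbs tilt directly as a feasible $P$ for each $M$, for which weak duality suffices. The paper instead first passes to $\mathcal{P}^\infty_h=\bigcap_M\mathcal{P}^M_h$ and uses moment-determinacy of the normal to characterize it; that delivers equality for the $M=\infty$ problem but is not needed for the inequality. Your truncation argument also handles infinite conditional normalizers, which the paper does not address explicitly.
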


Taking $M\to\infty$ requires that misspecification not distort any moment of $Y_{n,h}$.  Since the normal distribution is determined by its moments, this forces $Y$ to remain normally distributed in the limit.  The risk bound then simplifies because the KL divergence decomposes: the adversary can distort the conditional distribution of $X$ given $Y$, but not the marginal of $Y$.  The resulting bound has an intuitive form, with an inner conditional expectation of the exponentiated loss over $X\mid Y$, inside a logarithm, integrated over the marginal of $Y$.

Theorem \ref{thm: Constrained LAM} and Corollary \ref{cor: infinite case} provide lower bounds on the risk achievable by any sequence of estimators under constrained multiplier preferences.  In the next section, we characterize estimators which attain these bounds.

\section{Optimal Decision Rules}\label{sec: optimal rules}

To derive optimal estimators, we begin by exploiting the invariance structure of the limit experiment to show that we can limit attention to (asymptotically) equivariant decision rules.  We then characterize optimal rules for important special cases and show that their finite-sample analogs, based on plugging the MLE and moments into the limit-experiment optimal rule, are asymptotically optimal.  Practically, these results show how researchers who believe certain implications of their model (captured by $\psi$ and $M$) and have a given degree of misspecification concern (captured by $\lambda$) can optimally combine likelihood and moment information.  One implication of our results is that the widespread current practice (documented and discussed in e.g. \citealt{andrews2025purpose}) of focusing solely on moment-based estimation in settings where misspecification is a concern is unlikely to be optimal outside cases of extreme concern (i.e. $\lambda\to 0$).

\subsection{The Hunt-Stein Theorem}\label{sec: hunt-stein}

The limit experiment \eqref{eq: Limit Experiment} exhibits an important invariance structure.  Consider the group $G=\mathbb{R}^p$ acting on the sample space by $g\circ(X,Y)=(X+g,Y-\Psi g)$, on the action space $\mathbb{R}^d$ by $g\circ a=a+Kg$, and on the parameter space by $g\circ h=h+g$. These transformations leave the loss unchanged: $\ell((g\circ a)-K(g\circ h))=\ell(a-Kh)$ for all $a,h,g$.  Following \citet{LehmannCasella1998}, we say that a decision rule $\delta$ is \emph{equivariant} if $\delta(X+g,Y-\Psi g)=\delta(X,Y)+Kg$ for all $g\in\mathbb{R}^p$.  Let $\mathcal{D}^E$ denote the class of equivariant decision rules in the limit experiment.

While the constrained multiplier objective is nonstandard, we extend the classical Hunt-Stein theorem to show that for minimax purposes, it is without loss to limit attention to equivariant decision rules.

\begin{thm}\label{thm: equivariant}
Under Assumption \ref{assu: loss}, for any estimator $\delta$ in the limit problem there exists an equivariant estimator $\delta^E\in\mathcal{D}^E$ such that
\begin{align*}
\sup_{h\in\mathbb{R}^p,\, P\in\mathcal{P}_{h}^M}\left\{E_{P}\left[\ell\left(\delta\left(X,Y\right)-Kh\right)\right]-\lambda \KL(P\|Q_h)\right\}\ge\\
\sup_{h\in\mathbb{R}^p,\, P\in\mathcal{P}_{h}^M}\left\{E_{P}\left[\ell\left(\delta^E\left(X,Y\right)-Kh\right)\right]-\lambda \KL(P\|Q_h)\right\}.
\end{align*}
Consequently, to derive a minimax decision rule it is without loss to restrict attention to the class of equivariant rules.
\end{thm}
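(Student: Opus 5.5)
The plan is to follow the classical Hunt–Stein argument by averaging over the group: use approximate Haar (uniform) measures on growing cubes, combined with a compactness argument. The first step is to show that the criterion is itself invariant. For $g\in\mathbb{R}^p$, write $T_g(X,Y)=(X+g,Y-\Psi g)$. Under $Q_h$, the law of $T_g(X,Y)$ is $Q_{h+g}$. Moreover $W_{M,h}(X,Y)=W_{M,h+g}(T_g(X,Y))$, because $Y_h=Y+\Psi h$ is unchanged by the shift. Hence the pushforward $P\mapsto P\circ T_g^{-1}$ maps $\mathcal{P}^M_h$ bijectively onto $\mathcal{P}^M_{h+g}$, and KL divergence is invariant under this bijective measurable map. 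Define the shifted rule $\delta_g(x,y)=\delta(T_g(x,y))-Kg$. Then
\[
R(\delta_g,h)=R(\delta,h+g), \qquad R(\delta,h):=\sup_{P\in\mathcal{P}^M_h}\left\{E_P[\ell(\delta-Kh)]-\lambda \KL(P\|Q_h)\right\},
\]
and therefore $\sup_h R(\delta_g,h)=\sup_h R(\delta,h)$. We may assume the left-hand side is finite, since otherwise there is nothing to prove.

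Next, I would average. Let $U_n$ be the uniform distribution on $[-n,n]^p$ and set $\bar\delta_n(x,y)=\int \delta_g(x,y)\,dU_n(g)$, so that we average actions rather than randomize. Fix $h$ and $P\in\mathcal{P}^M_h$. Convexity of $\ell$ together with Jensen's inequality gives
\[
E_P[\ell(\bar\delta_n-Kh)]-\lambda \KL\le\int\left\{E_P[\ell(\delta_g-Kh)]-\lambda \KL(P\|Q_h)\right\}dU_n(g)\le \int R(\delta,h+g)\,dU_n(g)\le \sup_{h'}R(\delta,h').
\]
So every $\bar\delta_n$ is weakly better than $\delta$, uniformly in $h$. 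The rule $\bar\delta_n$ is only approximately equivariant: $\bar\delta_n(T_{g'}(x,y))-Kg'$ equals an average over the shifted cube, which differs from $U_n$ in total variation by $O(\|g'\|/n)$.

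The third step, which I expect to be the main obstacle, is to pass to a limit. I need to extract a subsequence along which $\bar\delta_n$ converges pointwise, or at least along a suitable net, to some $\delta^E$. Equivariance of the limit would then follow from the vanishing total variation error. I also need lower semicontinuity of the risk so that the bound survives the limit. The difficulty is that $\delta$ takes values in the unbounded set $\mathbb{R}^d$, so the averages need not be bounded. To control this I would first reduce to rules with $E_{Q_h}[\ell^*(\delta-Kh)]<\infty$, which follows from finiteness of the risk by taking $P=Q_h$, and use coercivity of $\ell$ to get tightness. An alternative that avoids pointwise convergence is to work in the weak topology of randomized rules (Markov kernels), where compactness is standard (Le Cam) and tightness again comes from $\ell\to\infty$. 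Convexity would then let me replace the randomized limit by its mean without increasing the risk. Lower semicontinuity follows from the dual form: by Proposition~\ref{prop:constrained-dual init}, for each fixed $h$ and $\beta$ the map $\delta\mapsto\lambda\log E_{Q_h}[\ell^*\exp(\beta'W_{M,h})]$ is lower semicontinuous by Fatou's lemma. A supremum of infima is not automatically lower semicontinuous, however, so I would instead use the primal form. There the risk is a supremum over $P$ of functionals that are affine and lower semicontinuous in the rule, and such a supremum is lower semicontinuous.

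Finally, combining the three steps yields $\sup_h R(\delta^E,h)\le\liminf_n \sup_h R(\bar\delta_n,h)\le \sup_h R(\delta,h)$. This is the displayed inequality, and the reduction to equivariant rules follows immediately.
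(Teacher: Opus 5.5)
Your proposal is correct in outline, but it is organized differently from the paper's proof.

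\textbf{The paper's route.} The paper first rewrites the ambiguity sets as orbits, $\mathcal{P}_h^M=\{h\circ\phi:\phi\in\mathcal{P}_0^M\}$, and discards $\phi$ without a Lebesgue density, since those have infinite KL. It then applies Strasser's generalized Hunt--Stein theorem (his Theorem 48.16, with level-compactness of $\ell$) as a black box. The theorem is applied to the \emph{unpenalized}, linear-in-$P$ expected loss of the enlarged invariant model $\{P(h,\phi)\}$. This gives a single equivariant $\delta^E$ with $\sup_h E_{P(h,\phi)}[\ell(\delta-Kh)]\ge E_{P(0,\phi)}[\ell(\delta^E)]$ for every $\phi$. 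Because $\KL(P(h,\phi)\|Q_h)$ is constant along orbits, the paper then subtracts the penalty and takes the supremum over $\phi$.

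\textbf{Your route.} You prove invariance of the \emph{penalized} risk directly, $R(\delta_g,h)=R(\delta,h+g)$. This already encodes the orbit lemma, so no nuisance reparametrization is needed. You then run the F{\o}lner-averaging/compactness proof of Hunt--Stein by hand on this convex, lower semicontinuous functional.

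\textbf{What each buys.} The paper's route is short and delegates compactness, the passage from generalized decision functions to kernels, and almost-equivariance to Strasser. The price is that it needs the invariant rule to be the same for all $\phi$ (true, since it is built from $\delta$ alone), and it leaves implicit that convexity is what removes randomization. Your route is self-contained and shows exactly where convexity, coercivity, and the primal sup-of-affine structure enter.

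\textbf{Points to tighten.}
\begin{itemize}
\item Commit to the kernel route. Subsequential pointwise limits of $\bar\delta_n$ need not exist.
\item Mix the shifted rules rather than averaging their actions; then your total-variation estimate applies verbatim. For averaged actions it holds only in $L^1(Q_h)$, via the uniform bound on $E_{Q_{h'}}[\ell(\delta-Kh')]$ and the linear growth of a convex coercive $\ell$.
\item A weak limit is equivariant only almost everywhere for each shift. You therefore need the standard regularization to an exactly equivariant version. This is easy in coordinates $(X,Z^I)$, where the group translates $X$ alone.
\item Taking $P=Q_h$ bounds $E_{Q_h}[\ell(\delta-Kh)]$, not $E_{Q_h}[\ell^*(\delta-Kh)]$. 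The former is all that tightness requires, so the argument is unaffected.
\end{itemize}
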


A useful consequence of equivariance is that the constrained multiplier risk does not depend on $h$.  Specifically, for any $\delta^E\in\mathcal{D}^E$, transitivity of the group action implies
\[
\sup_{P\in\mathcal{P}_{h}^M}\left\{E_{P}\left[\ell\left(\delta^E\left(X,Y\right)-Kh\right)\right]-\lambda \KL(P\|Q_h)\right\}=\sup_{P\in\mathcal{P}_{0}^M}\left\{E_{P}\left[\ell\left(\delta^E\left(X,Y\right)\right)\right]-\lambda \KL(P\|Q_0)\right\}.
\]
The minimax problem over equivariant rules thus reduces to
\[
\inf_{\delta^E\in\mathcal{D}^E}\sup_{P\in\mathcal{P}_{0}^M}\left\{E_{P}\left[\ell\left(\delta^E\left(X,Y\right)\right)\right]-\lambda \KL(P\|Q_0)\right\}.
\]

\paragraph{Example: Average Treatment Effects, Continued}

In the two-team ATE example, the group $G=\mathbb{R}^2$ shifts $(X,Y)$ by $(g,-\Psi g)$ and shifts the action by $Kg=(-1,1)g$.  Equivariance requires that the estimator respond to a location shift in the data $(X,Y)$ by a parallel shift in the estimated treatment effect.  The MLE $KX=(-1,1)X$ is equivariant, as is the (limit experiment analog of the) efficient GMM estimator $-K(\Psi'\Omega^{-1}\Psi)^{-1}\Psi'\Omega^{-1}Y$. $\triangle$

\medskip

\subsection{Optimal Equivariant Rules}\label{sec: optimal rules body}

We next characterize optimal equivariant rules, treating the $M<\infty$ and $M=\infty$ cases in turn.
By the duality of Proposition \ref{prop:constrained-dual init} and invariance, the minimax problem for equivariant rules with $M<\infty$ becomes
\begin{equation}\label{eq: joint opt}
\inf_{\delta^E\in\mathcal{D}^E}\inf_{\beta\in\mathbb{R}^b}E_{Q_0}\left[\ell^*\left(\delta^E\left(X,Y\right)\right)\exp\left(\beta'W_{M,0}\right)\right].
\end{equation}
Since this is a joint infimum over $(\delta^E,\beta)$, the order of minimization does not matter.  For each fixed $\beta$, however,  the minimization over $\delta^E$ corresponds to finding the best equivariant decision rule under an exponentially tilted likelihood, since
\[
\inf_{\delta^E\in\mathcal{D}^E}E_{Q_0}\left[\ell^*\left(\delta^E\left(X,Y\right)\right)\exp\left(\beta'W_{M,0}\right)\right]=\]
\[
\inf_{\delta^E\in\mathcal{D}^E}E_{Q_0}\left[\ell^*\left(\delta^E\left(X,Y\right)\right)\frac{\exp\left(\beta'W_{M,0}\right)}{E_{Q_0}\left[\exp\left(\beta'W_{M,0}\right)\right]}\right]E_{Q_0}\left[\exp\left(\beta'W_{M,0}\right)\right].
\]
Theorem 6.5 of \citet{Eaton1989} proves that the best equivariant estimator in a location problem is the Bayes decision rule under the flat prior, from which we immediately obtain the form of the optimal rule for our problem.

\begin{prop}\label{prop: best equivariant finite M}
Under Assumption \ref{assu: loss}, for each $\beta\in\mathbb{R}^b$ define
\[
\delta^*_\beta(X,Y)\in\argmin_{a}\int\ell^*(a-Kh)\,\pi_\beta(h\mid X,Y)\,dh,
\]
where $\pi_\beta(h\mid X,Y)$ is the posterior density under a flat prior on $h$,
\[
\pi_\beta(h\mid X,Y)=\frac{q_0(X-h,\,Y+\Psi h)\cdot\exp\left(\beta'W_{M,h}\right)}{\int q_0(X-h',\,Y+\Psi h')\cdot\exp\left(\beta'W_{M,h'}\right)dh'},
\]
and $q_0$ denotes the density of $(X,Y)$ under $Q_0$.  Then:
\begin{enumerate}
\item[(a)] $\delta^*_\beta$ minimizes $E_{Q_0}[\ell^*(\delta^E(X,Y))\exp(\beta'W_{M,0})]$ over $\delta^E\in\mathcal{D}^E$.
\item[(b)] There exists $\beta^*$ minimizing
\[
E_{Q_0}\left[\ell^*\left(\delta^*_\beta\left(X,Y\right)\right)\exp\left(\beta'W_{M,0}\right)\right],
\]
and moreover 
\item[(c)] $\delta^*_{\beta^*}$ is optimal in the limit experiment.
\end{enumerate}
\end{prop}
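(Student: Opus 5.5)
The plan is to reduce everything to a conditional, pointwise minimization given the maximal invariant of the group action of Section \ref{sec: hunt-stein}. Write $Z=Y+\Psi X$. Under $g\circ(X,Y)=(X+g,Y-\Psi g)$ we have $Z\mapsto Y-\Psi g+\Psi X+\Psi g=Z$, so $Z$ is invariant, and the orbit of $(X,Y)$ is parametrized by $X$. Setting $g=-X$ in the equivariance identity shows that every $\delta^E\in\mathcal{D}^E$ has the form
\[
\delta^E(X,Y)=KX+f(Z),\qquad f(Z)=\delta^E(0,Z).
\]
Conversely, every measurable $f$ gives an equivariant rule.

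\textbf{Part (a).} The objective becomes
\[
E_{Q_0}\left[\ell^*(KX+f(Z))\exp(\beta'W_{M,0})\right].
\]
Here $W_{M,0}$ is a polynomial in $Y=Z-\Psi X$. The objective is therefore minimized by choosing $f(Z)$ pointwise in $Z$ to minimize
\[
G(c,\beta,Z)=E_{Q_0}\left[\ell^*(KX+c)\exp(\beta'W_{M,0})\mid Z\right].
\]
I will verify the Pitman-type identity that links this to the flat-prior posterior. In $\int\ell^*(a-Kh)\pi_\beta(h\mid X,Y)\,dh$, substitute $u=X-h$ and write $a=KX+c$. Then $Y+\Psi h=Z-\Psi u$, and $W_{M,h}$ depends on the data only through $Y_h=Y+\Psi h=Z-\Psi u$. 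The integral thus becomes
\[
\frac{\int\ell^*(Ku+c)\,q_0(u,Z-\Psi u)\,e^{\beta'W(Z-\Psi u)}\,du}{\int q_0(u,Z-\Psi u)\,e^{\beta'W(Z-\Psi u)}\,du}.
\]
Since $(X,Z)$ is a linear bijection of $(X,Y)$ with constant Jacobian, this is exactly $G(c,\beta,Z)$ up to a $Z$-dependent normalizing factor. Hence $\delta^*_\beta=KX+c^*_\beta(Z)$: it is equivariant, and it minimizes $G$ pointwise, which gives (a). This is also Theorem 6.5 of \citet{Eaton1989} applied to the tilted location family. Existence of the argmin uses two facts: $c\mapsto\ell^*(Ku+c)$ is convex (the exponential of a convex function), and $\ell(u)\to\infty$, so the problem is coercive. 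A measurable selection theorem then gives a measurable version of $c^*_\beta$. On $Z$-sets where $G\equiv\infty$, any choice works.

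\textbf{Part (b).} Let $F(\beta)=E_{Q_0}[\ell^*(\delta^*_\beta)e^{\beta'W_{M,0}}]=E_Z[\inf_c G(c,\beta,Z)]$. Convexity of $F$ follows from the following steps.
\begin{itemize}
\item $\log G(c,\beta,Z)=\log E[\exp(\ell(KX+c)/\lambda+\beta'W_{M,0})\mid Z]$ is a conditional log-sum-exp of a function jointly convex in $(c,\beta)$, so it is jointly convex.
\item Partial minimization over $c$ preserves convexity, so $\beta\mapsto\log\inf_cG$ is convex.
\item The exponential of a convex function is convex, and taking the expectation over $Z$ preserves convexity.
\end{itemize}
For coercivity, use $\ell^*\ge1$ to get $F(\beta)\ge E_{Q_0}[e^{\beta'W_{M,0}}]$. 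The components of $W_{M,0}$ are linearly independent, centered polynomials in a nondegenerate Gaussian (because $\Omega$ has full rank). So for each unit vector $v$, $v'W_{M,0}>\epsilon$ with positive probability. A compactness argument on the sphere then gives $E[e^{\beta'W}]\to\infty$ as $\|\beta\|\to\infty$.

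The remaining and hardest step is lower semicontinuity of $F$, together with handling points where $F=\infty$. My approach is as follows. For $\beta_j\to\beta$, the minimizers $c^*_{\beta_j}(Z)$ are locally bounded by coercivity, so along a subsequence they converge pointwise to some $\bar c(Z)$. Two applications of Fatou's lemma, one inside the conditional expectation and one over $Z$, then give $\liminf_j F(\beta_j)\ge E_Z[G(\bar c,\beta,Z)]\ge F(\beta)$. A convex, lower semicontinuous, coercive function that is not identically infinite attains its minimum. Finiteness somewhere holds whenever the limit problem has finite risk; otherwise every rule is optimal and (c) is trivial.

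\textbf{Part (c).} By Proposition \ref{prop:constrained-dual init} and monotonicity of $\log$, the constrained multiplier risk of an equivariant rule equals $\lambda\log\inf_\beta E_{Q_0}[\ell^*(\delta^E)e^{\beta'W_{M,0}}]$. By the invariance remark following Theorem \ref{thm: equivariant}, this risk does not depend on $h$. Since the infimum in \eqref{eq: joint opt} is joint over $(\delta^E,\beta)$, parts (a) and (b) give
\[
\inf_{\delta^E,\beta}E_{Q_0}\left[\ell^*(\delta^E)e^{\beta'W_{M,0}}\right]=\inf_\beta F(\beta)=F(\beta^*),
\]
and this value is attained by $\delta^*_{\beta^*}$. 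Finally, Theorem \ref{thm: equivariant} shows that no non-equivariant rule does better in worst case over $h$. Hence $\delta^*_{\beta^*}$ is minimax, that is, optimal in the limit experiment.
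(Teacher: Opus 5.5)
Your proposal is correct and follows essentially the paper's route: the reduction to $KX+f(Z)$ and the flat-prior Bayes rule for fixed $\beta$ in (a) (your explicit Pitman-type change of variables just unpacks the Eaton citation), existence of $\beta^*$ in (b) from the bound $F(\beta)\ge E_{Q_0}[\exp(\beta'W_{M,0})]$ plus Fatou-based lower semicontinuity (the convexity of $F$ you prove is true but not needed), and the joint-infimum argument with Theorem \ref{thm: equivariant} in (c). One small repair: you cannot extract a single subsequence along which $c^*_{\beta_j}(Z)$ converges for every $Z$ at once, so instead prove lower semicontinuity of $\beta\mapsto\inf_c G(c,\beta,z)$ for each fixed $z$ (via local boundedness of the minimizers and the inner Fatou step, which is what the paper's appeal to Berge does) and then apply Fatou over $Z$.
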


The previous result provides the form of the optimal estimator for the case with $M<\infty$.  In the case of $M=\infty,$ the risk of an equivariant rule takes the form
\begin{equation}\label{eq: M infinity risk}
R_\infty(\delta^E)=\lambda\cdot E_{Q_{Y,0}}\left[\log\left(E_{Q_{X|Y,0}}\left[\ell^*(\delta^E(X,Y))\mid Y\right]\right)\right].
\end{equation}
We have not found a closed-form characterization of the optimal rule in the $M=\infty$ case under general loss.  However, one can show that the risk \eqref{eq: M infinity risk} is convex in $\delta^E$, so if we restrict to a linear class of rules $\delta_\Gamma(X,Y)=\Gamma\phi(X,Y)$ for a finite-dimensional vector of basis functions $\phi(X,Y)$, the problem of finding the optimal rule in the class is likewise convex.

\begin{prop}\label{prop: convexity}
Under Assumption \ref{assu: loss}, let $\phi(X,Y)\in\mathbb{R}^J$ and let $\Gamma\in\mathbb{R}^{d\times J}$, so that $\delta_\Gamma(X,Y)=\Gamma\phi(X,Y)\in\mathbb{R}^d$.
\begin{enumerate}
\item[(a)] For $M\in\mathbb{N}$, the objective $E_{Q_0}[\ell^*(\Gamma\phi(X,Y))\exp(\beta'W_{M,0})]$ is jointly convex in $(\Gamma,\beta)$.
\item[(b)] For $M=\infty$, the objective $\lambda\cdot E_{Q_{Y,0}}[\log(E_{Q_{X|Y,0}}[\ell^*(\Gamma\phi(X,Y))\mid Y])]$ is convex in $\Gamma$.
\end{enumerate}
\end{prop}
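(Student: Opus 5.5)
Both parts follow from the same elementary fact: composing a convex function with an affine map preserves convexity. We then need the appropriate convexity-preserving operations, namely nonnegative mixtures for part (a) and log-sum-exp type integrals for part (b).

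\emph{Part (a).} Fix a realization $(x,y)$ and consider the pointwise integrand
\[
g_{x,y}(\Gamma,\beta)=\exp\!\left(\tfrac{1}{\lambda}\ell(\Gamma\phi(x,y))+\beta'W_{M,0}(y)\right).
\]
The map $(\Gamma,\beta)\mapsto \Gamma\phi(x,y)$ is linear. Since $\ell$ is convex by Assumption \ref{assu: loss}, the function $(\Gamma,\beta)\mapsto\frac{1}{\lambda}\ell(\Gamma\phi(x,y))$ is jointly convex. The term $\beta'W_{M,0}(y)$ is linear in $\beta$, so the exponent is jointly convex. Because $\exp$ is convex and nondecreasing, $g_{x,y}$ is jointly convex. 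Taking the expectation under $Q_0$ is integration against a nonnegative measure, which preserves convexity; this holds in the extended sense if the integral may be $+\infty$ for some parameters. Hence the objective in (a) is jointly convex in $(\Gamma,\beta)$.

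\emph{Part (b).} Fix $y$. Write $f_y(\Gamma)=\log E_{Q_{X|Y,0}}\!\left[\exp\!\left(\tfrac{1}{\lambda}\ell(\Gamma\phi(X,y))\right)\mid Y=y\right]$. Set $u_X(\Gamma)=\frac{1}{\lambda}\ell(\Gamma\phi(X,y))$; each $u_X$ is convex in $\Gamma$ by the argument in part (a). For $\Gamma_0,\Gamma_1$ and $t\in(0,1)$, convexity of $u_X$ and monotonicity of $\exp$ give
\[
E\!\left[e^{u_X(\Gamma_t)}\right]\le E\!\left[e^{(1-t)u_X(\Gamma_0)+t u_X(\Gamma_1)}\right].
\]
Hölder's inequality with exponents $1/(1-t)$ and $1/t$ then gives
\[
E\!\left[e^{(1-t)u_X(\Gamma_0)+t u_X(\Gamma_1)}\right]\le \left(E\!\left[e^{u_X(\Gamma_0)}\right]\right)^{1-t}\left(E\!\left[e^{u_X(\Gamma_1)}\right]\right)^{t}.
\]
Taking logs shows $f_y(\Gamma_t)\le(1-t)f_y(\Gamma_0)+tf_y(\Gamma_1)$, so $f_y$ is convex. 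The objective in (b) is $\lambda\,E_{Q_{Y,0}}[f_Y(\Gamma)]$, a nonnegative multiple of an average of convex functions, and is therefore convex.

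\emph{Integrability.} The only point needing care is measurability and integrability of the averaged objects. Since $\ell\ge0$, each $f_y$ is bounded below by $0$, so $E_{Q_{Y,0}}[f_Y(\Gamma)]$ is well defined in $[0,\infty]$. The convexity inequalities then hold in the extended-real sense, even where some values are $+\infty$.
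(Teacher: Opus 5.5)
Your proposal is correct and follows the paper's proof essentially verbatim: part (a) writes the integrand as the exponential of the jointly convex function $\ell(\Gamma\phi)/\lambda+\beta'W_{M,0}$ and integrates, and part (b) combines the pointwise bound from convexity of $\ell$ with Hölder's inequality applied conditionally on $Y$. Your remark that $\ell\ge 0$ keeps each conditional log-expectation in $[0,\infty]$, so the inequalities hold in the extended-real sense, is a small but welcome addition that the paper leaves implicit.
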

Using this convexity, it is straightforward to solve numerically for optimal rules in a given linear class provided one can tractably compute expectations under $Q_0$.
For the important special case of squared error loss, however, we are able to go further and exactly characterize the optimal decision rule for all $M$.
\begin{prop}\label{prop: optim estimator square loss}
Under Assumption \ref{assu: loss} with $\ell(u)=\norm{u}^2$:
\begin{enumerate}
\item[(a)] If $M=0$ or $M=1$, the optimal equivariant estimator is $\delta^*(X,Y)=KX$.
\item[(b)] If $M\ge 2$ (including $M=\infty$), the optimal equivariant estimator takes the form $\delta^*(X,Y)=KX+C^*Z^I$ for $Z^I=Y+\Psi X$, where
\[
C^*\in\argmin_{C}
E_{Q_{Y,0}}\left[\log E_{Q_{X|Y,0}}\left[\exp\left(\frac{1}{\lambda}\left\|K(X+CZ^I)\right\|^2\right)\mid Y\right]\right].
\]
Moreover, the optimal risk is the same for all $M\ge 2$.
\end{enumerate}
\end{prop}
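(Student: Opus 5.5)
The plan rests on the structure of equivariant rules in the limit experiment \eqref{eq: Limit Experiment}. Set $Z^I=Y+\Psi X$. Under the group action $g\circ(X,Y)=(X+g,Y-\Psi g)$ we have $Z^I\mapsto Z^I$, so $Z^I$ is a maximal invariant. Any $\delta^E\in\mathcal{D}^E$ can therefore be written $\delta^E(X,Y)=KX+f(Z^I)$ for some function $f$. Under $Q_0$,
\[
\Cov(X,Z^I)=-I_0^{-1}\Psi'+I_0^{-1}\Psi'=0,
\]
so $X\sim N(0,I_0^{-1})$ is independent of $Z^I\sim N(0,\Omega-\Psi I_0^{-1}\Psi')$. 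By Theorem \ref{thm: equivariant} it suffices to optimize over $f$, using the reduced objective \eqref{eq: joint opt} for finite $M$ and \eqref{eq: M infinity risk} for $M=\infty$. Throughout I assume $\lambda$ is large enough that the relevant exponential moments are finite; otherwise every rule has infinite risk and the claim is vacuous.

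\textbf{Part (a).} For $M=0$ the objective is $E[\exp(\|KX+f(Z^I)\|^2/\lambda)]$. Condition on $Z^I$. Since $X$ is a centered Gaussian independent of $Z^I$ and $u\mapsto\exp(\|u\|^2/\lambda)$ is symmetric and convex, Anderson's lemma shows that the conditional expectation is minimized at $f(Z^I)=0$.

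For $M=1$ we have $W_{1,0}=Y$. Fix $\beta$. Tilting by $\exp(\beta'Y)$ keeps the law Gaussian, leaves the covariance unchanged, and shifts the mean of $(X,Y)$ by $\Cov((X,Y),Y)\beta$. In particular $X$ acquires mean $\mu_\beta=-I_0^{-1}\Psi'\beta$, and $X$ remains independent of $Z^I$ under the tilted law $\tilde Q_\beta$. Hence
\[
E_{Q_0}[\ell^*(\delta^E)e^{\beta'Y}]=e^{\beta'\Omega\beta/2}\,E_{\tilde Q_\beta}[\ell^*(KX+f(Z^I))].
\]
Applying Anderson's lemma conditionally on $Z^I$ bounds the right-hand side below by $e^{\beta'\Omega\beta/2}E_{Q_0}[\ell^*(KX)]$, which is at least $E_{Q_0}[\ell^*(KX)]$. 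The pair $(\beta,f)=(0,0)$ attains this bound, so $KX$ is optimal.

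\textbf{Part (b), reduction for $M=2$.} Fix $\beta$. The tilt $\exp(\beta'W_{2,0})$ is the exponential of a quadratic in $Y$, so when it is integrable the tilted law of $(X,Z^I)$ is again jointly Gaussian. Conditional on $Z^I$, $X$ is then Gaussian with mean linear in $Z^I$ and covariance not depending on $Z^I$. Anderson's lemma applied conditionally gives $f(Z^I)=-K\,\tilde E[X\mid Z^I]$, which is linear in $Z^I$. Thus $\delta^*_\beta$ from Proposition \ref{prop: best equivariant finite M} has the form $KX+CZ^I$, and in particular so does $\delta^*_{\beta^*}$.

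Next, for a linear rule, $KX+CZ^I=(K+C\Psi)X+CY$ is Gaussian given $Y$, with mean linear in $Y$ and constant covariance. Therefore $\log g(Y)$, where $g(Y)=E[\ell^*(\cdot)\mid Y]$, is a quadratic polynomial in $Y$. Write the $M=2$ risk in dual form, $\inf_\beta\lambda\log E[g(Y)e^{\beta'W_{2,0}}]$. By Jensen's inequality this is at least $\lambda E[\log g(Y)]$, which is the $M=\infty$ risk \eqref{eq: M infinity risk}. Equality holds by choosing $\beta$ so that $\beta'W_{2,0}$ cancels the nonconstant part of $\log g$, which makes $g(Y)e^{\beta'W_{2,0}}$ constant. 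Hence $R_2(C)=R_\infty(C)$ for every linear rule.

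\textbf{Part (b), equality of optimal risks and the main obstacle.} Adding constraints shrinks the adversary's set, so for any $2\le M\le\infty$,
\[
R_2^*\ge R_M^*\ge R_\infty^*.
\]
It remains to show $R_\infty(f)\ge R_2^*$ for every, possibly nonlinear, $f$. This step is the main obstacle. The plan is a saddle-point argument for the $M=2$ game at $(\delta^*,\beta^*)$. Let $P^*$ have density proportional to $\ell^*(\delta^*)e^{\beta^{*\prime}W_{2,0}}$ relative to $Q_0$.

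First, the first-order conditions in $\beta$ give $P^*\in\mathcal{P}^2_0$. Second, by the cancellation above the $Y$-marginal of $P^*$ is proportional to $g(Y)e^{\beta^{*\prime}W}q_Y$, which is exactly $q_Y$. So $P^*$ preserves every moment of $Y$ and is feasible for $M=\infty$. Third, the penalized objective is linear in $\ell(\delta)$, convex in $\delta$, and concave in $P$, and $\delta^*$ is a best response to $P^*$ among equivariant rules, because for fixed $\beta^*$ the $\delta$-minimization in Proposition \ref{prop: best equivariant finite M}(a) is exactly the best response to $P^*$. Combining these,
\[
R_\infty(f)\ge E_{P^*}[\ell(\delta_f)]-\lambda\KL(P^*\|Q_0)\ge E_{P^*}[\ell(\delta^*)]-\lambda\KL(P^*\|Q_0)=R_2^*.
\]
Checking that $\delta^*$ is a best response requires a first-order, variational argument in $\delta$ around the saddle point, together with integrability of the Gaussian tilts. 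That is where I expect the technical work to concentrate. Given this inequality, all $R_M^*$ with $M\ge2$ coincide, and the optimal linear $C^*$ minimizes $R_\infty(C)$, which is the displayed criterion.
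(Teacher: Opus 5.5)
Your proposal is correct in outline, but it reaches part (b) from the opposite end to the paper.

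\textbf{The paper's route.} It works entirely on the $M=\infty$ problem. It proves a subgradient inequality for $R_\infty$ (Lemma~\ref{lem: subgradient M infinity square}) and minimizes $R_\infty$ over the linear class $KX+CZ^I$. It then combines the linear-class first-order condition with joint Gaussianity of $P^\infty_{C^*}$ to get $E_{P^\infty_{C^*}}[\delta_{C^*}\mid Z^I]=0$. That is a zero subgradient, hence global optimality at $M=\infty$. Next it shows $\mathcal{R}_M(\delta_C)=\mathcal{R}_\infty(\delta_C)$ for linear rules in primal form, using the KL chain rule, the fact that the conditional multiplier value $\phi_C(Y)$ is quadratic, and the second-moment constraints. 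It closes with $\mathcal{R}_M(\delta^E)\ge\mathcal{R}_\infty(\delta^E)\ge\mathcal{R}_\infty(\delta_{C^*})=\mathcal{R}_M(\delta_{C^*})$.

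\textbf{Your route.} You start at $M=2$, where Proposition~\ref{prop: best equivariant finite M} already supplies a minimax rule $\delta^*_{\beta^*}$. Because the $M=2$ tilt is exp-quadratic, the tilted model stays Gaussian, so Anderson's lemma makes that rule linear without solving any restricted problem. Your $R_2=R_\infty$ step for linear rules is the dual counterpart (Jensen, attained by a cancelling multiplier) of the paper's primal computation. Your saddle-point argument then pushes optimality from $M=2$ up to $M=\infty$.

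\textbf{What each buys.} Your route gets linearity almost for free and shows why the least-favorable law is Gaussian with $Y$-marginal $Q_{Y,0}$. The paper's route is self-contained on the $M=\infty$ problem, does not need Proposition~\ref{prop: best equivariant finite M}, and gives the characterization of $C^*$ directly rather than as a by-product. Both hinge on the same condition, $E_{P^*}[\delta^*\mid Z^I]=0$. Your part (a) via Anderson's lemma (with the mean shift and the factor $e^{\beta'\Omega\beta/2}\ge 1$ for $M=1$) is equivalent to the paper's convexity-plus-symmetric-gradient argument.

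Three points in your saddle-point step need tightening.

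First, the $\delta$-minimization of Proposition~\ref{prop: best equivariant finite M}(a) at $\beta^*$ is \emph{not} the same problem as the best response to $P^*$. One minimizes $E_{Q_0}[\exp(\|\delta\|^2/\lambda)e^{\beta^{*\prime}W_{2,0}}]$; the other minimizes $E_{P^*}[\|\delta\|^2]$ with $dP^*/dQ_0\propto\ell^*(\delta^*)e^{\beta^{*\prime}W_{2,0}}$. They share only the first-order condition $E_{P^*}[\delta^*\mid Z^I]=0$. That suffices, because the $P^*$-problem is a convex quadratic in $f$, where $\delta^*=KX+f^*(Z^I)$. You can get the condition without differentiating under the integral:
\begin{itemize}
\item Under the $\beta^*$-tilted Gaussian law $\tilde Q$, $\delta^*=K(X-\tilde E[X\mid Z^I])$ is centered Gaussian and independent of $Z^I$.
\item $P^*$ reweights $\tilde Q$ by $\exp(\|\delta^*\|^2/\lambda)$, a symmetric function of $\delta^*$ alone, so independence survives and $E_{P^*}[\delta^*]=0$.
\item Hence $E_{P^*}\|KX+f(Z^I)\|^2=E_{P^*}\|\delta^*\|^2+E_{P^*}\|f(Z^I)-f^*(Z^I)\|^2$.
\end{itemize}

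Second, ``by the cancellation above'' tacitly identifies $\beta^*$ with the cancelling multiplier. Justify this either by uniqueness of the minimizer of $\beta\mapsto E[g(Y)e^{\beta'W_{2,0}}]$ (strict convexity, Lemma~\ref{lem: beta-objective}), or by noting that the $Y$-marginal of $P^*$ is Gaussian and the $\beta$-first-order condition matches its first two moments to $N(0,\Omega)$.

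Third, $\tilde E[X\mid Z^I]$ is affine, not linear, when $\beta^*$ has a nonzero linear part. To land on the exact form $KX+C^*Z^I$, note that an intercept $c$ adds $c'Ac$ with $A\succ 0$ to $R_\infty$. So once your chain shows $\delta^*$ is optimal at $M=\infty$, the intercept must vanish.
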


Part (b) implies that increasing the number of moment constraints $M$ beyond two does not improve the optimal risk under squared error loss.  Intuitively, we show that the best equivariant decision rule for $M=\infty$ is linear in $(X,Y)$, but for such rules the worst-case distribution $P^*$ is Gaussian, and a Gaussian distribution that matches the first two moments of $Y$ automatically matches all higher moments as well.

\paragraph{Example: Average Treatment Effects, Continued}

In the two-team ATE example, part (a) says the optimal rule when $M\le 1$ is $KX$, the MLE pooling data from both teams.  Under $M\ge 2$, the researcher can exploit the team-1 moment conditions: the optimal rule linearly adjusts the MLE using $Z^I=Y+\Psi X$, which is the asymptotic analog to the moment conditions evaluated at the MLE.  The optimal adjustment matrix $C^*$ can be computed numerically. $\triangle$

\medskip

\subsection{Feasible, Asymptotically Optimal Rules}\label{sec: feasible rules}

The results above characterize optimal decision rules in the limit experiment.  In practice, however, the researcher observes only finite-sample data.  We now show that plug-in finite-sample analogs of limit-experiment rules converge in both distribution and, under an integrability condition, risk.

The results in this subsection are local to the fixed base value $\theta_0$ used to define the limit experiment.  Accordingly, the limit-experiment rule may depend on the derivative
$K=\frac{\partial}{\partial\theta'}\kappa(\theta_0)\in\mathbb{R}^{d\times p}$
as well as on $\Sigma$.
For a continuous decision rule $\delta^c(X,Y;K,\Sigma)$ in the limit experiment, equivariant in $(X,Y)$ for each fixed $(K,\Sigma)$, define a plug-in finite-sample analog as
\[
\delta_n^c=\kappa(\hat\theta_n^{MLE})+\frac{1}{\sqrt{n}}\delta^c\left(0,\;\frac{1}{\sqrt{n}}\sum_{i=1}^n\psi(\hat\theta_n^{MLE},X_i);\;\hat K_n,\hat\Sigma_n\right),
\]
where $\hat\theta_n^{MLE}$ is the maximum likelihood estimator, $\hat K_n$ is a consistent estimator of $K$ (e.g. the derivative of $\kappa$ at $\hat\theta_n^{MLE}$), and $\hat\Sigma_n$ is a consistent estimator of $\Sigma$.

\begin{assu}\label{assu: smoothness}
For $\psi_j$ the $j$-th component of $\psi$, there exist $\eta>0$ and $\tau_{j,k,l}(X)$ such that
\[
\sup_{\theta\in\mathcal{N}}\left|\frac{\partial^2}{\partial\theta_k\partial\theta_l}\psi_j(\theta,X)\right|\le \tau_{j,k,l}(X)
\]
for all $j,k,l$, where $\mathcal{N}$ is a neighborhood of $\theta_0$ and
$E_{Q_{n,\theta_0}}[\tau_{j,k,l}(X_i)^{1+\eta}]<C<\infty$.
Moreover, for every $h\in\mathbb{R}^p$,
\[
\left(
\sqrt{n}(\hat\theta_n^{MLE}-\theta_{n,h}),\;
\frac{1}{\sqrt{n}}\sum_{i=1}^n\psi(\hat\theta_n^{MLE},X_i),\;
\hat K_n,\;
\hat\Sigma_n
\right)\stackrel[d]{Q_{n,h}}{\to}
\left(
X-h,\;
Y+\Psi X,\;
K,\;
\Sigma
\right).
\]
\end{assu}

\begin{prop}\label{prop: Convergence - Estimation Error}
Under Assumptions \ref{assu: moments}-\ref{assu: smoothness} and the conditions of Theorem \ref{thm: Constrained LAM}, let $\delta^c(X,Y;K,\Sigma)$ be continuous in $(X,Y,K,\Sigma)$ and equivariant in $(X,Y)$ for each fixed $(K,\Sigma)$.
Then for any $h\in\mathbb{R}^p$,
\[
\sqrt{n}\left(\delta_n^c-\kappa(\theta_{n,h})\right)\stackrel[d]{Q_{n,h}}{\to}\delta^c(X,Y;K,\Sigma)-Kh.
\]
\end{prop}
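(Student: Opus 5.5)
The plan is to use equivariance to rewrite the plug-in rule as a continuous function of the joint limit in Assumption \ref{assu: smoothness}, and then apply the continuous mapping theorem. Write $\hat h_n=\sqrt{n}(\hat\theta_n^{MLE}-\theta_0)$ and $\hat Z_n=\frac{1}{\sqrt{n}}\sum_i\psi(\hat\theta_n^{MLE},X_i)$. Equivariance of $\delta^c$ with $g=-X$ gives, for each fixed $(K,\Sigma)$,
\[
\delta^c(X,Y;K,\Sigma)=\delta^c(0,Y+\Psi X;K,\Sigma)+KX,
\]
so every equivariant rule depends on the data only through $(X,Z^I)$ with $Z^I=Y+\Psi X$. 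This matches the structure of $\delta_n^c$, whose second argument is the finite-sample analog of $Z^I$.

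Decompose the scaled error as
\[
\sqrt{n}\left(\delta_n^c-\kappa(\theta_{n,h})\right)=\sqrt{n}\left(\kappa(\hat\theta_n^{MLE})-\kappa(\theta_{n,h})\right)+\delta^c\left(0,\hat Z_n;\hat K_n,\hat\Sigma_n\right).
\]
\begin{enumerate}
\item[(i)] \emph{Delta method.} Assumption \ref{assu: smoothness} gives $\sqrt{n}(\hat\theta_n^{MLE}-\theta_{n,h})=O_p(1)$ under $Q_{n,h}$, and $\theta_{n,h}\to\theta_0$. Continuous differentiability of $\kappa$ at $\theta_0$ (Assumption \ref{assu: loss}) then gives
\[
\sqrt{n}\left(\kappa(\hat\theta_n^{MLE})-\kappa(\theta_{n,h})\right)=K\sqrt{n}\left(\hat\theta_n^{MLE}-\theta_{n,h}\right)+o_p(1).
\]
The mean value theorem applies because both $\hat\theta_n^{MLE}$ and $\theta_{n,h}$ converge to $\theta_0$, so the derivative at intermediate points tends to $K$ in probability.
\item[(ii)] \emph{Continuous mapping.} The map
\[
(a,z,\tilde K,\tilde\Sigma)\mapsto \tilde K a+\delta^c(0,z;\tilde K,\tilde\Sigma)
\]
is continuous at the limit point. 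The first term is continuous in $(\tilde K,a)$, and the second is continuous by hypothesis. Replacing $K$ by $\hat K_n$ in the linear term costs $(\hat K_n-K)O_p(1)=o_p(1)$, so either version can be used.
\item[(iii)] \emph{Identify the limit.} By the joint convergence in Assumption \ref{assu: smoothness} and continuous mapping (with Slutsky to absorb the $o_p(1)$ terms),
\[
\sqrt{n}\left(\delta_n^c-\kappa(\theta_{n,h})\right)\stackrel[d]{Q_{n,h}}{\to}K(X-h)+\delta^c(0,Y+\Psi X;K,\Sigma).
\]
By the equivariance identity above, this equals $\delta^c(X,Y;K,\Sigma)-Kh$.
\end{enumerate}

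The main point needing care is step (i)–(ii): convergence must hold \emph{jointly}, under the local alternatives $Q_{n,h}$, with the estimated nuisance $(\hat K_n,\hat\Sigma_n)$ entering $\delta^c$. Assumption \ref{assu: smoothness} is stated to deliver exactly this joint convergence under $Q_{n,h}$ for every $h$, so I would invoke it directly rather than derive it from LAN, Le Cam's third lemma, and a Taylor expansion of $\psi$. That derivation would use the second-derivative envelope $\tau_{j,k,l}$ and Assumption \ref{assu: moments}, but it is not needed.

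It remains only to check that the joint limit places $(K,\Sigma)$ at a continuity point of $\delta^c$. This holds because $\delta^c$ is jointly continuous everywhere by hypothesis and the limits of $\hat K_n$ and $\hat\Sigma_n$ are the constants $K$ and $\Sigma$.
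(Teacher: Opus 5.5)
Your proposal is correct and follows essentially the same route as the paper's proof. Both split the scaled error into $\sqrt{n}(\kappa(\hat\theta_n^{MLE})-\kappa(\theta_{n,h}))$ plus $\delta^c(0,\frac{1}{\sqrt{n}}\sum_i\psi(\hat\theta_n^{MLE},X_i);\hat K_n,\hat\Sigma_n)$, handle the first term by the delta method and the second by continuous mapping under the convergence in Assumption \ref{assu: smoothness}, and identify the limit via equivariance with $g=-X$; your explicit treatment of both terms as functions of the single jointly convergent vector is, if anything, slightly more careful than the paper's term-by-term write-up.
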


To obtain convergence of the (dual) constrained multiplier objective, we strengthen convergence in distribution to convergence of moments.

\begin{assu}\label{assu: UI}
For a continuous equivariant decision rule $\delta^c(X,Y;K,\Sigma)$ in the limit experiment, let
\[
\beta^{*,c}\in\argmin_{\beta\in\mathbb{R}^b}
E_{Q_0}\left[\ell^*\left(\delta^c(X,Y;K,\Sigma)\right)\exp\left(\beta'W_{M,0}\right)\right].
\]
For each finite $h\in\mathbb{R}^p$,
\[
E_{Q_{n,h}}\left[\ell^*\left(\sqrt{n}\left(\delta_n^c-\kappa(\theta_{n,h})\right)\right)\exp\left(\beta^{*,c\prime}W_{M,n,h}\right)\right]
\to\]
\[
E_{Q_h}\left[\ell^*\left(\delta^c(X,Y;K,\Sigma)-Kh\right)\exp\left(\beta^{*,c\prime}W_{M,h}\right)\right].
\]
\end{assu}

\begin{cor}\label{corr: Attainability}
Under Assumptions \ref{assu: moments}-\ref{assu: UI} and the conditions of Theorem \ref{thm: Constrained LAM}, for a continuous, equivariant decision rule $\delta^c(X,Y;K,\Sigma)$ in the limit experiment and $\delta_n^c$ its plug-in analog,
\[
\lim_{n\to\infty}\sup_{h\in I}\inf_\beta\lambda\cdot\log\left(E_{Q_{n,h}}\left[\ell^*\left(\sqrt{n}(\delta_n^c-\kappa(\theta_{n,h}))\right)\exp(\beta'W_{M,n,h})\right]\right)=
\]
\[
\sup_{h\in I}\inf_\beta\lambda\cdot\log\left(E_{Q_h}\left[\ell^*\left(\delta^c(X,Y;K,\Sigma)-Kh\right)\exp(\beta'W_{M,h})\right]\right)
\]
for any finite $I\subset\mathbb{R}^p$.  In particular, if $\delta^c(\cdot,\cdot;K,\Sigma)$ is an optimal rule for the limit experiment corresponding to the fixed base point $\theta_0$, then the plug-in estimator $\delta_n^c$ is locally asymptotically minimax at $\theta_0$:
\[
\sup_I\liminf_{n\to\infty}\sup_{h\in I}\sup_{P\in\mathcal{P}^M_{n,h}}\left\{\mathbb{E}_P\left[l_n(\delta_n^c(X^n),\theta_{n,h})\right]-\lambda \KL(P\|Q_{n,h})\right\}=
\]
\[
\inf_\delta\sup_{h\in\mathbb{R}^p}\inf_\beta\lambda\cdot\log\left(E_{Q_h}\left[\ell^*\left(\delta(X,Y)-Kh\right)\exp(\beta'W_{M,h})\right]\right).
\]
\end{cor}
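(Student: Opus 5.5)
The plan has two parts: first establish the displayed convergence for each fixed finite $I$, then combine it with the lower bound of Theorem \ref{thm: Constrained LAM}. Throughout, write
\[
F_n(h,\beta)=E_{Q_{n,h}}\left[\ell^*\left(\sqrt{n}(\delta_n^c-\kappa(\theta_{n,h}))\right)\exp(\beta'W_{M,n,h})\right]
\]
and let $F(h,\beta)$ be the corresponding limit-experiment quantity. Since $I$ is finite, $\sup_{h\in I}$ commutes with the limit once we have, for each $h$, $\lim_n \inf_\beta F_n(h,\beta)=\inf_\beta F(h,\beta)$. Because $\log$ is continuous and monotone, this pointwise statement suffices for the first display.

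For fixed $h$ I would argue in two directions.

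\emph{Upper bound.} By equivariance of $\delta^c$ and the invariance of $Q_h$ under the group action, $F(h,\beta)=F(0,\beta)$. Hence $\beta^{*,c}$ minimizes $F(h,\cdot)$ for every $h$. Assumption \ref{assu: UI} then gives
\[
\limsup_n \inf_\beta F_n(h,\beta)\le \lim_n F_n(h,\beta^{*,c})=\inf_\beta F(h,\beta).
\]

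\emph{Lower bound.} I need $\liminf_n \inf_\beta F_n(h,\beta)\ge \inf_\beta F(h,\beta)$. Here I would use the primal (supremum) form from Proposition \ref{prop:constrained-dual init}. Take the limit-experiment worst-case tilt $P^*$, with density proportional to $\ell^*(\delta^c-Kh)\exp(\beta^{*,c\prime}W_{M,h})$ relative to $Q_h$. Its finite-sample analog $P_n$ has $Q_{n,h}$-density proportional to $\ell^*(\sqrt n(\delta_n^c-\kappa))\exp(\beta^{*,c\prime}W_{M,n,h})$.
\begin{itemize}
\item By first-order conditions, $P^*$ satisfies the moment constraints.
\item $P_n$ satisfies them only approximately. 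I would correct this with a small additional tilt $\exp(\gamma_n'W_{M,n,h})$, where $\gamma_n\to0$ is chosen by an implicit-function argument. This needs the Jacobian $E_{P^*}[W_{M,h}W_{M,h}']$ to be nonsingular, which holds because $Y_h$ has a nondegenerate Gaussian law.
\item The corrected distribution is admissible. Its value of $E_P[l_n]-\lambda\KL$ converges to the limit value, using Proposition \ref{prop: Convergence - Estimation Error}, Assumption \ref{assu: UI}, and the moment-convergence hypothesis of Theorem \ref{thm: Constrained LAM}.
\end{itemize}

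A simpler alternative for the lower bound is available if I can show the family $\beta\mapsto\log F_n(h,\beta)$ is convex and that, by Assumption \ref{assu: UI}, its pointwise limit is strictly convex with an interior minimizer. Then convex-function epi-convergence arguments (pointwise convergence of convex functions on a neighborhood implies convergence of minima when minimizers stay bounded) finish the step. Boundedness of minimizers follows from coercivity: $W_{M,0}$ is nondegenerate, so $F(h,\beta)\to\infty$ as $\|\beta\|\to\infty$. The catch is that this route needs pointwise convergence at $\beta$ near $\beta^{*,c}$, not only at $\beta^{*,c}$ itself. I would try to extend Assumption \ref{assu: UI} to a neighborhood by uniform integrability: the convergence at $\beta^{*,c}$ together with Hölder's inequality and the moment convergence of $W_{M,n,h}$ controls nearby $\beta$. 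This is the step I expect to be the main obstacle, and if it fails I would fall back on the primal construction above.

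For the final claim, suppose $\delta^c$ is optimal in the limit experiment. By equivariance, its limit risk $\inf_\beta\lambda\log F(h,\beta)$ is constant in $h$ and equals the minimax value $V^*$ on the right-hand side. Proposition \ref{prop:constrained-dual init} equates the finite-sample constrained multiplier risk with the dual. The first part then shows that, for every finite $I$, the $\liminf$ (indeed the limit) of the worst-case risk over $I$ equals $V^*$. So the supremum over $I$ is at most $V^*$, and Theorem \ref{thm: Constrained LAM} gives the reverse inequality, which yields equality.
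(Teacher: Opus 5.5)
Your upper bound (equivariance gives $F(h,\cdot)=F(0,\cdot)$, so $\beta^{*,c}$ is optimal for every $h$, and Assumption \ref{assu: UI} then bounds $\limsup_n\inf_\beta F_n(h,\beta)$) matches the paper's argument. So do the reduction to fixed $h$ and the final appeal to Theorem \ref{thm: Constrained LAM}. The gap is in the lower bound $\liminf_n\inf_\beta F_n(h,\beta)\ge\inf_\beta F(h,\beta)$.

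Write $T_{n,h}=\sqrt{n}(\delta_n^c-\kappa(\theta_{n,h}))$. Both of your routes need control of $F_n(h,\beta)$ at $\beta\neq\beta^{*,c}$. Epi-convergence needs pointwise convergence on a neighborhood of $\beta^{*,c}$. The primal correction needs $\gamma\mapsto E_{Q_{n,h}}[\ell^*(T_{n,h})e^{(\beta^{*,c}+\gamma)'W_{M,n,h}}W_{M,n,h}]$ to be finite near $0$ and to converge together with its Jacobian.

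Nothing in the hypotheses supplies this. Assumption \ref{assu: UI} is a statement at the single point $\beta^{*,c}$. The moment-convergence hypothesis of Theorem \ref{thm: Constrained LAM} concerns polynomial rather than exponential moments. Your H\"older extension would need uniformly bounded moments of order $p>1$ of $\ell^*(T_{n,h})e^{\beta^{*,c\prime}W_{M,n,h}}$, plus exponential moments of $W_{M,n,h}$; neither is assumed.

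Worse, for $M\ge3$ the limit objective itself can be infinite arbitrarily close to its minimizer. Since $Y_h$ is a nondegenerate Gaussian, $F(h,\beta)=+\infty$ whenever the degree-$M$ part of $\beta'W_{M,h}$ is positive in some direction. For odd $M$, this forces any finite-valued minimizer to have vanishing degree-$M$ coefficients, which puts it on the boundary of the effective domain. For every $M\ge3$, the optimal squared-loss rule of Proposition \ref{prop: optim estimator square loss} has a Gaussian worst case. Its $\beta^{*,c}$ therefore has zero coefficients on all monomials of degree at least three and again sits on that boundary. At such a point, $F(h,\cdot)$ is finite on no neighborhood of $\beta^{*,c}$. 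So neither your epi-convergence step nor the interior first-order conditions and nonsingular Jacobian behind the implicit-function correction are available.

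The missing idea is that this direction needs no integrability at all. It is a lower-semicontinuity property under weak convergence, which the paper proves as Lemma \ref{lem: fixed-h-liminf} (already used for Theorem \ref{thm: Constrained LAM}) and simply reapplies. This works because $(T_{n,h},W_{M,n,h})$ converges in distribution under $Q_{n,h}$ to $(\delta^c(X,Y;K,\Sigma)-Kh,W_{M,h})$. That convergence follows from Proposition \ref{prop: Convergence - Estimation Error}, Assumption \ref{assu: smoothness}, and the moment-convergence hypothesis, which handles the centering in $W_{M,n,h}$.

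The argument is by contradiction: suppose that along a subsequence $F_n(h,\beta_n)\le C$ for some $C<\inf_\beta F(h,\beta)$.
\begin{itemize}
\item If $\{\beta_n\}$ is bounded, pass to $\beta_n\to\beta$. Take bounded Lipschitz $g_r\uparrow\ell^*$ and continuous cutoffs $\chi_r\uparrow1$ vanishing outside $\{\|w\|\le r+1\}$. Then $E_{Q_{n,h}}[g_r(T_{n,h})\chi_r(W_{M,n,h})e^{\beta_n'W_{M,n,h}}]\le F_n(h,\beta_n)\le C$, and by weak convergence this converges to the same expression under $Q_h$ at $\beta$. Monotone convergence in $r$ then gives $F(h,\beta)\le C$, a contradiction.
\item If $\|\beta_n\|\to\infty$ with $u_n=\beta_n/\|\beta_n\|\to u$, then $u'W_{M,h}$ is a nonzero mean-zero polynomial of a nondegenerate Gaussian. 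Hence $Q_h\{u'W_{M,h}>\eta\}>2\varepsilon$ for some $\eta,\varepsilon>0$. The Portmanteau theorem gives $Q_{n,h}\{u_n'W_{M,n,h}>\eta\}\ge\varepsilon$ eventually. Since $\ell^*\ge1$, this yields $F_n(h,\beta_n)\ge\varepsilon e^{\eta\|\beta_n\|}\to\infty$, again a contradiction.
\end{itemize}
The second case is your coercivity observation, applied uniformly along the sequence rather than only to the limit objective. With this in place of your lower-bound step, the rest of your proof goes through as written.
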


\section{Adaptive Decision Rules}\label{sec: adaptive rules}

While the results in Section \ref{sec: optimal rules} characterize optimal decision rules under constrained multiplier risk, the resulting rules depend on the misspecification-aversion parameter $\lambda$.  While dependence of optimal rules on $\lambda$ is natural from a theoretical perspective (after all, a researcher entirely unconcerned with misspecification already knows the MLE is optimal), from a practical perspective it introduces a free parameter that a researcher interested in applying our methods must choose.  In this section, we follow \cite{ArmstrongKlineSun2025} and examine, for a special case of our setting, whether there exist simple estimators that perform reasonably for many different values of $\lambda$.

In particular, we consider the special case of squared error loss $\ell=\|u\|^2$ where both the parameter $\theta$ and the moment condition $\psi$ are scalar, where we normalize $I_0=1$ and $\Psi=-1$, so the limit problem is fully described by $\Omega=\Var(Y)$.\footnote{As in \cite{ArmstrongKlineSun2025}, one could more broadly interpret this setting as the limit problem when we are interested in a scalar target parameter and would like to combine an efficient but potentially misspecified estimator (represented by $X$) and a less efficient but more robust estimator (represented by $Y$).  The distinction between our analysis in this section and that of \cite{ArmstrongKlineSun2025} then stems from their focus on a version of constraint preferences, vs. ours on constrained multiplier preferences.} We take $M=\infty,$ corresponding to the case where the researcher thinks misspecification does not affect the moments at all.  Thus, the sole remaining decision for the researcher is the (unavoidable) choice of which moment function $\psi$ they think remains valid under misspecification.

To search for procedures which perform well across different values of $\lambda$, let $R_{\infty}^{\lambda}(\delta)$ denote the constrained multiplier risk of $\delta$ under misspecification-aversion parameter $\lambda,$
\[
R_{\infty}^{\lambda}(\delta)=\sup_{h\in\mathbb{R},\, P\in\mathcal{P}_{h}^\infty}\left\{E_{P}\left[\ell\left(\delta\left(X,Y\right)-Kh\right)\right]-\lambda \KL(P\|Q_h)\right\}.
\]
Following \cite{ArmstrongKlineSun2025} we consider the \emph{adaptive regret}
criterion \[
A_{\infty}(\delta) = \sup_{\lambda \in \Lambda} \frac{R_{\infty}^{\lambda}(\delta)}{\min_{\tilde{\delta}}R_{\infty}^{\lambda}(\tilde{\delta})},
\] 
which compares the performance of the rule $\delta,$ across values of $\lambda,$ to the performance of the family of $\lambda$-by-$\lambda$ optimal rules. By construction $A_{\infty}(\delta)\ge1,$ where if the adaptive regret is close to this lower bound it tells us that the rule $\delta$ is ``nearly'' optimal in a proportional sense uniformly across $\lambda$ values, and thus that if we opt to use $\delta$ rather than taking a stand on the ``correct'' $\lambda$ and using the resulting rule, the price of doing so (measured as the proportional increase in constrained multiplier risk) is not very high.

By Proposition \ref{prop: optim estimator square loss}, we know that to solve for the $\lambda$-specific minimized risk $\min_{\tilde{\delta}}R_{\infty}^{\lambda}(\tilde{\delta}),$ it suffices to consider linear equivariant rules, greatly facilitating computation.  Analogously, in the broader search for adaptive rules, we limit attention to equivariant rules $\delta^E\in\mathcal{D}^E$.  Let us normalize $I_0=1$ and $\Psi=-1,$ and again define $Z^I=Y+\Psi X$ as the limit experiment analog of the sample average moments evaluated at the MLE, or (equivalently) as the difference between the GMM and ML estimators.  One can show that $Z^I$ is a maximal invariant in the limit experiment, and thus that any equivariant decision rule can be written as 
\[
\delta^E(X,Y)=X+\gamma(Z^I).
\]
Thus, the problem of finding an equivariant estimator with a small adaptive risk is equivalent to picking a ``good'' $\gamma$. 

We consider two simple parameterizations of $\gamma(\cdot)$ with a single tuning parameter, both of which \cite{ArmstrongKlineSun2025} find perform well according to their criterion. The first is the soft-thresholding estimator with \[
\gamma_{ST,\tau}(Z^I) = \max\left\{|Z^I| - \tau, 0\right\}\mathrm{sgn}(Z^I),
\]
which is closely related to the LASSO estimator \citep{Tibshirani1996}.
The second is the adaptive empirical risk minimization (ERM) estimator \citep{Magnus2002,deChaisemartin2020} with 
\[
\gamma_{ERM, \tau}(Z^I) = \frac{(Z^I)^2}{(Z^I)^2+\tau} Z^I.
\]
In each class, we choose the tuning parameter $\tau$ to minimize the adaptive risk $A_\infty$.
To benchmark the performance of these simple parametric classes, we compare their adaptive risk to that of a flexible specification of $\gamma.$  In particular, motivated by Proposition \ref{prop: convexity}, we parameterize $\gamma$ as a linear spline (flattened for the extreme values of $Z^I$) and solve numerically for the optimal parameters via convex optimization.\footnote{Specifically, the maximization which defines $A_\infty$ preserves the convexity established by Proposition \ref{prop: convexity}, and we approximate the maximum over $\lambda$ by maximization over a finite, evenly spaced grid of $\log(\lambda)$ values from [-3,6].}  

Figure \ref{fig:adaptive-Omega2}(a) shows the resulting constrained multiplier risk functions for the $\lambda$-by-$\lambda$ optimal estimator, the adaptive linear spline estimator, and the two simple estimators, all for the case of $\Omega=2$.  As expected, for small $\lambda$ the optimal risk is close to that of GMM (i.e. $\Omega=2$) while for large $\lambda$ it is close to that of MLE (i.e. $I_0=1$).  In between, we see that the linear spline estimator has better performance than the simple estimators over an intermediate range of $\lambda$ values, but performs quite similarly for large and small $\lambda$.  Figure \ref{fig:adaptive-Omega2}(b) illustrates adaptive performance more directly, plotting the risk ratio $\frac{R_{\infty}^{\lambda}(\delta)}{\min_{\tilde{\delta}}R_{\infty}^{\lambda}(\tilde{\delta})}$ as a function of $\lambda$.  Here we see that, consistent with the findings of \cite{ArmstrongKlineSun2025} for their optimality criterion, the simple soft-thresholding and ERM estimators perform nearly as well as the more complicated linear spline procedure, with an adaptive risk close to 1.5 for all procedures considered. Whether a 50\% increase in constrained multiplier risk is an acceptable tradeoff for eliminating dependence on $\lambda$ seems to depend on one's priorities, but we find it encouraging that near-optimal adaptation is possible using simple combinations of the ML and GMM estimators.

Figure \ref{fig:adaptive-Omega6} presents similar results when $\Omega = 6$. The simple estimators perform comparably to the linear spline. However, in the intermediate range of $\lambda$, the relative performance is less clear because we optimize only the maximum of the regret criterion.

\begin{figure}[htbp]
    \centering

    %==================== Panel A ====================
    \begin{subfigure}[t]{\textwidth}
        \centering
        \setcounter{innersubfigure}{0}

        \begin{minipage}[t]{0.48\linewidth}
            \centering
            \includegraphics[width=\linewidth]{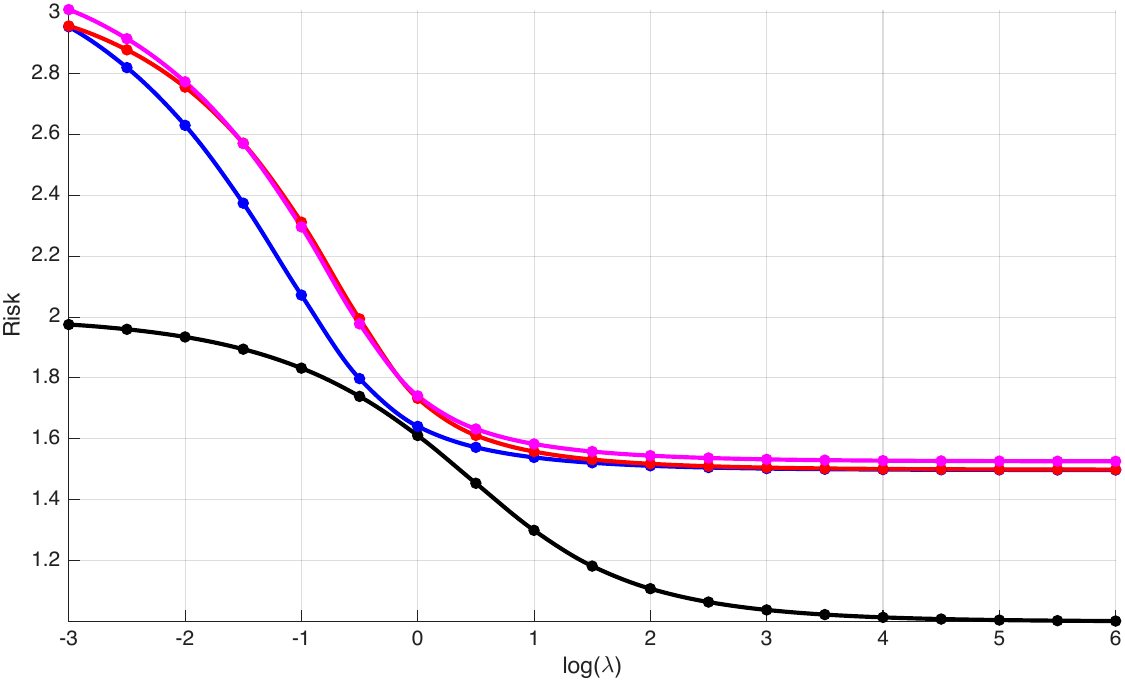}
            \refstepcounter{innersubfigure}
            \caption*{(\theinnersubfigure)\ Risk of Adaptive Estimators}
            \label{fig:risk-Omega2}
        \end{minipage}
        \hfill
        \begin{minipage}[t]{0.48\linewidth}
            \centering
            \includegraphics[width=\linewidth]{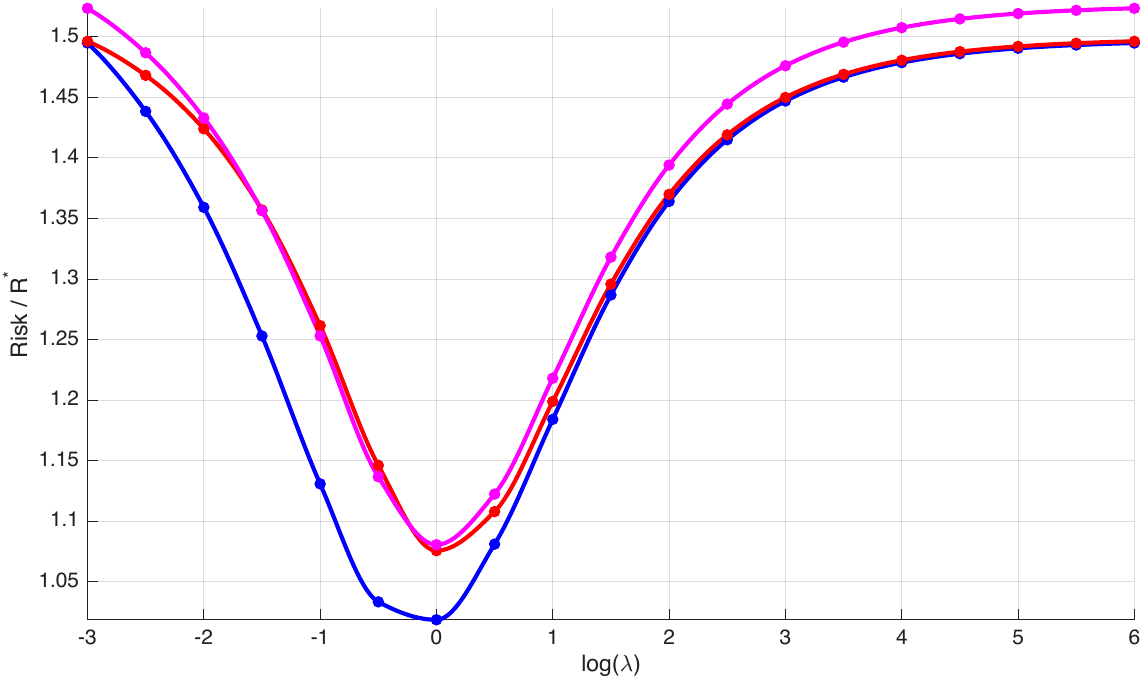}
            \refstepcounter{innersubfigure}
            \caption*{(\theinnersubfigure)\ Risk Ratio against Pointwise Optimum}
            \label{fig:risk-ratio-Omega2}
        \end{minipage}

        \caption{$\Omega = 2$}
        \label{fig:adaptive-Omega2}
    \end{subfigure}

    \vspace{1em}

    %==================== Panel B ====================
    \begin{subfigure}[t]{\textwidth}
        \centering
        \setcounter{innersubfigure}{0}

        \begin{minipage}[t]{0.48\linewidth}
            \centering
            \includegraphics[width=\linewidth]{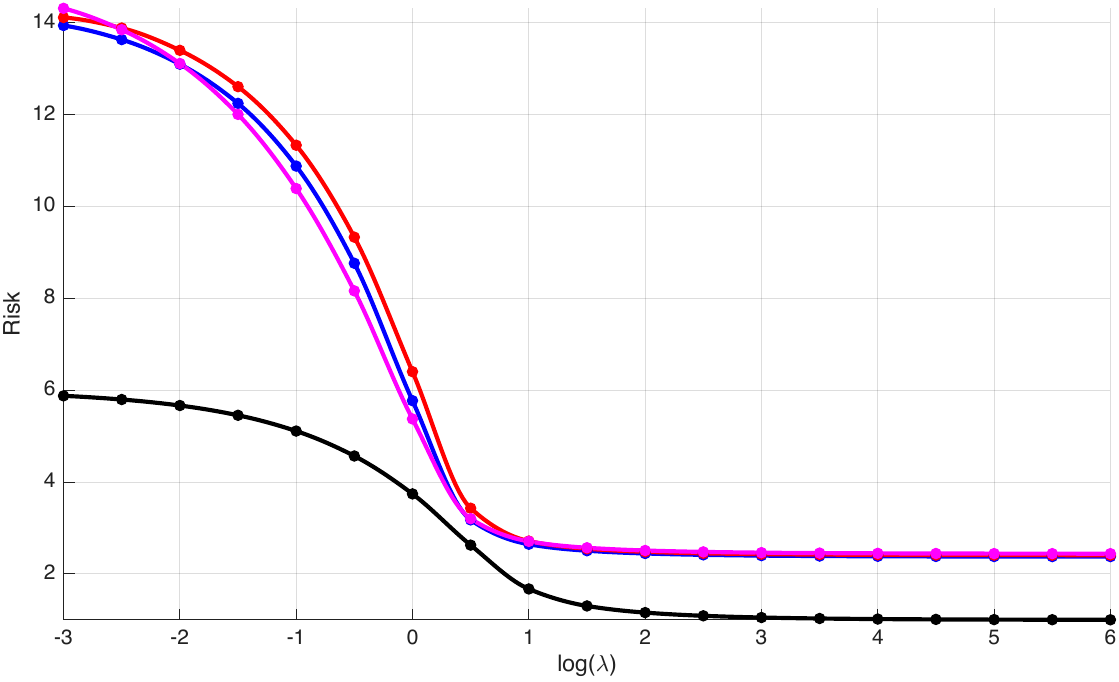}
            \refstepcounter{innersubfigure}
            \caption*{(\theinnersubfigure)\ Risk of Adaptive Estimators}
            \label{fig:risk-Omega6}
        \end{minipage}
        \hfill
        \begin{minipage}[t]{0.48\linewidth}
            \centering
            \includegraphics[width=\linewidth]{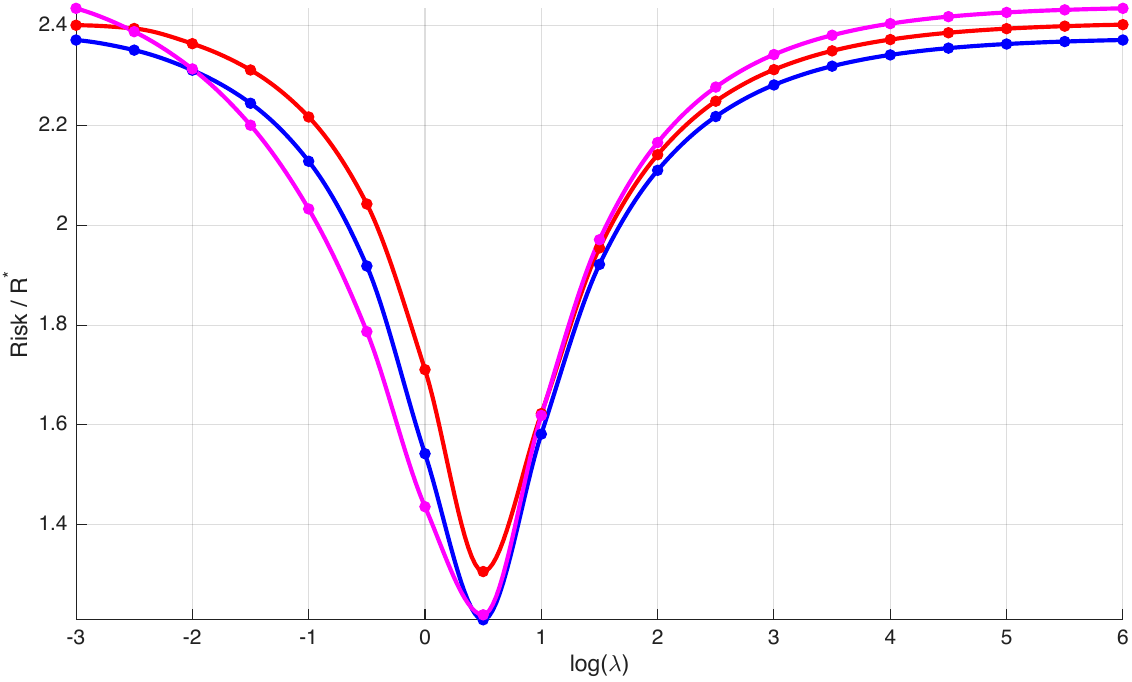}
            \refstepcounter{innersubfigure}
            \caption*{(\theinnersubfigure)\ Risk Ratio against Pointwise Optimum}
            \label{fig:risk-ratio-Omega6}
        \end{minipage}

        \caption{$\Omega = 6$}
        \label{fig:adaptive-Omega6}
    \end{subfigure}
    \begin{subfigure}[t]{\textwidth}
        \centering
        \includegraphics{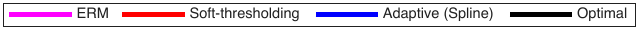}
    \end{subfigure}
    \caption{Adaptively Optimal, Soft-thresholding and ERM Estimators}
    \label{fig:adaptive-main}
\end{figure}

\bibliographystyle{econ-econometrica}
\bibliography{Misspec}

\appendix

\section{Axioms for Misspecification-Averse Preferences}\label{sec: appendix axioms}

We begin by presenting a series of axioms and corresponding representation theorems adapted to our loss function setting, which characterize the preferences \eqref{eq:CV-rep} and \eqref{eq:variational-rep}.  While the results in this appendix largely follow from arguments in \cite{maccheroni2006ambiguity} and \cite{CerriaViolioetal24}, we hope that our framing of the choice problem in terms of loss functions may be helpful for readers with a background in econometrics and statistics.

\subsection{Basic Axioms}
Throughout Appendix \ref{sec: appendix axioms}, let $L_\theta=L(\theta,\cdot): \mathcal{X} \rightarrow \mathbb{R}$ be the loss function $L$ conditional on $\theta$. Although the domain of $L_\theta$ is $\mathcal{X}$, we will abuse notation and identify it with the function $L_\theta \in \mathcal{L}$ with domain $\Theta \times \mathcal{X}$ satisfying: $L_\theta(\theta',x)=L(\theta,x)$ for all $\theta' \in \Theta$ and $x \in \mathcal{X}$. Note that we may perform this identification for any bounded, measurable function $\mathcal{X} \rightarrow \mathbb{R}$: letting $\mathcal{L}(\mathcal{X})$ be the space of such functions, we observe that any binary relation $\succsim$ on all of $\mathcal{L}$ induces a binary relation on $\mathcal{L}(\mathcal{X})$ simply by restricting $\succsim$ to $\mathcal{L}(\mathcal{X})$.

First, we establish how the conditional preferences $\{\succsim_\theta\}_{\theta \in \Theta}$ may be derived from $\succsim_\Theta$. Suppose that we only take $\succsim_\Theta$ as primitive. For each $\theta \in \Theta$, define the binary relation $\succsim_\theta$ on $\mathcal{L}$ as follows: for each $L,L' \in \mathcal{L}$, say that
\[
L \succsim_\theta L' \iff L_\theta k \succsim_\Theta L_\theta' k \quad \forall k \in \mathbb{R}
\]
where $L_\theta k \in \mathcal{L}$ is the loss function defined as: $(L_\theta k)(\theta',x)=L(\theta,x)$ if $\theta'=\theta$, and $(L_\theta k)(\theta',x)=k$ else.

\begin{lemma}\label{lem:cond_prefs}
Let the conditional preferences $\{\succsim_\theta\}_{\theta \in \Theta}$ be defined as above. Assume $\succsim_\Theta$ is represented by $V_\Theta(L)=\sup_{\theta \in \Theta} V_\theta(L)$ for some set of functions $\{V_\theta\}_{\theta \in \Theta}$ satisfying: for each $\theta \in \Theta$, $V_\theta(L)=V_\theta(L_\theta)$ for all $L \in \mathcal{L}$, and $V_\theta(k)=k$ for all $k \in \mathbb{R}$. Then, for each $\theta \in \Theta$, $V_\theta$ represents $\succsim_\theta$.
\end{lemma}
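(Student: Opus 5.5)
The plan is to show directly that, for every $L,L'\in\mathcal{L}$, $L\succsim_\theta L'$ holds exactly when $V_\theta(L)\le V_\theta(L')$. By definition $L\succsim_\theta L'$ means $L_\theta k\succsim_\Theta L'_\theta k$ for all $k\in\mathbb{R}$, which is the same as $V_\Theta(L_\theta k)\le V_\Theta(L'_\theta k)$ for all $k$. So the first task is to compute $V_\Theta(L_\theta k)=\sup_{\theta'}V_{\theta'}(L_\theta k)$. At $\theta'=\theta$, the function $L_\theta k$ restricted to $\theta$ is $L_\theta$, and the hypothesis $V_\theta(M)=V_\theta(M_\theta)$ gives $V_\theta(L_\theta k)=V_\theta(L_\theta)=V_\theta(L)$. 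For $\theta'\neq\theta$, the section $(L_\theta k)_{\theta'}$ is the constant $k$, so $V_{\theta'}(L_\theta k)=V_{\theta'}(k)=k$. Therefore
\[
V_\Theta(L_\theta k)=\max\{V_\theta(L),k\}
\]
whenever $\Theta$ has at least two points. If $\Theta=\{\theta\}$, then simply $V_\Theta(L_\theta k)=V_\theta(L)$.

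The two directions then follow from this formula. If $V_\theta(L)\le V_\theta(L')$, then $\max\{V_\theta(L),k\}\le\max\{V_\theta(L'),k\}$ for every $k$, so $L\succsim_\theta L'$. For the converse, suppose $L\succsim_\theta L'$ and pick any $k\le\min\{V_\theta(L),V_\theta(L')\}$. Then $\max\{V_\theta(L),k\}=V_\theta(L)$ and $\max\{V_\theta(L'),k\}=V_\theta(L')$, so the inequality for this $k$ yields $V_\theta(L)\le V_\theta(L')$. Such a $k$ exists because both values are real numbers. To confirm this, note that $L$ is bounded, say $a\le L\le b$. If $V_\theta$ is monotone (as in the variational representation) then $a\le V_\theta(L)\le b$. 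Otherwise finiteness is implicit in $V_\Theta$ representing the preference, and I would state this explicitly.

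The only step needing care is the identification of sections, since the hypothesis $V_{\theta'}(M)=V_{\theta'}(M_{\theta'})$ is applied with $M=L_\theta k$. I will spell out that $(L_\theta k)_{\theta'}$ is the function $x\mapsto k$, identified with the constant loss $k$, which allows the normalization $V_{\theta'}(k)=k$ to be used. The degenerate singleton case is handled separately as above. No step appears to be a serious obstacle.
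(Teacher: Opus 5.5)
Your proof is correct and follows the paper's argument: both compute $V_\Theta(L_\theta k)=\max\{V_\theta(L),k\}$ from the section and normalization hypotheses, then get one direction from monotonicity of $x\mapsto\max\{x,k\}$ and the other by choosing $k$ below the relevant values. Your version also spells out the section identification and the singleton-$\Theta$ case, and it uses the loss-convention inequality $\le$ throughout, whereas the paper's displayed chain is written with $\ge$.
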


\paragraph{Proof of Lemma \ref{lem:cond_prefs}.} By definition of $\succsim_\theta$ and by assumption,
\begin{align*}
    L\succsim_\theta L' \iff V_\Theta(L_\theta k)\geq V_\Theta(L_\theta' k) \quad \forall k \in \mathbb{R} \\
    \iff \max\{V_\theta(L_\theta),k\}\geq \max\{V_\theta(L_\theta'),k\} \quad \forall k\in \mathbb{R} \\
    \iff V_\theta(L_\theta) \geq V_\theta(L_\theta')
\end{align*}
where the backwards direction of the last iff follows because the function $x \mapsto \max\{x,k\}$ is increasing for any fixed $k \in \mathbb{R}$, and the forwards direction of the last iff follows by assuming $V_\theta(L_\theta)<V_\theta(L_\theta')$ and choosing $k<V_\theta(L_\theta)$. \qed

\medskip

Note that the assumption in Lemma \ref{lem:cond_prefs} is satisfied by all the representations we consider. In this sense, it is without loss of generality to take as primitive the family of preferences $\{\succsim_\Theta\} \cup \{\succsim_\theta: \theta \in \Theta\}$. We do this henceforth. We proceed by imposing a set of basic regularity conditions on $\succsim_\Theta$ and $\{\succsim_\theta\}_{\theta \in \Theta}$. Recall that these are binary relations on $\mathcal{L}$, which is the set of bounded, Borel measurable functions $L:\Theta \times \mathcal{X} \rightarrow \mathbb{R}$.

\begin{axiom}[$\theta$-Relevance]\label{ax:7}
For all $\theta \in \Theta$ and all $L,L' \in \mathcal{L}$,
\[
L \succsim_\theta L' \iff L_\theta \succsim_\theta L_\theta'
\]
\end{axiom}

$\theta$-Relevance ensures that $\succsim_\theta$ only cares about loss functions conditional on $\theta$.

\begin{axiom}[Nontrivial Weak Order]\label{ax:1}
The relation $\succsim_\Theta$ is complete, transitive, and nontrivial.  For each $\theta\in\Theta$, the relation $\succsim_\theta$ is complete, transitive, and nontrivial.
\end{axiom}

Completeness requires that the researcher be able to rank any two loss functions.  Transitivity is a standard coherence requirement. Nontriviality rules out the uninteresting case where the researcher is indifferent between all loss functions.

\begin{axiom}[Monotonicity]\label{ax:2}
If $L(\theta',x)\le L'(\theta',x)$ for all $(\theta',x)\in\Theta\times\mathcal{X}$, then $L\succsim_\Theta L'$. For each $\theta\in\Theta$, if $L(\theta,x)\le L'(\theta,x)$ for all $x \in \mathcal{X}$, then $L\succsim_\theta L'$.
\end{axiom}

Monotonicity requires that the researcher prefer lower losses: if $L$ yields weakly lower loss than $L'$ in every state, then the researcher must weakly prefer $L$.\footnote{Note that Monotonicity implies: for each $\succsim \in \{\succsim_\Theta,\{\succsim_\theta\}_{\theta\in\Theta}\}$, $\succsim$ restricted to $\mathbb{R}$ satisfies: $r \leq r'$ if and only if $r \succsim r'$. In particular, this implies \cite{maccheroni2006ambiguity}'s Axiom A.7 (Unboundedness).}  This is a weak dominance condition satisfied by any sensible loss-based criterion.

Next, we state several continuity axioms. Such axioms are standard to derive utility representations, but we view them as primarily technical conditions. Our first continuity axiom is imposed on $\succsim_\Theta$.

\begin{axiom}[$\Theta$-Mixture Continuity]
For $L,L',L''\in\mathcal{L}$, the sets
$\{\alpha\in[0,1]:\alpha L + (1-\alpha)L'\succsim_\Theta L''\}$ and
$\{\alpha\in[0,1]:L''\succsim_\Theta \alpha L + (1-\alpha)L'\}$
are closed.
\end{axiom}

Mixture Continuity ensures that small perturbations to mixtures of fixed loss functions do not lead to discontinuous jumps in the preference ranking.  Here and throughout, the mixture $\alpha L + (1-\alpha)L'$ is defined pointwise: $[\alpha L + (1-\alpha)L'](\theta,x) = \alpha L(\theta,x) + (1-\alpha)L'(\theta,x)$.  Such mixtures arise naturally when the researcher randomizes between decision rules: if they use $\delta$ with probability $\alpha$ and $\delta'$ otherwise, independent of the data, then the induced loss is $\alpha L_\delta + (1-\alpha)L_{\delta'}$.

Our second notion of continuity, imposed on $\{\succsim_\theta: \theta \in \Theta\}$, is stronger and requires a topology on the space of loss functions. We endow $\mathcal{L}$ with the $\sup$-norm topology.

\begin{axiom}[$\theta$-Continuity]\label{ax:ten}
For each $L \in \mathcal{L}$, the sets
\[
\{L' \in \mathcal{L}: L'\succsim_\theta L\} \quad \text{and} \quad \{L' \in \mathcal{L}: L\succsim_\theta L'\}
\]
are closed.
\end{axiom}

$\theta$-Continuity ensures that small perturbations (in the sense of $\sup$-norm) of loss functions do not imply discontinuous jumps in the preference ranking. Note that, since $\alpha_n \to \alpha$ implies $\alpha_n L+(1-\alpha_n)L' \to \alpha L+(1-\alpha)L'$, $\theta$-Continuity implies $\theta$-Mixture Continuity.

\begingroup
\addtocounter{axiom}{-1}
\renewcommand{\theaxiom}{\arabic{axiom}'}
\begin{axiom}[$\theta$-Mixture Continuity]
For each $\theta \in \Theta$ and $L,L',L''\in\mathcal{L}$, the sets
$\{\alpha\in[0,1]:\alpha L + (1-\alpha)L'\succsim_\theta L''\}$ and
$\{\alpha\in[0,1]:L''\succsim_\theta \alpha L + (1-\alpha)L'\}$
are closed.
\end{axiom}
\endgroup

\subsection{Axiomatization of Variational Preferences}

We next present axioms that imply a particular functional form for the conditional preferences $\succsim_\theta$. As we discuss in the main text, this class of \emph{variational preferences} nests a number of important cases previously discussed in the literature on estimation and decision-making with misspecification concerns. 

The first axiom restricts how $\succsim_\theta$ responds to mixing with constant losses.

\begin{axiom}[Weak Certainty Independence]\label{ax:4}
For all $\theta\in\Theta$, $L,L'\in\mathcal{L}$, $r,r'\in\mathbb{R}$, and $\alpha\in(0,1)$,
\[\alpha L+(1-\alpha)r \succsim_\theta \alpha L'+(1-\alpha)r \Rightarrow \alpha L+(1-\alpha)r' \succsim_\theta \alpha L'+(1-\alpha)r'\]
\end{axiom}

Weak Certainty Independence requires that preferences between loss functions mixed with constant acts not depend on the value of the constant act.

\begin{axiom}[Uncertainty Aversion]\label{ax:5}
For all $\theta\in\Theta$, $L,L'\in\mathcal{L}$, and $\alpha\in(0,1)$,
\[L\sim_\theta L' \Rightarrow \alpha L+(1-\alpha)L'\succsim_\theta L\]
\end{axiom}

Uncertainty Aversion requires that the researcher have a weak preference for hedging: if they are indifferent between two loss functions, they must weakly prefer a mixture of the two.  This captures the intuition that diversification (e.g. via randomization) is valuable when facing uncertainty about the data generating process.

The remaining axioms imply certain properties of the cost function. First, we require that $\succsim_\theta$ always finds $Q_\theta$ at least as plausible as any other DGP.

\begin{axiom}\label{ax:grounded at Q}
For all $\theta \in \Theta$, $\succsim_\theta$ is more ambiguity averse than $\succsim_{\theta,Q_\theta}^{SEU}$.
\end{axiom}

Let $\Delta^F(\mathcal{X})$ be the set of finitely additive Borel probability measures on $\mathcal{X}$, endowed with the weak$^*$-topology. Recall that $\Delta(\mathcal{X})\subseteq \Delta^F(\mathcal{X})$ is the set of countably additive Borel probability measures on $\mathcal{X}$. For a set $C\subseteq \Delta^F(\mathcal{X})$, let $\Bar{C}$ denote its weak$^*$-closure. Note that, under the axioms we have imposed so far, for each $\theta \in \Theta$ and $L \in \mathcal{L}$, there exists a unique $CE_\theta(L) \in \mathbb{R}$ such that $CE_\theta(L) \sim_\theta L$.\footnote{The argument is exactly analogous to the proof of Step 1 of Theorem \ref{thm:CD-rep}(i).} Let $\mathcal{L}_0$ denote the set of simple, measurable loss functions. We will show that the following axiom ensures that the variational representation may be written only in the language of countably additive probabilities.

\begin{axiom}\label{ax:14}
\normalfont For all $\theta \in \Theta$ and $t\geq 0$,
\[
\left\{P \in \Delta^F(\mathcal{X}): \sup_{L \in \mathcal{L}_0} \left(\int L_\theta \ dP-CE_\theta(L) \right)\leq t\right\}=\overline{\left\{P \in \Delta(\mathcal{X}): \sup_{L \in \mathcal{L}_0} \left(\int L_\theta \ dP-CE_\theta(L) \right)\leq t\right\}}
\]
\end{axiom}

Finally, the following axiom ensures that the cost function is weakly lower semicontinuous.

\begin{axiom}\label{ax:15}
For all $\theta \in \Theta$ and $t\geq 0$, the set
\[
\left\{P \in \Delta(\mathcal{X}): \sup_{L \in \mathcal{L}_0} \left(\int L_\theta \ dP-CE_\theta(L) \right)\leq t\right\}
\]
is weakly closed.
\end{axiom}

\paragraph{Remark.} We may also endow $\Delta(\mathcal{X})$ with the topology of weak convergence. Under Axioms \ref{ax:7}--\ref{ax:5}, note that Axioms \ref{ax:14} and \ref{ax:15} are implications of Monotone Continuity, since by Theorem 13 of \cite{maccheroni2006ambiguity}, Monotone Continuity implies that the LHS set of Axiom \ref{ax:14} is a weakly compact (and hence weakly closed) subset of $\Delta(\mathcal{X})$.\footnote{More precisely, letting $c^*(P)=\sup_{L \in \mathcal{L}_0} \left(\int L_\theta \ dP-CE_\theta(L)\right)$, we see that Monotone Continuity implies that $\{P \in \Delta^F(\mathcal{X}): c^*(P)\leq t\}=\{P \in \Delta(\mathcal{X}): c^*(P)\leq t\}$, so Axiom \ref{ax:14} follows by taking the weak$^*$-closure of both sides.}

\medskip

\begin{defn}
The preference $\succsim_\theta$ has a \emph{variational representation} on $\mathcal{L}$ if there exists a convex, weak$^*$ lower-semicontinuous function $c_\theta:\Delta^F(\mathcal{X})\to[0,\infty]$ with $\inf_{P\in\Delta^F(\mathcal{X})}c_\theta(P)=0$ such that
\begin{equation}\label{eq:variational-rep 2}
V_\theta(L)=\max_{P\in\Delta^F(\mathcal{X})}\left\{\int L_\theta(x)dP(x)-c_\theta(P)\right\}
\end{equation}
represents $\succsim_\theta$ on $\mathcal{L}$, in the sense that $L\succsim_\theta L'\iff V_\theta(L)\le V_\theta(L')$ for all $L,L' \in \mathcal{L}$.
\end{defn}

Note that the $\max$ in Equation (\ref{eq:variational-rep 2}) is attained because $\Delta^F(\mathcal{X})$ is weak$^*$-compact and the map
\[
P \mapsto \int L_\theta \ dP-c_\theta(P)
\]
is the sum of a real-valued, weak$^*$ continuous function $P \mapsto \int L_\theta \ dP$ (since $L_\theta$ is bounded and measurable) and a weak$^*$ upper-semicontinuous function $P \mapsto -c_\theta(P)$.

Our main result in this section establishes that, under the axioms we have imposed thus far, the conditional preferences $\succsim_\theta$ have a variational representation whose cost function $c_\theta$ satisfies certain appealing properties. Many of the representation results in the main text refine this one by characterizing particular functional forms of the cost function $c_\theta$.

\begin{thm}\label{thm:D-rep}
\begin{itemize}
    \item[(i)] The preference $\succsim_\theta$ satisfies Axioms \ref{ax:7}--\ref{ax:5} if and only if it has a variational representation on $\mathcal{L}$.
    \item[(ii)] If $\succsim_\theta$ has a variational representation on $\mathcal{L}$, it is unique and given by:
    \[
    c_\theta(P)=\sup_{L \in \mathcal{L}_0} \left(\int L_\theta \ dP-CE_\theta(L) \right)
    \]
    \item[(iii)] Suppose $\succsim_\theta$ satisfies Axioms \ref{ax:7}--\ref{ax:5}. It additionally satisfies Axiom \ref{ax:grounded at Q} if and only if $c_\theta(Q_\theta)=0$.
    \item[(iv)] Suppose $\succsim_\theta$ satisfies Axioms \ref{ax:7}--\ref{ax:5}. It additionally satisfies Axiom \ref{ax:14} if and only if 
    \[
    \{P \in \Delta^F(\mathcal{X}): c_\theta(P)\leq t\}=\overline{\{P \in \Delta(\mathcal{X}): c_\theta(P)\leq t\}} \quad \forall t\geq 0
    \]
    In this case, for $V_\theta$ as defined in Equation (\ref{eq:variational-rep 2}),
    \[
    V_\theta(L)=\sup_{P \in \Delta(\mathcal{X})} \left\{\int L_\theta(x) \ dP(x)-c_\theta(P)\right\} \quad \forall L \in \mathcal{L}
    \]
    In particular for $L=0$,
    \[
    \inf_{P \in \Delta(\mathcal{X})} c_\theta(P)=0
    \]
    \item[(v)] Suppose $\succsim_\theta$ satisfies Axioms \ref{ax:7}--\ref{ax:5}. It additionally satisfies Axiom \ref{ax:15} if and only if the restriction of $c_\theta$ to $\Delta(\mathcal{X})$ is weakly lower-semicontinuous.
\end{itemize}
\end{thm}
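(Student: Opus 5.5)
The plan is to adapt the Maccheroni–Marinacci–Rustichini (2006) variational representation to our loss-function domain, working throughout with the restriction of $\succsim_\theta$ to $\mathcal{L}(\mathcal{X})$, which is legitimate by $\theta$-Relevance (Axiom \ref{ax:7}). Our losses are "bads," so the sign convention is flipped: we pass to utility acts $f=-L_\theta$. Under this map, Monotonicity, Weak Certainty Independence, Uncertainty Aversion, and continuity become exactly the MMR axioms. The sup of expected loss minus cost then corresponds to the inf of expected utility plus cost.

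For part (i), sufficiency proceeds in three steps.
\begin{itemize}
\item First, show that every $L$ has a unique certainty equivalent $CE_\theta(L)$. Monotonicity brackets $L$ between the constants $\inf L_\theta$ and $\sup L_\theta$. $\theta$-Mixture Continuity then gives closed upper and lower contour sets on $[0,1]$ for mixtures of those constants, so connectedness yields an indifferent constant. Uniqueness follows from strict monotonicity on constants, which comes from Monotonicity plus nontriviality.
\item Second, define $I(L)=CE_\theta(L)$ on simple losses $\mathcal{L}_0$. Show $I$ is monotone and normalized. Weak Certainty Independence gives constant-additivity, i.e. $I(\alpha L+(1-\alpha)r)=I(\alpha L)+(1-\alpha)r$ (this is MMR Lemma 28). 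Uncertainty Aversion gives convexity of $I$. $I$ is then Lipschitz in sup norm, and extends uniquely to $\mathcal{L}$; $\theta$-Continuity guarantees that the extension still represents $\succsim_\theta$.
\item Third, apply Fenchel–Moreau duality on $B(\mathcal{X})$, whose dual is the space of finitely additive measures. Monotonicity and constant-additivity force the conjugate's domain into $\Delta^F(\mathcal{X})$. This gives $I(L)=\max_{P}\{\int L_\theta dP-c_\theta(P)\}$ with $c_\theta(P)=\sup_{L}(\int L_\theta dP-I(L))$, which is convex and weak$^*$ lsc. Evaluating at $L=0$ gives $\inf c_\theta=0$.
\end{itemize}
Necessity is routine verification.

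Part (ii) uses the fact that convex lsc conjugates are unique: the Fenchel–Moreau biconjugate recovers $c_\theta$ from $I$. The supremum may be taken over $\mathcal{L}_0$ by sup-norm density and Lipschitz continuity.

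For part (iii), note that $L\succsim_\theta r$ means $I(L)\le r$. So Axiom \ref{ax:grounded at Q} is equivalent to $\int L_\theta dQ_\theta\le I(L)$ for all $L$, i.e. $\sup_L(\int L_\theta dQ_\theta-I(L))\le 0$. This says $c_\theta(Q_\theta)\le 0$, hence $c_\theta(Q_\theta)=0$.

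Parts (iv) and (v) are essentially immediate translations. By (ii), the set in Axiom \ref{ax:14} is exactly the sublevel set $\{c_\theta\le t\}$, so the axiom is precisely the displayed closure identity. Given that identity, the maximizer $P^*\in\Delta^F$ lies in the weak$^*$ closure of countably additive $P$ with $c_\theta(P)\le c_\theta(P^*)$. Since $P\mapsto\int L_\theta dP$ is weak$^*$ continuous, the max over $\Delta^F$ equals the sup over $\Delta$. Similarly, weak lsc of $c_\theta$ on $\Delta(\mathcal{X})$ is equivalent to closed sublevel sets, which is Axiom \ref{ax:15}.

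The main obstacle is the second step of (i): deriving constant-additivity for arbitrary constant shifts (not just mixtures) from Weak Certainty Independence, and then proving convexity of $I$. The latter requires the MMR rescaling argument, which compares $\alpha L+(1-\alpha)L'$ with the indifferent constants and uses Uncertainty Aversion only for indifferent pairs. I would follow MMR's Lemmas 25–28 closely, with signs reversed, rather than re-derive them.
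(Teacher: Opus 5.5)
Your proposal is correct and follows essentially the paper's route. Both adapt Maccheroni--Marinacci--Rustichini on simple $\theta$-sections, normalize to the identity utility, extend to bounded losses via Lipschitz continuity and Monotonicity, and obtain (iii)--(v) from the conjugate formula, sublevel sets, and weak$^*$ approximation of the maximizer by countably additive $P$ with $c_\theta(P)\le c_\theta(P^*)$; the only difference is that the paper cites MMR's Theorem 3, Corollary 5, Proposition 6 and Lemma 32 where you re-derive them.

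Two small remarks. First, strict monotonicity on constants needs Weak Certainty Independence, not just Monotonicity and nontriviality, since WCI is what propagates an indifference $r\sim r+d$ to all translates. Second, the step you call the main obstacle is short in this pointwise-mixture domain: WCI gives $CE_\theta(\alpha L+(1-\alpha)r)=CE_\theta(\alpha L)+(1-\alpha)r$, hence $CE_\theta(L+k)=CE_\theta(L)+k$ immediately, and Uncertainty Aversion applied to $L-CE_\theta(L)\sim L'-CE_\theta(L')$ then yields convexity.
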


Before proving Theorem \ref{thm:D-rep}, we prove two useful lemmas which we use throughout Appendices A and B. These two lemmas will allow us to 1) equivalently work with utility representations rather than risk representations; and 2) extend representation theorems for preferences restricted to $\mathcal{L}_0$ to preferences over all of $\mathcal{L}$. Hence, these lemmas address two features of our choice environment that are nonstandard in the microeconomic decision theory literature: our primitives are preferences over bounded measurable loss functions rather than simple measurable Anscombe--Aumann acts.

First, we observe that working with preferences over loss functions is equivalent to working with preferences over negative loss functions, or \emph{utility acts}. Recall that a function $V_\theta$ \emph{represents $\succsim_\theta$ on $\mathcal{L}$} if and only if
\[
L \succsim_\theta L' \iff V_\theta(L)\leq V_\theta(L') \quad \forall L,L' \in \mathcal{L}
\]
For each loss function $L \in \mathcal{L}$, define its induced \emph{utility act} to be $f_L=-L$. Let $\mathcal{F}=\{f_L: L \in \mathcal{L}\}$ be the set of utility acts induced by some (bounded, Borel measurable) loss function, and note that $\mathcal{F}$ is precisely the set of bounded, Borel measurable functions $f: \Theta \times \mathcal{X} \rightarrow \mathbb{R}$. Finally, for each $\succsim_\theta$, define the binary relation $\succsim_\theta^{\mathcal{F}}$ on $\mathcal{F}$ as:
\[
f_L \succsim_\theta^{\mathcal{F}} f_{L'} \iff L \succsim_\theta L'
\]
For a binary relation $\succsim_\theta'$ on $\mathcal{F}$ and a function $U_\theta: \mathcal{F}\rightarrow \mathbb{R}$, say that \emph{$U_\theta$ represents $\succsim_\theta'$ on $\mathcal{F}$} if and only if
\[
f \succsim_\theta' g \iff U_\theta(f)\geq U_\theta(g) \quad \forall f,g \in \mathcal{F}
\]

\begin{lemma}\label{lem:utility acts}
$V_\theta$ represents $\succsim_\theta$ on $\mathcal{L}$ if and only if $U_\theta=-V_\theta(-\cdot)$ represents $\succsim_\theta^{\mathcal{F}}$ on $\mathcal{F}$.
\end{lemma}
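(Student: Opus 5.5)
The argument is a direct unwinding of definitions, so I will prove the two implications separately, using only that $L\mapsto f_L=-L$ is a bijection from $\mathcal{L}$ onto $\mathcal{F}$. The inverse map is $f\mapsto -f$. This bijection is what allows every statement quantified over $\mathcal{F}$ to be rewritten as one quantified over $\mathcal{L}$, and conversely.

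For the forward direction, suppose $V_\theta$ represents $\succsim_\theta$ on $\mathcal{L}$. Take arbitrary $f,g\in\mathcal{F}$ and write $f=f_L$ and $g=f_{L'}$ with $L=-f$ and $L'=-g$, both in $\mathcal{L}$. By the definition of $\succsim_\theta^{\mathcal{F}}$,
\[
f\succsim_\theta^{\mathcal{F}} g \iff L\succsim_\theta L' \iff V_\theta(L)\le V_\theta(L').
\]
Multiplying by $-1$ reverses the inequality, so this is equivalent to
\[
-V_\theta(-f)\ge -V_\theta(-g),
\]
that is, $U_\theta(f)\ge U_\theta(g)$. Hence $U_\theta$ represents $\succsim_\theta^{\mathcal{F}}$ on $\mathcal{F}$.

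For the converse, suppose $U_\theta=-V_\theta(-\cdot)$ represents $\succsim_\theta^{\mathcal{F}}$. Given $L,L'\in\mathcal{L}$, we have
\[
L\succsim_\theta L' \iff f_L\succsim_\theta^{\mathcal{F}} f_{L'} \iff U_\theta(-L)\ge U_\theta(-L').
\]
By definition of $U_\theta$, the right-hand side reads $-V_\theta(L)\ge -V_\theta(L')$, which is equivalent to $V_\theta(L)\le V_\theta(L')$. Hence $V_\theta$ represents $\succsim_\theta$ on $\mathcal{L}$.

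The argument has no real obstacle. The only point that needs care is surjectivity of $L\mapsto -L$ onto $\mathcal{F}$, so that the representation property is verified for every pair in $\mathcal{F}$ and not just a subset. Surjectivity holds because negation preserves boundedness and Borel measurability, so $-f\in\mathcal{L}$ for every $f\in\mathcal{F}$.
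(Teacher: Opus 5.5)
Your proposal is correct and matches the paper's proof. Both unwind the definition of $\succsim_\theta^{\mathcal{F}}$ through the bijection $L\mapsto -L$ and multiply by $-1$ to flip the inequality; the paper just writes this as a single chain of equivalences, and your explicit check of surjectivity onto $\mathcal{F}$ is a detail it leaves implicit.
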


\paragraph{Proof of Lemma \ref{lem:utility acts}.}

The following statements (i)-(v) are equivalent:
\begin{itemize}
    \item[(i)] $V_\theta$ represents $\succsim_\theta$ on $\mathcal{L}$.
    \item[(ii)] $L \succsim_\theta L' \iff V_\theta(L)\leq V_\theta(L') \quad \forall L,L' \in \mathcal{L}$.
    \item[(iii)] $-L \succsim_\theta^{\mathcal{F}} -L' \iff -U_\theta(-L)\leq -U_\theta(-L') \quad \forall L,L' \in \mathcal{L}$.
    \item[(iv)] $f \succsim_\theta^{\mathcal{F}} g \iff U_\theta(f)\geq U_\theta(g) \quad \forall f,g \in \mathcal{F}$.
    \item[(v)] $U_\theta$ represents $\succsim_\theta^{\mathcal{F}}$ on $\mathcal{F}$.
\end{itemize}

\qed

\medskip

In particular, Lemma \ref{lem:utility acts} implies: $V_\theta$ as defined in Equation (\ref{eq:variational-rep 2}) represents $\succsim_\theta$ on $\mathcal{L}$ if and only if $U_\theta: \mathcal{F}\rightarrow \mathbb{R}$ defined as:
\begin{equation}\label{eq: var-rep-utility}
    U_\theta(f)=\min_{P \in \Delta^F(\mathcal{X})} \left\{\int f_\theta(x) \ dP(x)+c_\theta(P) \right\}
\end{equation}
represents $\succsim_\theta^{\mathcal{F}}$ on $\mathcal{F}$. Note that for each axiom we have defined on $\succsim_\theta$, there exists a corresponding, equivalent axiom on $\succsim_\theta^{\mathcal{F}}$. Moving forward, we abuse notation and use $\succsim_\theta$ to also refer to its induced preference on utility acts $\succsim_\theta^{\mathcal{F}}$.

Second, we establish that to obtain a variational representation $U_\theta$ (as defined in Equation \ref{eq: var-rep-utility}) of $\succsim_\theta$ on $\mathcal{F}$, it suffices to obtain a variational representation $U_\theta$ of $\succsim_\theta$ on the domain $\mathcal{F}_0(\mathcal{X}) \subseteq \mathcal{F}$ of \emph{simple} (finitely-valued) Borel measurable functions $f$ such that $f(\cdot,x)$ is constant in $\theta$. We abuse notation and identify this set with the set of simple, Borel measurable functions $f: \mathcal{X} \rightarrow \mathbb{R}$. 

\begin{lemma}\label{lem: simple-to-bounded}
Assume that $\succsim_\theta$ satisfies the basic axioms ($\theta$-Relevance, Nontrivial Weak Order, Monotonicity, and $\theta$-Continuity), and $\succsim_\theta$ has a variational representation on $\mathcal{F}_0(\mathcal{X})$: there exists a convex, weak$^*$ lower-semicontinuous function $c_\theta: \Delta^F(\mathcal{X}) \rightarrow [0,\infty]$ with $\inf_{P \in \Delta^F(\mathcal{X})} c_\theta(P)=0$ such that
\begin{align*}
    f \succsim_\theta g \iff U_\theta(f)\geq U_\theta(g) \quad \forall f,g \in \mathcal{F}_0(\mathcal{X})
\end{align*}
where $U_\theta$ is defined as in Equation (\ref{eq: var-rep-utility}). Then, the above equivalence holds for all $f,g \in \mathcal{F}$, and hence $U_\theta$ represents $\succsim_\theta$ on $\mathcal{F}$.
\end{lemma}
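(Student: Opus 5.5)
The plan is to extend the representation from simple acts to all bounded acts by uniform approximation, using $\theta$-Continuity together with the sup-norm continuity of $U_\theta$. First I will observe that $U_\theta$, defined in \eqref{eq: var-rep-utility}, is well-defined and real-valued on all of $\mathcal{F}$. Because $c_\theta\ge 0$ with $\inf c_\theta=0$, we have $\inf f_\theta \le U_\theta(f)\le \sup f_\theta$. Moreover, $U_\theta$ is monotone and $1$-Lipschitz in the sup norm: $|U_\theta(f)-U_\theta(g)|\le \|f_\theta-g_\theta\|_\infty$, since each $\int f_\theta\,dP$ moves by at most the sup-norm distance. Next, by $\theta$-Relevance it suffices to compare $f_\theta$ and $g_\theta$, which I identify with bounded measurable functions on $\mathcal{X}$. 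Every such function is a uniform limit of simple measurable functions, obtained by discretizing the range; I can choose the approximants $f_n\to f_\theta$ uniformly with $f_n\le f_\theta$ or $f_n\ge f_\theta$ as convenient, and similarly $g_n\to g_\theta$.

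Second, I will handle constants via certainty equivalents. For each simple $f$, the value $U_\theta(f)$ is a real number $k$, and the representation on $\mathcal{F}_0(\mathcal{X})$ (which contains constants, with $U_\theta(k)=k$) gives $f\sim_\theta k$. For general bounded $f$, I will show that $f\sim_\theta U_\theta(f)$. Take simple $f_n\to f_\theta$ uniformly; then $f_n\sim_\theta U_\theta(f_n)$ and $U_\theta(f_n)\to U_\theta(f)$ by the Lipschitz property. Fix any constant $k<U_\theta(f)$. Then $U_\theta(f_n)>k$ for all large $n$, so $f_n\succsim_\theta k$ by the simple-act representation. Since $f_n\to f$ in sup norm and the upper contour set $\{h: h\succsim_\theta k\}$ is closed by $\theta$-Continuity, this gives $f\succsim_\theta k$. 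The symmetric argument gives $k\succsim_\theta f$ for every $k>U_\theta(f)$. Letting $k\to U_\theta(f)$ and using closedness once more along constants converging in sup norm, I obtain $f\sim_\theta U_\theta(f)$. (Up to the paper's sign conventions, Monotonicity ensures constants are ranked by their value.)

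Finally, for arbitrary $f,g\in\mathcal{F}$, transitivity gives $f\succsim_\theta g \iff U_\theta(f)\succsim_\theta U_\theta(g)$ as constants. By Monotonicity and nontriviality, constants are ranked by their value, so this holds $\iff U_\theta(f)\ge U_\theta(g)$. The step I expect to need the most care is the second one, establishing $f\sim_\theta U_\theta(f)$ through closedness of the contour sets. There I must make sure the limiting argument runs in the correct direction for both the upper and lower contour sets, and that the sign flip between losses and utility acts (Lemma \ref{lem:utility acts}) is tracked consistently. The remaining steps are routine.
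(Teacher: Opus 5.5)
Your argument is correct, but it is organized differently from the paper's. The paper never works with certainty equivalents. It sandwiches the two acts with one-sided simple approximations $f_{\theta,k}\ge f_\theta$ and $g_{\theta,j}\le g_\theta$. Monotonicity then gives $f_{\theta,k}\succsim_\theta g_{\theta,j}$ whenever $f\succsim_\theta g$, and the paper passes to the limit using only the $1$-Lipschitz continuity of $U_\theta$. $\theta$-Continuity enters only in the strict case, to produce $\epsilon>0$ with $f_\theta\succ_\theta g_\theta+\epsilon$; translation invariance of $U_\theta$ then yields $U_\theta(f)\ge U_\theta(g)+\epsilon$.

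You instead use closedness of the upper and lower contour sets to show $f\sim_\theta U_\theta(f)$ for every bounded $f$, and then reduce every comparison to constants by transitivity. Your route has three advantages. It handles weak and strict preference at once. It needs neither Monotonicity nor translation invariance for non-constant acts. And it shows as a byproduct that $U_\theta(f)$ is the certainty equivalent of every $f\in\mathcal{F}$. The paper's route rests on Monotonicity rather than preference continuity for the weak direction, and invokes $\theta$-Continuity only once.

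One small correction: Monotonicity and nontriviality alone do not imply that constants are strictly ranked by their value, since Monotonicity only gives the weak direction. The strict ranking does follow directly from your hypothesis, because constants lie in $\mathcal{F}_0(\mathcal{X})$ and $U_\theta(k)=k$, which you had already noted. Cite that instead.
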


\paragraph{Proof of Lemma \ref{lem: simple-to-bounded}.}

\underline{Step 1: $U_\theta$ is continuous on $\mathcal{F}$.} Define $U_\theta: \mathcal{F} \rightarrow \mathbb{R}$ as in Equation (\ref{eq: var-rep-utility}). Fix any $f,g \in \mathcal{F}$ and without loss of generality, suppose $U_\theta(f)\geq U_\theta(g)$. Choose $P \in \Delta^F(\mathcal{X})$ such that
\[
U_\theta(g)=\int g_\theta \ dP+c_\theta(P)
\]
Hence,
\begin{align*}
    U_\theta(f)-U_\theta(g)\leq \int f_\theta \ dP+c_\theta(P)-\int g_\theta \ dP-c_\theta(P) \\
    =\int (f_\theta-g_\theta) \ dP \leq \int |f_\theta-g_\theta| \ dP \leq \|f_\theta-g_\theta\|_\infty
\end{align*}
where the last two inequalities follow from monotonicity of the integral against finitely additive measures. Hence, $U_\theta$ is 1-Lipschitz and hence continuous on $\mathcal{F}$.

\underline{Step 2: $U_\theta$ represents $\succsim_\theta$ on $\mathcal{F}$.} Fix any $f,g \in \mathcal{F}$. 

First, suppose that $f \succsim_\theta g$. By $\theta$-Relevance, $f_\theta \succsim_\theta g_\theta$. For each $k\geq 1$ and $j\geq 1$, define
\[
f_{\theta,k}(x)=\frac{\lfloor kf_\theta(x)\rfloor+1}{k} \quad \text{and} \quad g_{\theta,j}(x)=\frac{\lfloor j g_\theta(x)\rfloor}{j}
\]
Since $f_\theta,g_\theta$ are bounded, each $f_{\theta,k},g_{\theta,j}$ is simple. Also note that $kf_\theta(x)\leq \lfloor kf_\theta(x)\rfloor+1$ and $\lfloor jg_\theta(x)\rfloor\leq jg_\theta(x)$ imply $f_\theta\leq f_{\theta,k}$ and $g_\theta\geq g_{\theta,j}$. Finally, note that $\|f_{\theta,k}-f_\theta\|_\infty\leq k^{-1} \to 0$ and $\|g_{\theta,j}-g_\theta\|_\infty\leq j^{-1} \to 0$. 

For each fixed $j\geq 1$, Monotonicity implies:
\[
f_{\theta,k} \succsim_\theta f_\theta \succsim_\theta g_\theta \succsim_\theta g_{\theta,j} \quad \forall k\geq 1 \implies U_\theta(f_{\theta,k})\geq U_\theta(g_{\theta,j}) \quad \forall k\geq 1
\]
and hence 
\[
U_\theta(f_\theta)=\lim_{k\to\infty} U_\theta(f_{\theta,k})\geq U_\theta(g_{\theta,j})
\]
Finally, taking $j \to \infty$ yields $U_\theta(f_\theta)\geq U_\theta(g_\theta)$. By relabeling $f_\theta$ and $g_\theta$, we also see that if $f_\theta \sim_\theta g_\theta$, then $U_\theta(f_\theta)=U_\theta(g_\theta)$. 

Finally, suppose that $f \succ_\theta g$.  By $\theta$-Relevance, $f_\theta \succ_\theta g_\theta$. By $\theta$-continuity, there exists $\epsilon>0$ small enough such that $f_\theta \succ_\theta g_\theta+\epsilon$. Define $(f_{\theta,k})_k$ and $(g_{\theta,j})_j$ as before. For each fixed $j\geq 1$, Monotonicity implies:
\[
f_{\theta,k}\succsim_\theta f_\theta \succ_\theta g_\theta+\epsilon \succsim_\theta g_{\theta,j}+\epsilon \quad \forall k\geq 1 \implies U_\theta(f_{\theta,k})>U_\theta(g_{\theta,j}+\epsilon)=U_\theta(g_{\theta,j})+\epsilon \quad \forall k\geq 1
\]
where the last equality uses the fact that $U_\theta$ is translation invariant. Hence, taking $k \to \infty$ yields:
\[
U_\theta(f_\theta)\geq U_\theta(g_{\theta,j})+\epsilon
\]
and taking $j \to \infty$ yields:
\[
U_\theta(f_\theta)\geq U_\theta(g_\theta)+\epsilon>U_\theta(g_\theta)
\]
as desired. \qed

\medskip

\paragraph{Proof of Theorem \ref{thm:D-rep}.}
Necessity of the axioms (the backwards directions of (i), (iii), (v)) is straightforward, so we prove sufficiency. Throughout, we study $\succsim_\theta$ restricted to the domain $\mathcal{F}_0(\mathcal{X})$. Note that this corresponds to the choice environment of \cite{maccheroni2006ambiguity} where the state space is $(\mathcal{X},\mathcal{B}(\mathcal{X}))$ and the consequence space is $\mathbb{R}$. 

\underline{Part (i)}: We show the forwards direction. It is straightforward to show that Axioms \ref{ax:7}--\ref{ax:5} imply that $\succsim_\theta$ on $\mathcal{F}_0(\mathcal{X})$ satisfies Axioms A.1-A.6 of \cite{maccheroni2006ambiguity}. By Theorem 3 of \cite{maccheroni2006ambiguity}, there exist a nonconstant affine function $u_\theta: \mathbb{R} \rightarrow \mathbb{R}$ and convex, weak$^*$-lower-semicontinuous function $c_{\theta,u_\theta}: \Delta^F(\mathcal{X}) \rightarrow [0,\infty]$ satisfying $\inf_{P \in \Delta^F(\mathcal{X})} c_{\theta,u_\theta}(P)=0$ such that the function 
\[
f \mapsto \min_{P \in \Delta^F(\mathcal{X})} \left(\int u_\theta(f) \ dP+c_{\theta,u_\theta}(P) \right)
\]
represents $\succsim_\theta$ on $\mathcal{F}_0(\mathcal{X})$. By cardinal uniqueness (Corollary 5 of \cite{maccheroni2006ambiguity}) and Monotonicity, we may choose $u_\theta=\text{Id}$ and define $c_\theta=c_{\theta,\text{Id}}$. Hence, the function
\[
U_\theta(f)=\min_{P \in \Delta^F(\mathcal{X})} \left(\int f \ dP+c_{\theta}(P) \right)
\]
represents $\succsim_\theta$ on $\mathcal{F}_0(\mathcal{X})$. Conclude by applying Lemmas \ref{lem:utility acts} and \ref{lem: simple-to-bounded}. 

\underline{Part (ii)}: Since $\mathbb{R}$ is unbounded, uniqueness and the given expression immediately follow from Proposition 6 of \cite{maccheroni2006ambiguity}.

\underline{Part (iii)}: Additionally suppose that $\succsim_\theta$ satisfies Axiom \ref{ax:grounded at Q}. By Lemma 32 of \cite{maccheroni2006ambiguity}, $c_\theta(Q_\theta)=0$.

\underline{Part (iv)}: Additionally suppose that $\succsim_\theta$ satisfies Axiom \ref{ax:14}. The equalities
\[
\{P \in \Delta^F(\mathcal{X}): c_\theta(P)\leq t\}=\overline{\{P \in \Delta(\mathcal{X}): c_\theta(P)\leq t\}} \quad \forall t\geq 0
\]
immediately follow from Parts (i) and (ii). It remains to show
\[
V_\theta(L)=\sup_{P \in \Delta(\mathcal{X})} \left\{\int L_\theta(x) \ dP(x)-c_\theta(P)\right\} \quad \forall L \in \mathcal{L}
\]
or equivalently by Lemma \ref{lem:utility acts},
\[
\min_{P \in \Delta^F(\mathcal{X})} \left(\int f_\theta \ dP+c_\theta(P) \right)= U_\theta(f)=\inf_{P \in \Delta(\mathcal{X})} \left\{\int f_\theta(x) \ dP(x)+c_\theta(P)\right\} \quad \forall f \in \mathcal{F}
\]
for $U_\theta$ as defined in Equation (\ref{eq: var-rep-utility}). Fix any $f \in \mathcal{F}$. The inequality $\leq$ follows immediately. For the other inequality $\geq$, let $P^* \in \Delta^F(\mathcal{X})$ such that 
\[
U_\theta(f)=\int f_\theta \ dP^*+c_\theta(P^*)
\]
and define $t^*=c_\theta(P^*)$. Since 
\[
P^* \in \{P \in \Delta^F(\mathcal{X}): c_\theta(P)\leq t^*\}=\overline{\{P \in \Delta(\mathcal{X}): c_\theta(P)\leq t^*\}}
\]
by \cite{aliprantis2006infinite} Theorem 2.14, there exists a net
\[
(P_\alpha)_{\alpha\in A} \subseteq \{P \in \Delta(\mathcal{X}): c_\theta(P)\leq t^*\}
\]
such that $P_\alpha \to P^*$ in the weak$^*$ topology. By definition of weak$^*$ lower semicontinuity, 
\[
\liminf_{\alpha \in A} c_\theta(P_\alpha)\geq t^*
\]
By above, 
\begin{align*}
    c_\theta(P_\alpha)\leq t^* \quad \forall \alpha \in A \\
    \implies \sup_{\alpha \geq \alpha_0} c_\theta(P_\alpha) \leq t^* \quad \forall \alpha_0 \in A \\
    \implies \limsup_{\alpha \in A} c_\theta(P_\alpha) \leq t^*
\end{align*}
Hence, $c_\theta(P_\alpha) \to t^*$. Since $P \mapsto \int f_\theta \ dP$ is weak$^*$ continuous (since $f$ is bounded and measurable), we have by Theorem 2.28 of \cite{aliprantis2006infinite} that $\int f_\theta \ dP_\alpha \to \int f_\theta \ dP^*$. Finally, note that
\begin{align*}
    \int f_\theta \ dP_\alpha+c_\theta(P_\alpha)\geq \inf_{P \in \Delta(\mathcal{X})} \left(\int f_\theta \ dP+c_\theta(P) \right) \quad \forall \alpha \in A \\
    \implies U_\theta(f)\geq \inf_{P \in \Delta(\mathcal{X})} \left(\int f_\theta \ dP+c_\theta(P) \right)
\end{align*}
as desired. The remaining assertion immediately follows by substituting $L=0$ and noting that $\inf_{P\in \Delta^F(\mathcal{X})} c_\theta(P)=0$.

\underline{Part (v)}: Additionally suppose that $\succsim_\theta$ satisfies Axiom \ref{ax:15}. By Parts (i) and (ii), this implies: for each $t\geq 0$, the set
\[
\{P\in \Delta(\mathcal{X}): c_\theta(P)\leq t\}
\]
is weakly closed. By definition, $c_\theta: \Delta(\mathcal{X}) \rightarrow [0,\infty]$ is weakly lower-semicontinuous. \qed

\subsection{Axioms for Aggregation Across $\Theta$}

While the preferences $\{\succsim_\theta:\theta\in\Theta\}$ describe the researcher's preferences under known $\theta,$ $\theta$ is unknown in practice, so our ultimate interest is in the overall preference $\succsim_\Theta$. The remaining axioms for $\succsim_\Theta$ relate it to the conditional preferences.

\begin{axiom}[Consistency]\label{ax:8}
If $L\succsim_\theta L'$ for all $\theta\in\Theta$, then $L\succsim_\Theta L'$.
\end{axiom}

Consistency requires that if all conditional preferences agree on a ranking, the overall preference must respect that ranking.

\begin{axiom}[Caution]\label{ax:9}
For all $r\in\mathbb{R}$ and $L\in\mathcal{L}$, if there exists $\theta\in\Theta$ such that $r\succ_\theta L$, then $r\succsim_\Theta L$.
\end{axiom}

Caution requires that if there exists some $\theta\in\Theta$ under which a constant loss $r$ is strictly preferred to a state-dependent loss $L$, then the overall preference must also (weakly) prefer $r$ to $L$.  This is an uncertainty-averse aggregation rule: the researcher is cautious about state-dependent losses whenever any parameter value suggests caution.

\begin{defn}
The family of preferences $\{\succsim_\Theta\}\cup\{\succsim_\theta:\theta\in\Theta\}$  has a \emph{cautious variational representation} if each $\succsim_\theta$ has a variational representation as defined in Equation (\ref{eq:variational-rep 2}) and 
\begin{equation}\label{eq:CV-rep 2}
V_\Theta(L)=\sup_{\theta\in\Theta}V_\theta(L)=\sup_{\theta\in\Theta}\max_{P\in\Delta^F(\mathcal{X})}\left\{\int L_\theta(x)dP(x)-c_\theta(P)\right\}
\end{equation}
represents $\succsim_\Theta$, in the sense that $L\succsim_\Theta L'$ if and only if $V_\Theta(L)\le V_\Theta(L')$. 
\end{defn}

\begin{thm}\label{thm:CD-rep}
\begin{itemize}
    \item[(i)] The family of preferences $\{\succsim_\Theta\}\cup\{\succsim_\theta:\theta\in\Theta\}$ satisfies Axioms \ref{ax:7}--\ref{ax:5} and \ref{ax:8}--\ref{ax:9} if and only if it has a cautious variational representation. 
    \item[(ii)] It additionally satisfies Axioms \ref{ax:grounded at Q}-\ref{ax:15} if and only if each $c_\theta$ satisfies: $c_\theta(Q_\theta)=0$, $\{c_\theta \leq t\}=\overline{\{c_\theta \leq t\} \cap \Delta(\mathcal{X})}$ for all $t\geq 0$, and $c_\theta: \Delta(\mathcal{X})\rightarrow [0,\infty]$ is weakly lower semicontinuous. In this case, we also have: $\inf_{P \in \Delta(\mathcal{X})} c_\theta(P)=0$ and, for $V_\Theta$ defined in Equation (\ref{eq:CV-rep 2}),
    \[
    V_\Theta(L)=\sup_{\theta \in \Theta} \sup_{P \in \Delta(\mathcal{X})} \left(\int L_\theta \ dP-c_\theta(P) \right) \quad \forall L \in \mathcal{L}
    \]
\end{itemize}
\end{thm}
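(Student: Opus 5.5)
We prove the two parts separately. Sufficiency of the representation for the axioms is routine; the substantive work is the direction from axioms to the cautious variational representation.

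\textbf{Part (i), necessity.} Suppose the cautious variational representation holds. Each $\succsim_\theta$ then satisfies Axioms \ref{ax:7}--\ref{ax:5} by Theorem \ref{thm:D-rep}(i). Consistency is immediate: if $V_\theta(L)\le V_\theta(L')$ for every $\theta$, then $\sup_\theta V_\theta(L)\le\sup_\theta V_\theta(L')$. For Caution, suppose $V_\theta(L)>r$ for some $\theta$. Then $V_\Theta(L)\ge V_\theta(L)>r=V_\Theta(r)$, using that $V_\theta(r)=r$ for every $\theta$ because $\inf c_\theta=0$. The remaining basic axioms on $\succsim_\Theta$ (weak order, monotonicity and mixture continuity) follow because $V_\Theta$ is real-valued, monotone, and $1$-Lipschitz in the sup norm.

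\textbf{Part (i), sufficiency.} First apply Theorem \ref{thm:D-rep}(i) to obtain, for each $\theta$, a variational $V_\theta$ that represents $\succsim_\theta$, depends on $L$ only through $L_\theta$, and satisfies $V_\theta(k)=k$. The plan is then to show that $V_\Theta:=\sup_\theta V_\theta$ represents $\succsim_\Theta$, which we do in three steps.
\begin{enumerate}
\item \emph{Certainty equivalents.} For each $L$ there is a unique constant $CE_\Theta(L)$ with $CE_\Theta(L)\sim_\Theta L$. Monotonicity gives $\sup L\succsim_\Theta L\succsim_\Theta\inf L$, and constants are strictly ranked. Applying $\Theta$-Mixture Continuity to the path $\alpha\sup L+(1-\alpha)\inf L$ yields existence of the equivalent by a connectedness argument; strict ranking of constants gives uniqueness.
\item \emph{Identification.} Show $CE_\Theta(L)=\sup_\theta V_\theta(L)$, arguing by contradiction in both directions.
\begin{itemize}
\item If $CE_\Theta(L)<V_{\theta}(L)$ for some $\theta$, pick $r$ strictly between them. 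Then $r\succ_\theta L$, so Caution gives $r\succsim_\Theta L\sim_\Theta CE_\Theta(L)$. This contradicts $r>CE_\Theta(L)$, since constants are strictly ranked under $\succsim_\Theta$.
\item If $CE_\Theta(L)>s:=\sup_\theta V_\theta(L)$, then $L\succsim_\theta s$ for every $\theta$, because $V_\theta(s)=s\ge V_\theta(L)$. Consistency then gives $L\succsim_\Theta s$, so $CE_\Theta(L)\le s$, a contradiction.
\end{itemize}
\item \emph{Representation.} Since $L\succsim_\Theta L'$ if and only if $CE_\Theta(L)\le CE_\Theta(L')$, the function $V_\Theta$ represents $\succsim_\Theta$.
\end{enumerate}

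\textbf{Part (ii).} Apply Theorem \ref{thm:D-rep}(iii)--(v) separately for each $\theta$. These give the stated properties of $c_\theta$, the identity $\inf_{\Delta(\mathcal X)}c_\theta=0$, and the replacement of $\max_{\Delta^F}$ by $\sup_{\Delta(\mathcal X)}$ inside each $V_\theta$. Taking the supremum over $\theta$ then yields the displayed countably additive formula for $V_\Theta$.

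The main obstacle is step 1, the existence of $CE_\Theta(L)$. The reason is that only mixture continuity, not sup-norm continuity, is assumed for $\succsim_\Theta$. I would handle this by defining $A=\{\alpha\in[0,1]:\alpha\sup L+(1-\alpha)\inf L\succsim_\Theta L\}$ and $B=\{\alpha\in[0,1]:L\succsim_\Theta \alpha\sup L+(1-\alpha)\inf L\}$. Both sets are closed by $\Theta$-Mixture Continuity, cover $[0,1]$ by completeness, and are nonempty by the monotonicity bounds $\sup L\succsim_\Theta L\succsim_\Theta\inf L$. Connectedness of $[0,1]$ then forces $A\cap B\neq\emptyset$, and any $\alpha$ in the intersection gives the certainty equivalent.
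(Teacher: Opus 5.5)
Your proposal is correct and follows the paper's proof essentially step for step. You obtain certainty equivalents for $\succsim_\Theta$ from $\Theta$-Mixture Continuity and Monotonicity, use Consistency for $CE_\Theta(L)\le\sup_\theta V_\theta(L)$ and Caution for the reverse inequality (picking $r$ strictly between the two values where the paper uses an $\epsilon$), and read Part (ii) off Theorem~\ref{thm:D-rep}(iii)--(v); the only other difference is that you work with losses directly rather than passing to utility acts.

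Two small remarks. For losses, Monotonicity gives $\inf L\succsim_\Theta L\succsim_\Theta \sup L$, the reverse of what you wrote; this is harmless, since $0\in A$ and $1\in B$ still make both sets nonempty. More substantively, you use strict ranking of constants under $\succsim_\Theta$ throughout: for uniqueness of $CE_\Theta$, in both identification steps, and in the final representation. You assert it rather than derive it, exactly as the paper does in its footnote to Monotonicity. However, weak Monotonicity together with Consistency and Caution does not force it: $\min\{V_\Theta(L),0\}$ satisfies all the axioms imposed on $\succsim_\Theta$, yet makes the constants $1$ and $2$ indifferent. It should therefore be stated as an explicit assumption.
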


Part (i) of Theorem \ref{thm:CD-rep} adapts arguments from \citet{CerriaViolioetal24} to our setting with loss functions as primitives.  It shows that, under the axioms above, the preference $\succsim_\Theta$ takes a min-max approach to parameter uncertainty, together with a penalized min-max approach to model misspecification.  As we discuss in the text, this class of preferences over loss functions nests some prominent examples which have previously been discussed in the literature on estimation under misspecification.

\paragraph{Proof of Theorem \ref{thm:CD-rep}.} Part (ii) immediately follows from Part (i) and the previous theorem. Furthermore, necessity of the axioms in Part (i) (the backwards direction of Part (i)) is straightforward, so we prove sufficiency.

\underline{Part (i)}: By the exact analog of Lemma \ref{lem:utility acts}, $V_\Theta$ represents $\succsim_\Theta$ on $\mathcal{L}$ if and only if
\begin{equation}\label{eq:U-Theta}
    U_\Theta(f)=\inf_{\theta \in \Theta} U_\theta(f)=\inf_{\theta \in \Theta} \min_{P \in \Delta^F(\mathcal{X})} \left(\int f_\theta(x) \ dP(x)+c_\theta(P) \right)
\end{equation}
represents $\succsim_\Theta^{\mathcal{F}}$ on $\mathcal{F}$. As before, we abuse notation and refer to $\succsim_\Theta^{\mathcal{F}}$ as $\succsim_\Theta$ henceforth. By Theorem \ref{thm:D-rep}, each $\succsim_\theta$ has a variational representation $U_\theta$ on $\mathcal{F}$.

\underline{Step 1: $\succsim_\Theta$ admits CEs on $\mathcal{F}$.} Fix any $f \in \mathcal{F}$. Since $f$ is bounded, there exist $\overline{k}\geq \underline{k} \in \mathbb{R}$ such that $\overline{k}\geq f(\theta,x)\geq \underline{k}$ for all $(\theta,x)$. By Monotonicity, $\overline{k} \succsim_\Theta f \succsim_\Theta \underline{k}$. By $\Theta$-Mixture Continuity and Monotonicity, there exists a unique $\alpha \in [0,1]$ such that $f \sim_\Theta \alpha \overline{k}+(1-\alpha)\underline{k}$. Hence, the certainty equivalent function $CE_\Theta: \mathcal{F}\rightarrow \mathbb{R}$ defined as: $f \sim_\Theta CE_\Theta(f)$ is well-defined. By Monotonicity, $CE_\Theta(\cdot)$ represents $\succsim_\Theta$ on $\mathcal{F}$.

\underline{Step 2: $CE_\Theta\geq U_\Theta$.} Define $U_\Theta$ as in Equation (\ref{eq:U-Theta}). Fix any $f \in \mathcal{F}$ and consider the constant act $U_\Theta(f)$. Note that since $U_\theta(f)\geq U_\Theta(f)=U_\theta(U_\Theta(f))$ for all $\theta \in \Theta$, 
\[
f \succsim_\theta U_\Theta(f) \quad \forall \theta \in \Theta
\]
and hence Consistency implies $f \succsim_\Theta U_\Theta(f)$. Since $CE_\Theta$ is normalized, $CE_\Theta(f)\geq U_\Theta(f)$.

\underline{Step 3: $U_\Theta\geq CE_\Theta$.} Fix any $f \in \mathcal{F}$ and consider the constant act $U_\Theta(f)$. For each $\epsilon>0$, note that there exists $\theta \in \Theta$ such that
\[
U_\Theta(f)+\epsilon> U_\theta(f) \implies U_\Theta(f)+\epsilon \succ_\theta f
\]
By Caution and since $CE_\Theta$ is normalized,
\[
U_\Theta(f)+\epsilon \succsim_\Theta f \implies U_\Theta(f)+\epsilon\geq CE_\Theta(f)
\]
Taking $\epsilon \downarrow 0$ yields $U_\Theta(f)\geq CE_\Theta(f)$. We have therefore shown that $U_\Theta=CE_\Theta$ represents $\succsim_\Theta$ on $\mathcal{F}$, as desired. \qed

\section{Proofs}\label{sec: proofs}

\subsection{Proofs for Results in Section \ref{sec: loss axioms}}

\paragraph{Constraint Preferences.} To characterize constraint preferences, we introduce another axiom. Define
\[
\mathcal{P}_\theta:=\{P \in \Delta^F(\mathcal{X}): \ \succsim_\theta \text{ is more ambiguity averse than } \succsim_{\theta,P}^{SEU}\}
\]

\begin{axiom}\label{ax:18}
For all $\theta \in \Theta$, $\mathcal{P}_\theta=\overline{\mathcal{P}_\theta \cap \Delta(\mathcal{X})}$ and $\mathcal{P}_\theta \cap \Delta(\mathcal{X})$ is weakly closed.
\end{axiom}

We prove the following result, which implies Propositions \ref{prop:GS} and \ref{prop:GS 2}.

\begin{thm}\label{thm:GS-expanded}
\begin{itemize}
    \item[(i)] The preference $\succsim_\theta$ satisfies Axioms \ref{ax:7}-\ref{ax:ten}, Certainty Independence, and Axiom \ref{ax:5} if and only if $\mathcal{P}_\theta$ is nonempty, convex, and weak$^*$-closed and $\succsim_\theta$ has a \emph{maxmin expected utility} representation with ambiguity set $\mathcal{P}_\theta$: the function
    \begin{equation}\label{eq:loss GS-89}
        V_\theta(L)=\max_{P \in \mathcal{P}_\theta} \int L_\theta \ dP
    \end{equation}
    represents $\succsim_\theta$ on $\mathcal{L}$.
    \item[(ii)] Suppose $\succsim_\theta$ satisfies Axioms \ref{ax:7}-\ref{ax:ten}, Certainty Independence, and Axiom \ref{ax:5}. It additionally satisfies Axiom \ref{ax:grounded at Q} if and only if $Q_\theta \in \mathcal{P}_\theta$.
    \item[(iii)] Suppose $\succsim_\theta$ satisfies Axioms \ref{ax:7}-\ref{ax:ten}, Certainty Independence, and Axiom \ref{ax:5}. It additionally satisfies Axiom \ref{ax:18} if and only if it satisfies Axioms \ref{ax:14} and \ref{ax:15}. In this case,
    \[
    V_\theta(L)=\sup_{P \in \mathcal{P}_\theta \cap \Delta(\mathcal{X})} \int L_\theta \ dP
    \]
\end{itemize}
\end{thm}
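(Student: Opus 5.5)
The plan is to reduce everything to the variational machinery of Theorem \ref{thm:D-rep} and then identify the cost function of a maxmin preference as an indicator. This is the \citet{GilboaSchmeidler1989} argument recast in the loss-function domain.

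\textbf{Part (i).} Necessity is routine. If $V_\theta(L)=\max_{P\in\mathcal{P}_\theta}\int L_\theta\,dP$, then $V_\theta$ is monotone, sup-norm continuous and convex in $L$. Convexity of $V_\theta$ gives Uncertainty Aversion, since preferring low $V_\theta$ means hedging helps. Positive homogeneity and translation invariance give Certainty Independence. For sufficiency, I first observe that Certainty Independence implies Weak Certainty Independence. So Theorem \ref{thm:D-rep}(i)--(ii) yields a variational representation with the unique cost $c_\theta(P)=\sup_{L\in\mathcal{L}_0}(\int L_\theta\,dP-CE_\theta(L))$. Certainty Independence, together with $CE_\theta(r)=r$, gives $CE_\theta(\alpha L+(1-\alpha)r)=\alpha CE_\theta(L)+(1-\alpha)r$.

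I would then take $r=0$ and use that $\mathcal{L}_0$ is closed under positive scaling (scaling up via $\alpha^{-1}$ and the identity applied to $\alpha L$). This gives $c_\theta(P)=\sup_L t\,(\int L_\theta\,dP-CE_\theta(L))$ for every $t>0$. Hence $c_\theta$ takes only the values $0$ and $\infty$, and $c_\theta$ is the convex indicator of $\{c_\theta=0\}$. This set is nonempty because the min of the cost is attained and equals $0$. It is convex and weak$^*$-closed because $c_\theta$ is convex and weak$^*$ lsc.

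It remains to check that $\{c_\theta=0\}$ coincides with the set $\mathcal{P}_\theta$ defined via ``more ambiguity averse''. The condition $c_\theta(P)=0$ means $\int L_\theta\,dP\le CE_\theta(L)$ for all simple $L$, and by Lemma \ref{lem: simple-to-bounded}-type approximation for all $L\in\mathcal{L}$. This is the same as saying: whenever $L\succsim_\theta r$, that is $CE_\theta(L)\le r$, we have $\int L_\theta\,dP\le r$, which is exactly $L\succsim^{SEU}_{\theta,P}r$. The converse follows by taking $r=CE_\theta(L)$. Substituting into the variational formula gives the stated representation, with the max attained by weak$^*$ compactness.

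\textbf{Part (ii).} This follows from Theorem \ref{thm:D-rep}(iii): $c_\theta(Q_\theta)=0$ if and only if $Q_\theta\in\{c_\theta=0\}=\mathcal{P}_\theta$. Alternatively, it is immediate from the definition of $\mathcal{P}_\theta$ and Axiom \ref{ax:grounded at Q}.

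\textbf{Part (iii).} Since $c_\theta$ is an indicator, $\{c_\theta\le t\}=\mathcal{P}_\theta$ for every $t\ge0$. So Axiom \ref{ax:14} reads $\mathcal{P}_\theta=\overline{\mathcal{P}_\theta\cap\Delta(\mathcal{X})}$, and Axiom \ref{ax:15} reads that $\mathcal{P}_\theta\cap\Delta(\mathcal{X})$ is weakly closed. Their conjunction is therefore exactly Axiom \ref{ax:18}. The countably additive sup formula then follows from Theorem \ref{thm:D-rep}(iv), with $c_\theta$ the indicator of $\mathcal{P}_\theta$.

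\textbf{Main obstacle.} I expect the main difficulty to be the step from Certainty Independence to the homogeneity of $CE_\theta$ and hence the $\{0,\infty\}$ dichotomy of $c_\theta$. One must check that Certainty Independence, stated as an ``iff'' on preferences, really pins down $CE_\theta(\alpha L+(1-\alpha)r)$ and not only the ranking. My plan is to apply the axiom with $L'=CE_\theta(L)$ constant, noting that a mixture of constants is constant, and then use uniqueness of certainty equivalents.
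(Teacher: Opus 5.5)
Your proposal is correct and follows the paper's proof essentially step for step. You reduce to the variational representation of Theorem \ref{thm:D-rep}, show that $c_\theta$ is a $\{0,\infty\}$-valued indicator whose zero set is $\mathcal{P}_\theta$, and read off parts (ii) and (iii) from Theorem \ref{thm:D-rep}(iii)--(iv) using $\{c_\theta\le t\}=\mathcal{P}_\theta$. The only difference is that you derive two steps directly where the paper cites Proposition 19 and Lemma 32 of \citet{maccheroni2006ambiguity}: the dichotomy, via positive homogeneity of $CE_\theta$ in the conjugate formula for $c_\theta$, and the identification of $\{c_\theta=0\}$ with the comparative-ambiguity-aversion set.
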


\textbf{Proof of Theorem \ref{thm:GS-expanded}.} Necessity of the axioms (the backwards directions of each part) is straightforward, so we prove sufficiency.

\underline{Part (i)}: By Lemmas \ref{lem:utility acts} and \ref{lem: simple-to-bounded} and Proposition 19(iii) of \cite{maccheroni2006ambiguity}, $c_\theta$ takes only values $0$ and $\infty$. By Lemma 32 of \cite{maccheroni2006ambiguity},
\[
\mathcal{P}_\theta=\{P \in \Delta^F(\mathcal{X}): c_\theta(P)=0\}
\]
Since $c_\theta: \Delta^F(\mathcal{X})\rightarrow [0,\infty]$ is weak$^*$ lower semicontinuous and convex, $\mathcal{P}_\theta$ is weak$^*$ closed and convex. Since $\Delta^F(\mathcal{X})$ is weak$^*$ compact, $0=\inf_{P \in \Delta^F(\mathcal{X})} c_\theta(P)=\min_{P \in \Delta^F(\mathcal{X})} c_\theta(P)$, so $\mathcal{P}_\theta$ is nonempty. Finally, Equation (\ref{eq:loss GS-89}) follows from Proposition 19(ii) of \cite{maccheroni2006ambiguity}.

\underline{Part (ii)}: This immediately follows from Lemma 32 of \cite{maccheroni2006ambiguity} and Theorem \ref{thm:D-rep}(iii).

\underline{Part (iii)}: Suppose $\succsim_\theta$ satisfies Axioms \ref{ax:7}-\ref{ax:ten}, Certainty Independence, and Axiom \ref{ax:5}. By Lemma 32 of \cite{maccheroni2006ambiguity}, for any $t\geq 0$,
\[
\mathcal{P}_\theta=\{P \in \Delta^F(\mathcal{X}): c_\theta(P)=0\}=\{P \in \Delta^F(\mathcal{X}): c_\theta(P)\leq t\}
\] 
Hence, $\succsim_\theta$ satisfies Axiom \ref{ax:18} if and only if it satisfies Axioms \ref{ax:14} and \ref{ax:15}. The desired representation then follows from Theorem \ref{thm:D-rep}(iv).

\paragraph{Multiplier Preferences.}
Recall that an event $\mathcal{E}\subseteq \mathcal{X}$ is \emph{nonnull under $\succsim_\theta$} if there exist $L,L',M \in \mathcal{L}$ such that $L_\mathcal{E} M \succ_\theta L_\mathcal{E}' M$.

\begin{lemma}\label{lem:simple-nonnull}
Under the basic axioms, if $\mathcal{E}$ is nonnull under $\succsim_\theta$, then there exist $L,L',M \in \mathcal{L}_0$ such that $L_\mathcal{E} M \succ_\theta L_\mathcal{E}' M$.
\end{lemma}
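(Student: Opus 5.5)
The plan is a discretization argument in the spirit of the proof of Lemma \ref{lem: simple-to-bounded}. Since $\mathcal{E}$ is nonnull, fix bounded $L,L',M\in\mathcal{L}$ with $L_\mathcal{E}M\succ_\theta L'_\mathcal{E}M$. By $\theta$-Relevance we may replace each function by its $\theta$-section, so we work with bounded measurable functions on $\mathcal{X}$. Passing to the utility-act formulation via Lemma \ref{lem:utility acts} is optional; I will argue directly with losses.

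The first step opens a strict gap. By $\theta$-Continuity the set $\{K\in\mathcal{L}: L'_\mathcal{E}M\succsim_\theta K\}$ is closed, so its complement is open. That complement contains $L_\mathcal{E}M$, so there is $\epsilon>0$ such that every $K$ with $\|K-L_\mathcal{E}M\|_\infty\le\epsilon$ satisfies $K\succ_\theta L'_\mathcal{E}M$. Monotonicity applied to $L_\mathcal{E}M+\epsilon$ (a shift that is constant on all of $\mathcal{X}$) then gives
\[
L_\mathcal{E}M \succsim_\theta L_\mathcal{E}M+\epsilon\succ_\theta L'_\mathcal{E}M .
\]

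The second step discretizes so as to preserve this gap. Take a grid of mesh $k^{-1}<\epsilon/2$. Round $L$ and $M$ downward to the grid, giving simple $L_k\le L$ and $M_k\le M$ with $\|L-L_k\|_\infty,\|M-M_k\|_\infty\le k^{-1}$. Round $L'$ upward, giving simple $L'_k\ge L'$ with $\|L'_k-L'\|_\infty\le k^{-1}$. For $L'_k$ we need the spliced function to be $L'_k$ on $\mathcal{E}$ and $M_k$ off $\mathcal{E}$. Since $M_k\le M$, we cannot compare $(L'_k)_\mathcal{E}M_k$ with $L'_\mathcal{E}M$ by monotonicity directly; this is why the gap from the first step is needed.

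We have pointwise $(L'_k)_\mathcal{E}M_k\ge L'_\mathcal{E}M-k^{-1}$. Monotonicity therefore gives $L'_\mathcal{E}M-k^{-1}\succsim_\theta (L'_k)_\mathcal{E}M_k$. Meanwhile $(L_k)_\mathcal{E}M_k\le L_\mathcal{E}M$, so $(L_k)_\mathcal{E}M_k\succsim_\theta L_\mathcal{E}M$. We also need $L_\mathcal{E}M\succ_\theta L'_\mathcal{E}M-k^{-1}$. To get it, apply the first step with $L'_\mathcal{E}M$ as the reference point instead. The set $\{K: K\succsim_\theta L_\mathcal{E}M\}$ is closed and excludes $L'_\mathcal{E}M$, so for $k$ large, $L_\mathcal{E}M\succ_\theta L'_\mathcal{E}M-k^{-1}$. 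Chaining these relations by transitivity (Axiom \ref{ax:1}) gives
\[
(L_k)_\mathcal{E}M_k\succ_\theta (L'_k)_\mathcal{E}M_k,
\]
with all three functions simple. Measurability of each spliced function is immediate, since $\mathcal{E}$ is Borel.

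The only delicate point is that the common function $M$ must be replaced by the same simple $M_k$ on both sides. Because of this, one cannot round the two sides in opposite directions on $\mathcal{E}^c$, and the argument instead relies on the open strict gap supplied by $\theta$-Continuity. This is the step I expect to be the main obstacle. Everything else is routine bookkeeping with Monotonicity and transitivity.
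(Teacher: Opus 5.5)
Your argument is correct and is essentially the paper's: you discretize with a common simple $M_k$ off $\mathcal{E}$, use Monotonicity on the preferred side, and use $\theta$-Continuity at $L'_\mathcal{E}M$ on the other side. The one difference is that the paper applies continuity directly to the approximants $(L'_j)_\mathcal{E}M_j\to L'_\mathcal{E}M$ in sup norm, while you apply it to the shift $L'_\mathcal{E}M-k^{-1}$ and then invoke Monotonicity, which is equivalent. Your first step (the $\epsilon$-ball around $L_\mathcal{E}M$) is never used in your final chain, so it and the mesh condition $k^{-1}<\epsilon/2$ can be dropped; only the continuity argument at $L'_\mathcal{E}M$ is needed.
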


\paragraph{Proof of Lemma \ref{lem:simple-nonnull}.} Let $L,L',M \in \mathcal{L}$ such that $L_\mathcal{E} M \succ_\theta L_\mathcal{E}' M$. By $\theta$-Relevance and since $\mathcal{E}\subseteq \mathcal{X}$, $(L_\theta)_\mathcal{E} M_\theta \succ_\theta (L_\theta')_\mathcal{E} M_\theta$. For each $k \geq 1$, define $L_{\theta,k}$ and $M_{\theta,k}$ exactly analogously to the definition of $f_{\theta,k}$ in the proof of Lemma \ref{lem: simple-to-bounded}. Then, Monotonicity implies:
\[
(L_{\theta,k})_\mathcal{E} (M_{\theta,k})\succsim_\theta (L_\theta)_\mathcal{E} M_\theta \succ_\theta (L_\theta')_\mathcal{E} M_\theta
\]
Next, for each $j\geq 1$, define $L_{\theta,j}'$ exactly analogously to the definition of $g_{\theta,j}$ in the proof of Lemma \ref{lem: simple-to-bounded}. By construction, $\|(L_{\theta,j}')_\mathcal{E} M_{\theta,j}-(L_\theta')_\mathcal{E} M_\theta\|_\infty \to 0$ as $j \to \infty$. Hence, by $\theta$-Continuity, there exists $j\geq 1$ large enough such that 
\[
(L_\theta)_\mathcal{E} M_\theta \succ_\theta (L_{\theta,j}')_\mathcal{E} M_{\theta,j}
\]
Conclude by choosing $k=j$. \qed

\medskip

The following result implies Proposition \ref{prop:multiplier}.

\begin{thm}\label{thm:strz-extended}
Throughout this result, assume that for each $\theta \in \Theta$, $\mathcal{X}$ has at least three disjoint events that are nonnull under $\succsim_\theta$. 
\begin{itemize}
    \item[(i)] Fix any $\theta \in \Theta$. The preference $\succsim_\theta$ satisfies Axioms \ref{ax:7}--\ref{ax:grounded at Q}, Monotone Continuity, and the Sure Thing Principle if and only if $\succsim_\theta$ has a \emph{multiplier representation} with reference probability $Q_\theta$: there exists $\lambda_\theta \in (0,\infty]$ such that the function
    \begin{equation}\label{eq:loss-multiplier}
        V_\theta(L)=\max_{P \in \Delta(\mathcal{X})} \left(\int L_\theta \ dP-\lambda_\theta \cdot \KL(P\|Q_\theta) \right)
    \end{equation}
    represents $\succsim_\theta$ on $\mathcal{L}$.
    \item[(ii)] Suppose each $\succsim_\theta$ satisfies the axioms from Part (i). Assume that there exists a collection of events $\{\mathcal{E}_\theta\}_{\theta \in \Theta}$ and $q \in (0,1)$ such that $Q_\theta(\mathcal{E}_\theta)=q$ for all $\theta \in \Theta$. The family of preferences additionally satisfies Uniform Misspecification Concern if and only if $\lambda_\theta$ is constant across $\theta \in \Theta$.
\end{itemize}
\end{thm}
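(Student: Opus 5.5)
Part (i) reduces to Strzalecki's (2011) axiomatization of multiplier preferences, transported to our loss-function domain through the machinery already set up in this appendix. By Theorem \ref{thm:D-rep}, Axioms \ref{ax:7}--\ref{ax:5} give a variational representation of $\succsim_\theta$ on $\mathcal{L}$ with a unique cost $c_\theta$. Axiom \ref{ax:grounded at Q} gives $c_\theta(Q_\theta)=0$.

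By Lemma \ref{lem:utility acts}, I would work with utility acts $f=-L$ restricted to $\mathcal{F}_0(\mathcal{X})$. This is exactly the Anscombe--Aumann environment with state space $(\mathcal{X},\mathcal{B}(\mathcal{X}))$ and consequence space $\mathbb{R}$ that Strzalecki uses. Lemma \ref{lem:simple-nonnull} ensures that the three disjoint nonnull events remain nonnull when only simple acts are used, so his richness hypothesis holds on $\mathcal{F}_0(\mathcal{X})$.

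Monotone Continuity gives countable additivity. By Theorem 13 of \cite{maccheroni2006ambiguity}, the sublevel sets of $c_\theta$ lie in $\Delta(\mathcal{X})$ and are weakly compact; this is the remark following Axiom \ref{ax:15}. The Sure Thing Principle then lets me invoke Strzalecki's main theorem. It yields $c_\theta(P)=\lambda_\theta \KL(P\|P^*_\theta)$ for some $\lambda_\theta\in(0,\infty]$ and some countably additive reference $P^*_\theta$, where $\lambda_\theta=\infty$ denotes the SEU limit with $c_\theta$ the indicator of $\{P^*_\theta\}$.

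To identify the reference measure, note that KL divergence vanishes only at $P=P^*_\theta$. Since $c_\theta(Q_\theta)=0$, this forces $P^*_\theta=Q_\theta$ (when $\lambda_\theta<\infty$, and trivially in the SEU case). Lemma \ref{lem: simple-to-bounded} then extends the representation from simple acts to all of $\mathcal{L}$. The max in \eqref{eq:loss-multiplier} is attained by the Gibbs tilt $dP\propto \exp(L_\theta/\lambda_\theta)\,dQ_\theta$ (Proposition \ref{thm:multiplier-dual}). Necessity is a direct check: the dual form $\lambda\log E_{Q_\theta}[\exp(L/\lambda)]$ is separable across events in the sense required by the Sure Thing Principle, and the remaining axioms are routine.

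For part (ii), I would use the dual form to compute certainty equivalents of binary acts. Take $L=a\mathbf{1}_{\mathcal{E}_\theta}$ and $L'=a\mathbf{1}_{\mathcal{E}_{\theta'}}$. Both have pushforward distribution $q\delta_a+(1-q)\delta_0$ under $Q_\theta$ and $Q_{\theta'}$ respectively, so Uniform Misspecification Concern applies and forces $CE_\theta(L)=CE_{\theta'}(L')$ for every $a$. In formulas,
\[
\lambda_\theta\log\left(qe^{a/\lambda_\theta}+1-q\right)=\lambda_{\theta'}\log\left(qe^{a/\lambda_{\theta'}}+1-q\right)\quad\forall a,
\]
with the convention that $\lambda=\infty$ gives $qa$.

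The main obstacle is showing that the map $\lambda\mapsto \lambda\log(qe^{a/\lambda}+1-q)$ is injective on $(0,\infty]$ for a suitable $a$. For this I would use strict monotonicity in $\lambda$, which follows from strict convexity of the cumulant generating function of a nondegenerate Bernoulli; this is where $q\in(0,1)$ is used. Alternatively, comparing second-order expansions in $a$ gives $qa+\tfrac{q(1-q)}{2\lambda}a^2+o(a^2)$, which pins down $\lambda$ directly. Conversely, if $\lambda_\theta$ is constant across $\theta$, then $CE_\theta$ depends on $L$ only through the law of $L_\theta$ under $Q_\theta$, which gives the axiom.
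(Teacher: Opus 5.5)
Your proposal is correct and follows essentially the same route as the paper: in part (i) you transport Strzalecki's Theorem 1 to the loss domain via the utility-act, simple-to-bounded and simple-nonnull lemmas, then pin the reference measure to $Q_\theta$ using $c_\theta(Q_\theta)=0$ and uniqueness of $c_\theta$; in part (ii) you compare certainty equivalents of binary acts on $\mathcal{E}_\theta$ and $\mathcal{E}_{\theta'}$ with common law $q\delta_a+(1-q)\delta_0$ and use strict monotonicity of $\lambda\mapsto\lambda\log(qe^{a/\lambda}+1-q)$. Your strict-convexity-of-the-Bernoulli-CGF argument supplies the monotonicity step that the paper only asserts as straightforward.
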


\paragraph{Proof of Theorem \ref{thm:strz-extended}.} Necessity of the axioms (the backwards directions of each part) is straightforward, so we prove sufficiency.

\underline{Part (i)}: Fix any $\theta \in \Theta$. By Lemmas \ref{lem:utility acts}-\ref{lem:simple-nonnull} and Theorem 1 of \cite{strzalecki2011axiomatic},\footnote{More precisely, by the portion of the argument for the proof of Theorem 1 of \cite{strzalecki2011axiomatic} which delivers the representation for utility acts.} there exist $\lambda_\theta \in (0,\infty]$ and $\Tilde{Q}_\theta \in \Delta(\mathcal{X})$ such that
\[
\max_{P \in \Delta(\mathcal{X})} \left(\int L_\theta \ dP-\lambda_\theta \cdot \KL(P\|\Tilde{Q}_\theta)\right)
\]
represents $\succsim_\theta$ on $\mathcal{L}$. By Axiom \ref{ax:grounded at Q} and uniqueness of $c_\theta$, $\lambda_\theta \cdot \KL(Q_\theta\|\Tilde{Q}_\theta)=0$ and hence $\Tilde{Q}_\theta=Q_\theta$, which delivers Equation (\ref{eq:loss-multiplier}).

\underline{Part (ii)}: Suppose that there exist $\theta,\theta'$ such that $\lambda_\theta>\lambda_{\theta'}$. By Lemma \ref{lem:utility acts}, it suffices to show that the induced preferences over utility acts do not satisfy Uniform Misspecification Concern. Define $f_\theta=1_{\mathcal{E}_\theta}0$ and $g_{\theta'}=1_{\mathcal{E}_{\theta'}}0$. Note that the law of $f_\theta$ under $Q_\theta$ coincides with the law of $g_{\theta'}$ under $Q_{\theta'}$: $q\delta_1+(1-q)\delta_0$. For each $\lambda \in (0,\infty]$, define:
\[
\phi_\lambda(x)=\begin{cases} 
      -\exp(-\lambda^{-1}x) & \lambda<\infty \\
      -x & \lambda=\infty
   \end{cases}
\]
By the dual representation of multiplier preferences: for any $\lambda_\theta,\lambda_{\theta'} \in (0,\infty]$,
\[
U_{\theta}(f_\theta)=\phi_{\lambda_\theta}^{-1}\left(\int \phi_{\lambda_\theta}(f_\theta) \ dQ_\theta \right)=\phi_{\lambda_\theta}^{-1}\Big(q\phi_{\lambda_\theta}(1)+(1-q)\phi_{\lambda_\theta}(0) \Big)
\]
and similarly,
\[
U_{\theta'}(g_{\theta'})=\phi_{\lambda_{\theta'}}^{-1}\Big(q\phi_{\lambda_{\theta'}}(1)+(1-q)\phi_{\lambda_{\theta'}}(0) \Big)
\]
It is straightforward to show that, for any $q \in (0,1)$, the map
\[
\lambda \mapsto \phi_{\lambda}^{-1}\left(q\phi_{\lambda}(1)+(1-q)\phi_\lambda(0) \right)=\begin{cases} 
      -\lambda \log\Big(q\exp(-\lambda^{-1})+(1-q) \Big) & \lambda<\infty \\
      q & \lambda=\infty
\end{cases}
\]
is strictly increasing in $\lambda \in (0,\infty]$, and hence
\[
U_{\theta'}(g_{\theta'})<U_{\theta}(f_\theta)
\]
Note that $U_{\theta}$ and $U_{\theta'}$ are normalized. Hence, choosing $k \in (U_{\theta'}(g_{\theta'}),U_{\theta}(f_\theta))$ yields a violation of Uniform Misspecification Concern, since
\[
k \succ_{\theta'} g_{\theta'} \quad \text{and} \quad f_\theta \succ_\theta k
\]
\qed

\medskip

\paragraph{Constrained Multiplier} 
We show the following result, which implies Theorem \ref{thm:constrained-multiplier}. Let $\chi_C$ denote the convex indicator function for a set $C\subseteq \Delta^F(\mathcal{X})$.

\begin{thm}\label{thm: IP-var}
Fix any $\theta \in \Theta$. Suppose that for each $i=1,2$, $\succsim_{\theta,i}$ has a variational representation on $\mathcal{L}$: there exists a convex, weak$^*$ lower-semicontinuous function $c_{\theta,i}: \Delta^F(\mathcal{X}) \rightarrow [0,\infty]$ with $\inf_{P \in \Delta^F(\mathcal{X}) } c_{\theta,i}(P)=0$ such that
    \begin{equation}\label{var-rep-theta-i}
        V_{\theta,i}(L)=\max_{P \in \Delta^F(\mathcal{X})} \left(\int L_\theta \ dP-c_{\theta,i}(P) \right)
    \end{equation}
    represents $\succsim_{\theta,i}$ on $\mathcal{L}$, and additionally suppose that $c_{\theta,i}(Q_\theta)=0$ for each $i=1,2$. $\succsim_\theta$ satisfies the Basic Axioms ($\theta$-Relevance, Nontrivial Weak Order, Monotonicity, $\theta$-Continuity) and $(\succsim_\theta,\succsim_{\theta,1},\succsim_{\theta,2})$ satisfy Indirect Pareto if and only if $\succsim_\theta$ has a variational representation
    \begin{equation}\label{ip-rep}
        V_\theta(L)=\max_{P \in \Delta^F(\mathcal{X})} \left(\int L_\theta \ dP-(c_{\theta,1}(P)+c_{\theta,2}(P)) \right)
    \end{equation}
    on $\mathcal{L}$. In particular for the case where $c_{\theta,1}(P)=\lambda \cdot \KL(P\|Q_\theta)$ and $c_{\theta,2}=\chi_{\mathcal{P}_\theta}(P)$ for some nonempty, convex, weak$^*$ closed $\mathcal{P}_\theta \subseteq \Delta^F(\mathcal{X})$ containing $Q_\theta$, we may write Equation \ref{ip-rep} as:
    \begin{equation}\label{eq:constrained-multiplier-thm-18}
        V_\theta(L)=\sup_{P \in \Delta(\mathcal{X}) \cap \mathcal{P}_\theta} \left(\int L_\theta \ dP-\lambda \KL(P\|Q_\theta)\right)
    \end{equation}
\end{thm}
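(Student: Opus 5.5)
The plan is to work with utility acts via Lemma \ref{lem:utility acts}, so each $U_{\theta,i}(f)=\min_P\{\int f\,dP+c_{\theta,i}(P)\}$ is a normalized, monotone, translation-invariant concave niveloid. Define the candidate $U_\theta(f)=\min_P\{\int f\,dP+c_{\theta,1}(P)+c_{\theta,2}(P)\}$. The key identity is that $U_\theta$ is the sup-convolution of the two components:
\[
U_\theta(f)=\sup\left\{U_{\theta,1}(f_1)+U_{\theta,2}(f_2): f_1+f_2=f\right\}.
\]
This is the standard Fenchel fact that the conjugate of a sum of convex functions is the infimal convolution of the conjugates; equivalently, the concave conjugate of $c_{\theta,1}+c_{\theta,2}$ is the sup-convolution of the $U_{\theta,i}$.

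For this identity I would first prove the inequality ``$\ge$'' directly. For any split $f=f_1+f_2$ and any $P$, $\int f\,dP+c_1+c_2\ge U_{\theta,1}(f_1)+U_{\theta,2}(f_2)$; taking the minimum over $P$ gives the claim. For ``$\le$'', I would use a minimax theorem. The relevant space is bounded functions paired with $\Delta^F(\mathcal X)$, which is weak$^*$ compact, with $c_{\theta,i}$ convex and lsc. The Lagrangian $\min_P\sup_{g}\{\int f\,dP+c_1(P)+\ldots\}$ can then be rewritten as
\[
\sup_{f_1}\min_P\left\{\int f_1\,dP+c_1(P)+\int (f-f_1)\,dP+c_2(P)\right\}\le \sup_{f_1}\left[U_1(f_1)+U_2(f-f_1)\right],
\]
after writing $c_2$ as its own biconjugate, $c_2(P)=\sup_g\{U_2(g)-\int g\,dP\}$ (Theorem \ref{thm:D-rep}(ii)). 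I expect this exchange of min and sup, a Sion/Fan step using compactness of $\Delta^F$ and concavity–convexity, to be the main technical obstacle. In particular, the effective domain of $c_1+c_2$ must be nonempty, and it is, since $Q_\theta$ lies in both with cost $0$. This also gives $\inf(c_1+c_2)=0$, so the sum is a legitimate variational cost.

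With the identity in hand, I would prove sufficiency. The goal is to show $V_\theta$, the negated $U_\theta$, represents $\succsim_\theta$. By the basic axioms each $L$ has a certainty equivalent $CE_\theta(L)$, as in Step 1 of Theorem \ref{thm:CD-rep}, and $CE_\theta$ represents $\succsim_\theta$. Indirect Pareto says $L\succ_\theta r$ iff some split satisfies $V_{\theta,1}(L_C)<r_C$ and $V_{\theta,2}(L_M)<r_M$. The sup-convolution identity, rewritten for losses, says
\[
V_\theta(L)=\inf\left\{V_{\theta,1}(L_C)+V_{\theta,2}(L_M):L_C+L_M=L\right\}.
\]
So such a split exists iff $V_\theta(L)<r$: if $V_\theta(L)<r$, pick a near-optimal split and distribute the slack between $r_C$ and $r_M$; conversely, strict inequalities add. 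Hence $\{r: L\succ_\theta r\}=(V_\theta(L),\infty)$. Together with $\theta$-Continuity and Monotonicity on constants, this gives $CE_\theta(L)=V_\theta(L)$, so $V_\theta$ represents $\succsim_\theta$. Necessity reverses the same argument; the basic axioms hold for any variational representation.

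Finally, I would specialize to $c_1=\lambda\KL(\cdot\|Q_\theta)$, extended to $\Delta^F$ as $+\infty$ off countably additive $P\ll Q_\theta$, and $c_2=\chi_{\mathcal P_\theta}$. Since $c_1$ is finite only on $\Delta(\mathcal X)$, the maximum in \eqref{ip-rep} is over $P\in\Delta(\mathcal X)\cap\mathcal P_\theta$ with $P\ll Q_\theta$. This yields \eqref{eq:constrained-multiplier-thm-18} directly, with the max becoming a sup over countably additive measures. Theorem \ref{thm:constrained-multiplier} then follows by taking $\succsim_{\theta,1}=\succsim^M_\theta$ and $\succsim_{\theta,2}=\succsim^C_\theta$.
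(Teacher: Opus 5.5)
Your proposal is correct. Outside one step it matches the paper. Your ``add the strict inequalities / split the slack'' argument is the paper's lemma $\mathrm{epi}_S(F\Box G)=\mathrm{epi}_S(F)+\mathrm{epi}_S(G)$ feeding Lemma \ref{lem:IP-char}, and your certainty-equivalent step plays the role of the paper's $\inf\{r:L_\theta\succ_\theta r\}$.

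The genuine difference is how you prove the key identity $V_\theta=V_{\theta,1}\Box V_{\theta,2}$, which is your sup-convolution for utility acts. The paper uses no minimax theorem:
\begin{itemize}
\item it observes $V_i=\bar c_i^{*}$ and $V_i^{*}=\bar c_i$ by Fenchel--Moreau for the pairing of $\mathcal L(\mathcal X)$ with $ba(\mathcal X)$;
\item it uses the unconditional rule $(V_1\Box V_2)^*=V_1^*+V_2^*$;
\item it shows $V_1\Box V_2$ is convex, proper (bounded below by $\int L\,dQ_\theta$) and 1-Lipschitz, hence lower semicontinuous, so it equals its biconjugate $(\bar c_1+\bar c_2)^*=V_\theta$.
\end{itemize}
You instead prove $V_\theta\le V_{\theta,1}\Box V_{\theta,2}$ by hand and get the reverse from a min--sup exchange. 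There, weak$^*$ compactness of $\Delta^F(\mathcal X)$ rules out a duality gap, the role played in the paper by the Lipschitz/lower-semicontinuity argument. Your route gives a transparent saddle-point picture but imports a minimax theorem; the paper's needs only conjugate calculus. Your specialization step is also more direct than the paper's detour through sublevel sets and Theorem \ref{thm:D-rep}(iv): the combined cost is finite only at countably additive $P\ll Q_\theta$ in $\mathcal P_\theta$, and the maximum is attained at finite cost.

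The minimax step needs two repairs.

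First, the displayed chain points the wrong way. As written, $\min_P\{\int f_1\,dP+c_1(P)+\int(f-f_1)\,dP+c_2(P)\}$ is just $U_\theta(f)$, and it is $\ge U_1(f_1)+U_2(f-f_1)$, not $\le$. What you mean is
\[
U_\theta(f)=\min_P\sup_g\left\{\int(f-g)\,dP+c_1(P)+U_2(g)\right\}=\sup_g\min_P\{\cdots\}=\sup_g\left[U_1(f-g)+U_2(g)\right].
\]

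Second, this saddle function equals $+\infty$ wherever $c_1=+\infty$, so Sion's theorem in its usual real-valued form does not apply verbatim. The simplest fix is to dualize both costs:
\[
U_\theta(f)=\min_P\sup_{g_1,g_2}\left\{\int(f-g_1-g_2)\,dP+U_1(g_1)+U_2(g_2)\right\}.
\]
This integrand is real-valued, affine and weak$^*$ continuous in $P$, and concave and continuous in $(g_1,g_2)$, so Sion applies. After exchanging, use $\min_P\int h\,dP=\inf_x h(x)$. Monotonicity and translation invariance of $U_1$ then give $\inf(f-g_1-g_2)+U_1(g_1)\le U_1(f-g_2)$, which collapses the bound to $\sup_g[U_1(f-g)+U_2(g)]$.
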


The following definitions and lemmas will be useful for the proof of Theorem \ref{thm: IP-var}. Let $V$ be a real vector space. The \emph{strict epigraph} of a function $F: V \rightarrow [-\infty,+\infty]$ is the set $\text{epi}_S(F):=\{(v,r) \in V \times \mathbb{R}: F(v)<r\}$.

\begin{defn}[\cite{zalinescu2002convex} Theorem 2.1.3(ix)]
Given functions $F,G: V \rightarrow (-\infty,+\infty]$, their \emph{inf-convolution} is the function $F \Box G: V \rightarrow [-\infty,+\infty]$, where:
\begin{equation}\label{inf-conv-defn}
    (F \Box G)(v)=\inf_{v_1,v_2 \in V: v_1+v_2=v} \Big(F(v_1)+G(v_2) \Big)
\end{equation}
\end{defn}

\begin{lemma}\label{lem:epi-sum}
For any $F,G: V \rightarrow (-\infty,+\infty]$, $\text{epi}_S(F\Box G)=\text{epi}_S(F)+\text{epi}_S(G)$.\footnote{This relation is stated without proof as Equation (2.6) of \cite{zalinescu2002convex}. For completeness, we provide a proof here, which follows immediately from the definitions.}
\end{lemma}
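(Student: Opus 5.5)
We prove the set identity $\text{epi}_S(F\Box G)=\text{epi}_S(F)+\text{epi}_S(G)$ by double inclusion, working directly from the definition \eqref{inf-conv-defn}. No convexity or lower semicontinuity is needed; we only use the strict inequality in the strict epigraph and the fact that $F$ and $G$ never take the value $-\infty$.

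\emph{Inclusion $\supseteq$.} Take $(v_1,r_1)\in\text{epi}_S(F)$ and $(v_2,r_2)\in\text{epi}_S(G)$, so $F(v_1)<r_1$ and $G(v_2)<r_2$. Because both values are below real numbers, they are finite, and their sum is well defined with no $\infty-\infty$ issue. The decomposition $v=v_1+v_2$ is admissible in the infimum defining $(F\Box G)(v)$. Hence
\[(F\Box G)(v_1+v_2)\le F(v_1)+G(v_2)<r_1+r_2,\]
so $(v_1+v_2,r_1+r_2)\in\text{epi}_S(F\Box G)$.

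\emph{Inclusion $\subseteq$.} This is where strictness matters. Let $(v,r)$ satisfy $(F\Box G)(v)<r$. Since the infimum lies strictly below $r$ (this also covers the case where it equals $-\infty$), some decomposition $v=v_1+v_2$ satisfies $F(v_1)+G(v_2)<r$. Both terms then lie in $\mathbb{R}$, since neither is $-\infty$ and their sum is not $+\infty$.

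Set $\varepsilon=r-F(v_1)-G(v_2)>0$, and define
\[r_1=F(v_1)+\varepsilon/2,\qquad r_2=G(v_2)+\varepsilon/2.\]
Then $r_1+r_2=r$, with $F(v_1)<r_1$ and $G(v_2)<r_2$. Therefore $(v,r)=(v_1,r_1)+(v_2,r_2)$ lies in $\text{epi}_S(F)+\text{epi}_S(G)$.

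The only delicate point is this second inclusion. We need strict inequality, not the non-strict epigraph, to pick a near-optimal decomposition when the infimum is not attained. We also need to exclude $-\infty$ values so that the arithmetic with $\varepsilon$ is legitimate. Both conditions hold under the stated hypotheses.
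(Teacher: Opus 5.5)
Your proof is correct and follows the paper's argument essentially verbatim. It uses double inclusion: the $\subseteq$ direction picks a decomposition $v=v_1+v_2$ with $F(v_1)+G(v_2)<r$ and splits the slack equally (your $\varepsilon/2$ is the paper's $\alpha$), and the $\supseteq$ direction bounds the infimum by the admissible decomposition $v_1+v_2$. Your explicit check that $F(v_1)$ and $G(v_2)$ are finite is a small piece of care the paper leaves implicit.
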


\begin{proof}
First, we show the forwards inclusion: $\text{epi}_S(F \Box G) \subseteq \text{epi}_S(F)+\text{epi}_S(G)$. Note that
\begin{align*}
    (v,r) \in \text{epi}_S(F \Box G) \\
    \iff [F \Box G](v)=\inf_{v_1+v_2=v}\Big(F(v_1)+G(v_2)\Big)<r \\
    \iff \exists \ y_1+y_2=v \text{ s.t. } F(y_1)+G(y_2)<r 
\end{align*}
Let $\alpha:=\frac{1}{2}(r-F(y_1)-G(y_2))>0$, and define $r_1:=F(y_1)+\alpha$ and $r_2:=G(y_2)+\alpha$. Then we have $(y_1,r_1) \in \text{epi}_S(F)$ and $(y_2,r_2) \in \text{epi}_S(G)$ with $(y_1,r_1)+(y_2,r_2)=(v,r)$.

Second, we show the backwards inclusion: $\text{epi}_S(F)+\text{epi}_S(G) \subseteq \text{epi}_S(F \Box G)$. Let $(y_1,r_1) \in \text{epi}_S(F)$ and $(y_2,r_2) \in \text{epi}_S(G)$. Note that by definition,
\[
[F\Box G](y_1+y_2)=\inf_{v_1+v_2=y_1+y_2} \Big(F(v_1)+G(v_2)\Big)\leq F(y_1)+G(y_2)<r_1+r_2
\]
and hence $(y_1+y_2,r_1+r_2) \in \text{epi}_S(F \Box G)$.
\end{proof}

Let $\mathcal{L}(\mathcal{X})$ be the Banach space of bounded, Borel measurable functions $L: \mathcal{X} \rightarrow \mathbb{R}$, endowed with the $\sup$ norm.\footnote{In particular, $\mathcal{L}(\mathcal{X})$ is a separated, locally convex vector space, as in the setup of \cite{zalinescu2002convex} Section 2.3.} Let $ba(\mathcal{X})$ be the real vector space of bounded, finitely additive, signed Borel measures on $\mathcal{X}$. By \cite{aliprantis2006infinite} Theorem 14.4, $ba(\mathcal{X})$ is the topological (norm) dual of $\mathcal{L}(\mathcal{X})$. Recall that we endowed $ba(\mathcal{X})$ (and $\Delta^F(\mathcal{X})$) with the corresponding weak$^*$-topology. Note that under this topology, $ba(\mathcal{X})$ is a separated, locally convex vector space whose topological dual is $\mathcal{L}(\mathcal{X})$.

\begin{lemma}
For each $i=1,2$, let $c_{i}: \Delta^F(\mathcal{X}) \rightarrow [0,\infty]$ be a convex, weak$^*$ lower-semicontinuous function  with $\inf_{P \in \Delta^F(\mathcal{X}) } c_{i}(P)=0$, let $V_i: \mathcal{L}(\mathcal{X}) \rightarrow \mathbb{R}$ be defined as in Equation (\ref{var-rep-theta-i}):
\[
V_i(L)=\max_{P \in \Delta^F(\mathcal{X})} \left(\int L \ dP-c_i(P) \right)
\]
and let $V: \mathcal{L}(\mathcal{X}) \rightarrow \mathbb{R}$ be defined as in Equation (\ref{ip-rep}):
\[
V(L)=\max_{P \in \Delta^F(\mathcal{X})} \left(\int L \ dP-c_1(P)-c_2(P) \right)
\]
Assume that $c_1(Q)=c_2(Q)=0$ for some $Q \in \Delta^F(\mathcal{X})$. Then, $V=V_1 \Box V_2$.
\end{lemma}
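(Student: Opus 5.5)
The claim is $V=V_1\Box V_2$. I will prove it by Fenchel conjugate duality on the dual pair $(\mathcal{L}(\mathcal{X}),ba(\mathcal{X}))$, identifying each $V_i$ and $V$ as conjugates of the penalty functions.

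Extend each $c_i$ to all of $ba(\mathcal{X})$ by setting $c_i=+\infty$ off $\Delta^F(\mathcal{X})$. The extension stays convex and weak$^*$ lower semicontinuous, since $\Delta^F(\mathcal{X})$ is weak$^*$ closed and convex. Then $V_i=c_i^*$ and $V=(c_1+c_2)^*$, where $^*$ denotes the Fenchel conjugate with respect to the pairing $\langle L,P\rangle=\int L\,dP$.

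The standard identity $(F\Box G)^*=F^*+G^*$ holds with no hypotheses. Applied here it gives $(V_1\Box V_2)^*=V_1^*+V_2^*$. By Fenchel--Moreau (the $c_i$ are proper, convex and lsc), $V_i^*=c_i^{**}=c_i$. Hence $(V_1\Box V_2)^*=c_1+c_2$, and conjugating once more yields $(V_1\Box V_2)^{**}=(c_1+c_2)^*=V$.

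It therefore suffices to show that $V_1\Box V_2$ is proper, convex and lower semicontinuous on $\mathcal{L}(\mathcal{X})$, because then it equals its biconjugate.

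Convexity is automatic, since the inf-convolution of convex functions is convex. Lemma \ref{lem:epi-sum} gives this directly: the strict epigraph of $V_1\Box V_2$ is the sum of two convex sets.

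Finiteness from below comes from the common point $Q$. For any split $L=L_1+L_2$, taking $P=Q$ in each maximization gives $V_1(L_1)+V_2(L_2)\ge\int L_1\,dQ+\int L_2\,dQ=\int L\,dQ$. Hence $(V_1\Box V_2)(L)\ge\int L\,dQ>-\infty$. Finiteness from above follows from taking $L_1=L$, $L_2=0$ and noting $V_2(0)=0$.

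For continuity, each $V_i$ is monotone and translation-equivariant, so it is 1-Lipschitz in the sup norm (as in Step 1 of Lemma \ref{lem: simple-to-bounded}). The inf-convolution is then also 1-Lipschitz. Indeed, if $\|L-L'\|_\infty\le\varepsilon$, any split $L=L_1+L_2$ produces the split $L'=(L_1+L'-L)+L_2$, and the value changes by at most $\varepsilon$. A finite, convex, norm-continuous function is lsc, so biconjugation recovers it.

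The main obstacle is applying the conjugacy identities in the right topologies. The $V_i$ live on the Banach space $\mathcal{L}(\mathcal{X})$, while the $c_i$ live on $ba(\mathcal{X})$ with the weak$^*$ topology. I will use the facts stated just before the lemma: $ba(\mathcal{X})$ is the norm dual of $\mathcal{L}(\mathcal{X})$, and $\mathcal{L}(\mathcal{X})$ is the dual of $ba(\mathcal{X})$ under weak$^*$. With those, Fenchel--Moreau applies on both sides: for $c_i$ under weak$^*$ lsc, and for $V_1\Box V_2$ under norm lsc with $\mathcal{L}(\mathcal{X})$ paired with its dual $ba(\mathcal{X})$. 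I will cite \cite{zalinescu2002convex} Theorem 2.3.3 for Fenchel--Moreau and Theorem 2.3.1 for the inf-convolution conjugate formula.

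I also need the $\max$ in $V$ to coincide with the supremum defining $(c_1+c_2)^*$. This holds by weak$^*$ compactness of $\Delta^F(\mathcal{X})$ together with upper semicontinuity of $P\mapsto\int L\,dP-c_1(P)-c_2(P)$.
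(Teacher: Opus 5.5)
Your proposal is correct and follows the paper's proof almost step for step. You extend $c_i$ by $+\infty$ to $ba(\mathcal{X})$, use $(V_1\Box V_2)^*=V_1^*+V_2^*=c_1+c_2$ via Fenchel--Moreau, and show $V_1\Box V_2$ is proper (via the lower bound $\int L\,dQ$), convex, and 1-Lipschitz, so that it equals its biconjugate $(c_1+c_2)^*=V$.
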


\begin{proof}
For each $i=1,2$, define $\overline{c}_{i}: ba(\mathcal{X}) \rightarrow [0,\infty]$ as:
\begin{equation}
    \overline{c}_{i}(P)=\begin{cases} 
      c_{i}(P) & P \in \Delta^F(\mathcal{X}) \\
      +\infty & \text{else}
   \end{cases}
\end{equation}
By definition of conjugate (Equation (2.30) of \cite{zalinescu2002convex}), $V_i=\overline{c}_i^*$. It is straightforward to verify that $\overline{c}_i$ is proper, convex, and weak$^*$ lower-semicontinuous. By the Fenchel--Moreau theorem (Theorem 2.3.3 of \cite{zalinescu2002convex}), $V_i^*=\overline{c}_i^{**}=\overline{c}_i$. By Theorem 2.3.1 of \cite{zalinescu2002convex}, $(V_1 \Box V_2)^*=V_1^*+V_2^*=\overline{c}_1+\overline{c}_2$. Since each $V_i$ is convex and proper, $V_1 \Box V_2$ is convex by Theorem 2.1.3 of \cite{zalinescu2002convex}. 

For ease of notation, let $W=V_1 \Box V_2$. Next, we show that $W$ is 1-Lipschitz, and hence lower-semicontinuous. First, we recall that since each $V_i$ is 1-Lipschitz, for any $L'',H \in \mathcal{L}$,
\[
V_2(L''+H)-V_2(L'')\leq |V_2(L''+H)-V_2(L'')|\leq \|H\|_\infty
\]
Hence, for any $L,L',L'',H \in \mathcal{L}$ with $L=L'+L''$,
\[
W(L+H)\leq V_1(L')+V_2(L''+H)\leq V_1(L')+V_2(L'')+\|H\|_\infty
\]
Taking the $\inf$ over $L',L'' \in \mathcal{L}$ with $L'+L''=L$ yields:
\[
W(L+H)\leq W(L)+\|H\|_\infty
\]
An analogous argument shows that $W(L)\leq W(L+H)+\|H\|_\infty$. Hence, $W$ is 1-Lipschitz. Finally, we show that $W$ is proper. By definition, $W<+\infty$. By assumption, for any $L,L' \in \mathcal{L}$, we have $V_1(L')\geq \int L' \ dQ$ and $V_2(L-L')\geq \int (L-L') \ dQ$. Hence,
\[
W(L)\geq \int L \ dQ>-\infty
\]
Since $W$ is proper, convex, and lower semicontinuous, another application of the Fenchel--Moreau theorem yields: $W^{**}=W$. Taking conjugates of both sides of $W^*=\overline{c}_1+\overline{c}_2$ yields
\[
V_1 \Box V_2=(\overline{c}_1+\overline{c}_2)^*=V
\]
as desired.
\end{proof}

\begin{lemma}\label{lem:IP-char}
Assume the assumptions of Theorem \ref{thm: IP-var}. $(\succsim_\theta,\succsim_{\theta,1},\succsim_{\theta,2})$ satisfy Indirect Pareto if and only if: for each $L \in \mathcal{L}$ and $r \in \mathbb{R}$,
\[
L \succ_\theta r \iff V_\theta(L_\theta)<r
\]
where $V_\theta$ is defined as in Equation (\ref{ip-rep}).
\end{lemma}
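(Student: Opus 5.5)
The plan is to show that the right-hand side of Indirect Pareto, applied to the pair $(\succsim_{\theta,1},\succsim_{\theta,2})$, is equivalent to $V_\theta(L_\theta)<r$. Once that is done, the lemma follows immediately: Indirect Pareto holds exactly when its left-hand side $L\succ_\theta r$ is equivalent to its right-hand side.

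First, by $\theta$-Relevance and the fact that each $V_{\theta,i}$ depends on $L$ only through $L_\theta$, I will reduce everything to functions on $\mathcal{X}$, working in $\mathcal{L}(\mathcal{X})$. Each $V_{\theta,i}$ is normalized, since $\inf c_{\theta,i}=0$ gives $V_{\theta,i}(k)=k$ for constants. Because $V_{\theta,i}$ represents $\succsim_{\theta,i}$, we have $L_i\succ_{\theta,i} r_i$ if and only if $V_{\theta,i}(L_i)<r_i$. In other words, $(L_i,r_i)$ lies in the strict epigraph $\text{epi}_S(V_{\theta,i})$.

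The right-hand side of Indirect Pareto therefore says that $(L_\theta,r)\in \text{epi}_S(V_{\theta,1})+\text{epi}_S(V_{\theta,2})$. There is a small bookkeeping point: the decompositions $L=L_C+L_M$ range over all of $\mathcal{L}$, not only $\theta$-sections. By $\theta$-Relevance only the $\theta$-sections matter, and any decomposition of $L_\theta$ in $\mathcal{L}(\mathcal{X})$ lifts to $\mathcal{L}$ through the identification in the appendix. So the two formulations agree.

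By Lemma \ref{lem:epi-sum}, this sum of strict epigraphs equals $\text{epi}_S(V_{\theta,1}\Box V_{\theta,2})$. By the preceding lemma, which applies because $c_{\theta,1}(Q_\theta)=c_{\theta,2}(Q_\theta)=0$, the inf-convolution equals $V_\theta$ as defined in (\ref{ip-rep}). Hence the right-hand side of Indirect Pareto holds if and only if $V_\theta(L_\theta)<r$, which completes the equivalence in both directions.

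The only delicate step is the lifting between $\mathcal{L}$ and $\mathcal{L}(\mathcal{X})$, together with checking that the sets match exactly and not merely up to closure. Strict epigraphs make the sum identity exact, with no closure issues, so Lemma \ref{lem:epi-sum} handles the convex-analytic part. The main care needed is in verifying the $\theta$-Relevance reduction for both directions of the decomposition.
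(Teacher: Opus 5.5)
Your proposal is correct and follows the paper's proof essentially step for step. You identify $L_i\succ_{\theta,i} r_i$ with $(L_{i\theta},r_i)\in\text{epi}_S(V_{\theta,i})$ using normalization of $V_{\theta,i}$, rewrite the right-hand side of Indirect Pareto as $(L_\theta,r)\in\text{epi}_S(V_{\theta,1})+\text{epi}_S(V_{\theta,2})$, and conclude via Lemma \ref{lem:epi-sum} and the identity $V_\theta=V_{\theta,1}\Box V_{\theta,2}$; your explicit lifting of a decomposition $L_\theta=A+B$ in $\mathcal{L}(\mathcal{X})$ back to $\mathcal{L}$ (e.g.\ $L_C:=A$, $L_M:=L-A$) just makes precise a step the paper uses implicitly.
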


\begin{proof}
Fix any $i=1,2$. Note that: for each $L \in \mathcal{L}$ and $r \in \mathbb{R}$,
\begin{align*}
    L \succ_{\theta,i} r \iff V_{\theta,i}(L_\theta)<r \iff (L_\theta,r) \in \text{epi}_S(V_{\theta,i})
\end{align*}
Hence, $(\succsim_\theta,\succsim_{\theta,1},\succsim_{\theta,2})$ satisfy Indirect Pareto if and only if: for each $L \in \mathcal{L}$ and $r \in \mathbb{R}$,
\begin{align*}
    L \succ_\theta r \iff \\
    \exists L_1,L_2 \in \mathcal{L}, \ r_1,r_2 \in \mathbb{R} \text{ s.t. } L=L_1+L_2, \ r=r_1+r_2, \ (L_{i\theta},r_i) \in \text{epi}_S(V_{\theta,i}) \\
    \iff (L_\theta,r) \in \text{epi}_S(V_{\theta,1})+\text{epi}_S(V_{\theta,2})=\text{epi}_S(V_{\theta,1}\Box V_{\theta,2})=\text{epi}_S(V_\theta)
    \iff V_\theta(L_\theta)<r
\end{align*}
where the last two equalities follow from the previous lemmas.
\end{proof}

\paragraph{Proof of Theorem \ref{thm: IP-var}.}
Necessity of Indirect Pareto immediately follows from Lemma \ref{lem:IP-char}, and necessity of the remaining axioms are straightforward. To prove sufficiency, observe that under the Basic Axioms, 
\[
L \succsim_\theta L' \iff L_\theta \succsim_\theta L_\theta' \iff \inf\{r: L_\theta \succ_\theta r\}\leq \inf\{r: L_\theta' \succ_\theta r\}
\]
where the first equivalence follows from $\theta$-Relevance. For the forwards direction, observe that if $L_\theta \succsim_\theta L_\theta'$ then $\{r: L_\theta' \succ_\theta r\} \subseteq \{r: L_\theta \succ_\theta r\}$. For the backwards direction, assume that $L_\theta \succ_\theta L_\theta'$. By Monotonicity and $\theta$-Continuity, there exists $r,r' \in \mathbb{R}$ such that $L_\theta \succ_\theta r\succ_\theta r' \succ_\theta L_\theta'$. Hence,
\[
\inf\{r: L_\theta \succ_\theta r\}\leq r<r' \leq \inf\{r: L_\theta' \succ_\theta r\}
\]
By Lemma \ref{lem:IP-char}, $\inf\{r: L_\theta \succ_\theta r\}=V_\theta(L_\theta)$. Finally, to prove the representation Equation (\ref{eq:constrained-multiplier-thm-18}) in the case where $c_{\theta,1}(P)=\lambda \cdot \KL(P\|Q_\theta)$ and $c_{\theta,2}=\chi_{\mathcal{P}_\theta}(P)$ for some nonempty, convex, weak$^*$ closed $\mathcal{P}_\theta \subseteq \Delta^F(\mathcal{X})$ containing $Q_\theta$, define $c_\theta=c_{\theta,1}+c_{\theta,2}$. For each $t\geq 0$, define 
\[
\{c_\theta\leq t\}=\{P \in \Delta^F(\mathcal{X}): c_\theta(P)\leq t\}
\]
By definition of KL divergence,
\[
\{c_\theta\leq t\}=\{P \in \Delta(\mathcal{X}): \lambda \KL(P\|Q_\theta)\leq t\}\cap \mathcal{P}_\theta \subseteq \Delta(\mathcal{X})
\]
Hence,
\begin{align*}
    \{c_\theta\leq t\}=\{c_\theta\leq t\}\cap \Delta(\mathcal{X}) \\
    \implies \{c_\theta\leq t\}=\overline{\{c_\theta\leq t\}}=\overline{\{c_\theta\leq t\}\cap \Delta(\mathcal{X})}
\end{align*}
Equation (\ref{eq:constrained-multiplier-thm-18}) then follows from Theorem \ref{thm:D-rep}(iv). \qed

\paragraph{Proof of Proposition \ref{prop:constrained-dual init}}

The primal problem is
\[
V_\theta(L)=\sup_{P\in\mathcal{P}_\theta}\left\{\int L(\theta,x)\,dP(x)-\lambda\cdot\KL(P\|Q_\theta)\right\}
\]
where $\mathcal{P}_\theta=\{P:E_P[\varphi(\theta,X)]=0\}$.  Factoring out $\lambda$,
\[
V_\theta(L)=-\lambda\inf_{P:E_P[\varphi]=0}\left\{\KL(P\|Q_\theta)+\int\left(-\frac{1}{\lambda}L(\theta,x)\right)dP(x)\right\}.
\]
Define the tilted distribution $P_0$ by
\[
\frac{dP_0}{dQ_\theta}(x)=\frac{\exp\left(\frac{1}{\lambda}L(\theta,x)\right)}{E_{Q_\theta}\left[\exp\left(\frac{1}{\lambda}L(\theta,X)\right)\right]}.
\]
A direct calculation shows that
\[
\KL(P\|Q_\theta)-\frac{1}{\lambda}\int L(\theta,x)\,dP(x)=\KL(P\|P_0)-\log E_{Q_\theta}\left[\exp\left(\frac{1}{\lambda}L(\theta,X)\right)\right],
\]
so
\[
V_\theta(L)=\lambda\log E_{Q_\theta}\left[\exp\left(\frac{1}{\lambda}L(\theta,X)\right)\right]-\lambda\inf_{P:E_P[\varphi]=0}\KL(P\|P_0).
\]
For the applications below we invoke this result for $L$ such that the right-hand side is finite, since in the case where it is infinite $V_\theta(L)$ is infinite as well. The remaining step is a standard KL minimization argument for linear moment constraints; see, for example, Kitamura (2009), which implies that
\[
\inf_{P:E_P[\varphi]=0}\KL(P\|P_0)=\sup_{\beta\in\mathbb{R}^b}\left\{-\log E_{P_0}\left[\exp\left(\beta'\varphi(\theta,X)\right)\right]\right\}.
\]
Since
\[
E_{P_0}\left[\exp\left(\beta'\varphi(\theta,X)\right)\right]=\frac{E_{Q_\theta}\left[\exp\left(\frac{1}{\lambda}L(\theta,X)+\beta'\varphi(\theta,X)\right)\right]}{E_{Q_\theta}\left[\exp\left(\frac{1}{\lambda}L(\theta,X)\right)\right]},
\]
substituting and simplifying yields
\[
V_\theta(L)=\inf_{\beta\in\mathbb{R}^b}\lambda\cdot\log\left(E_{Q_\theta}\left[\exp\left(\frac{1}{\lambda}L(\theta,X)-\beta'\varphi(\theta,X)\right)\right]\right),
\]
where the sign change on $\beta$ absorbs the negation. \qed

\subsection{Proofs for Results in Section \ref{sec: LAM}}

We first state and prove some auxiliary results which will be useful in the proof of Theorem \ref{thm: Constrained LAM}.

\begin{prop}\label{prop: Extended Representation Thm}
Assume that the model $\{Q_{n,\theta}:\theta\in\Theta\}$ is locally asymptotically normal at $\theta_0\in \mathrm{int}(\Theta)$ with scaling coefficient $\sqrt{n}$ and nonsingular Fisher information $I_0$.  Let
\[
Y_{n,h}=\frac{1}{\sqrt{n}}\sum_{i=1}^n\psi(\theta_{n,h},X_i).
\]
Under Assumption \ref{assu: moments}, if $T_n$ is a sequence of $\mathbb{R}^d$-valued statistics that is tight under $Q_{n,0}$, then for every subsequence there exists a further subsequence $\{s\}$ and a possibly randomized measurable function $t:\mathbb{R}^p\times\mathbb{R}^k\times[0,1]\to\mathbb{R}^d$ such that, with $U\sim \mathrm{Unif}[0,1]$ independent of $(X,Y)$ in the limit experiment \eqref{eq: Limit Experiment} and $T=t(X,Y,U)$,
\[
(T_s,Y_{s,h})\stackrel{d}{\to}(T,\,Y+\Psi h)
\quad\text{under }Q_{s,h}
\]
for every $h\in\mathbb{R}^p$.
\end{prop}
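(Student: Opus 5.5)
We follow the classical proof of the asymptotic representation theorem (van der Vaart, 1998, Theorem 7.10 / 9.3), extended to the joint vector of the statistic, the log-likelihood ratio and the moment vector. The plan has three steps: obtain a joint weak limit under the base measure $Q_{n,0}$, transfer it to $Q_{n,h}$ by Le Cam's third lemma, and then identify the limit law of $T$ as the law of a randomized function of $(X,Y)$ in the limit experiment \eqref{eq: Limit Experiment}. The last step uses the fact that this experiment is dominated and the conditional distribution of $T$ given the sufficient statistic does not depend on $h$.

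\textbf{Step 1 (joint limit under $Q_{n,0}$).} By Assumption \ref{assu: moments}, $(S_n,Y_{n,0})\to N(0,\Sigma)$ under $Q_{n,0}$, and LAN (Definition \ref{defn: LAN}) gives $\log dQ_{n,h}/dQ_{n,0}=h'S_n-\tfrac12 h'I_0h+o_{Q_{n,0}}(1)$; we identify $S_{n,\theta_0}$ with $S_n$, noting the two differ by $o_{Q_{n,0}}(1)$ under the stated conditions. Since $T_n$ is tight, Prokhorov's theorem lets us extract from any subsequence a further subsequence $\{s\}$ along which $(T_s,S_s,Y_{s,0})\to(T,S,Y_0)$ under $Q_{s,0}$. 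In the limit, $(S,Y_0)\sim N(0,\Sigma)$.

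\textbf{Step 2 (moving to $Q_{s,h}$).} By the mean value theorem and the dominating bound on $\partial\psi/\partial\theta$ in Assumption \ref{assu: moments}, together with stationarity and a law of large numbers for the derivative, we get $Y_{n,h}=Y_{n,0}+\Psi h+o_{Q_{n,0}}(1)$. This is the step I expect to be the main obstacle. The uniform $L^1$ bound gives tightness of $n^{-1}\sum_i\sup_{\mathcal N}\|\partial_\theta\psi\|$, but convergence of $n^{-1}\sum_i\partial_\theta\psi(\tilde\theta,X_i)$ to $\Psi$ needs an ergodic-type LLN plus equicontinuity. If this is not directly available, I would try to argue it from differentiability at $\theta_0$ and the dominated bound, applied via a uniform integrability argument over a shrinking neighborhood.

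Given this expansion, the remainder is $o_{Q_{n,h}}(1)$ by contiguity, which follows from LAN. Le Cam's third lemma applied to $(T_s,S_s,Y_{s,0})$ then shows that under $Q_{s,h}$,
\[
(T_s,Y_{s,h})\to \mathcal L_h, \qquad \mathcal L_h(B)=E\bigl[1_B(T,Y_0+\Psi h)\exp\bigl(h'S-\tfrac12 h'I_0h\bigr)\bigr].
\]

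\textbf{Step 3 (representation).} We write $X=I_0^{-1}S$ and $Y=Y_0$. Under the base law, $(X,Y)$ has exactly the $h=0$ distribution in \eqref{eq: Limit Experiment}, since $\Cov(X,Y)=I_0^{-1}(-\Psi')$. The density $\exp(h'I_0X-\tfrac12h'I_0h)$ tilts $(X,Y)$ to mean $(h,-\Psi h)$ with the same covariance, which is exactly $Q_h$. Hence $\mathcal L_h$ is the law of $(T,Y+\Psi h)$ when $(T,X,Y)$ has the tilted law. Since the tilt depends only on $(X,Y)$, the conditional law of $T$ given $(X,Y)$ is the same for all $h$. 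We take a regular conditional distribution of $T$ given $(X,Y)$ under the base law, which exists because the spaces are Euclidean. The quantile-transform construction then produces a measurable $t(x,y,u)$ with $t(x,y,U)\sim T\mid(X,Y)=(x,y)$. With $U$ uniform and independent of $(X,Y)$, this gives $T=t(X,Y,U)$ in distribution jointly with $(X,Y)$ under every $Q_h$. Therefore $(T_s,Y_{s,h})\to(t(X,Y,U),Y+\Psi h)$ under $Q_{s,h}$, as claimed.
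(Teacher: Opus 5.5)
Your route is the same as the paper's: a Prokhorov extraction of a joint limit of $(T_s,S_s,Y_{s,0})$ under $Q_{s,0}$, Le Cam's third lemma to pass to $Q_{s,h}$, $X=I_0^{-1}S$, and a randomized representation of $T$. Your Step 3 is correct and more explicit than the paper, which just cites van der Vaart's Lemma 7.11. In particular, you note that the tilt $\exp(h'I_0X-\tfrac12h'I_0h)$ depends only on $(X,Y)$, so one $t$ serves every $h$.

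The problem is the step you flag yourself, and it is a genuine gap in both your proposal and the paper. The paper simply writes $Y_{s,h}=Y_{s,0}+\Psi h+o_p(1)$ ``by contiguity''. But contiguity only transfers an $o_{Q_{n,0}}(1)$ remainder to $Q_{n,h}$; it does not produce one. Producing it requires $n^{-1}\sum_i\partial_\theta\psi(\theta_0,X_i)\to\Psi$ in $Q_{n,\theta_0}$-probability, i.e.\ a (triangular-array) law of large numbers. Assumption~\ref{assu: moments} supplies only stationarity without ergodicity, differentiability at $\theta_0$, and an $L^1$ envelope. Your fallback (differentiability plus uniform integrability over shrinking neighborhoods) can at best justify replacing the intermediate points $\tilde\theta_i$ by $\theta_0$. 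Even that needs some continuity of $\partial_\theta\psi$, such as the second-derivative envelope in Assumption~\ref{assu: smoothness}. It cannot deliver the LLN.

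In fact the conclusion can fail under the stated hypotheses. Take $\Theta=\mathbb{R}$, $\theta_0=0$, and $X_i=(Z_i,A)$, where a single $A\in\{-1,1\}$ (probability $\tfrac12$ each) is shared by all $i$ and the $Z_i$ are i.i.d.\ $N(\theta,1)$ independent of $A$. Let $\psi(\theta,X_i)=Z_i-\theta+A\theta$. Then:
\begin{itemize}
\item the data are stationary, and LAN holds exactly with $I_0=1$;
\item $E_{Q_{n,\theta}}[\psi(\theta,X_i)]=0$ for all $\theta$;
\item $\partial_\theta\psi=A-1$ is bounded with mean $\Psi=-1$;
\item $Y_{n,0}=S_n$, so $(S_n,Y_{n,0})$ has exactly the required $N(0,\Sigma)$ law with $\Omega=1$.
\end{itemize}
Yet $Y_{n,h}=S_n-h+Ah$, which under $Q_{n,h}$ is an equal mixture of $N(h,1)$ and $N(-h,1)$, not $N(0,1)$. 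Here $n^{-1}\sum_i\partial_\theta\psi=A-1\not\to\Psi$.

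So the argument, yours and the paper's, needs an added hypothesis: $n^{-1}\sum_i\partial_\theta\psi(\theta_0,X_i)\to\Psi$ in $Q_{n,\theta_0}$-probability, together with local continuity of $\partial_\theta\psi$. With these, your mean-value expansion plus contiguity completes Step 2 as planned.

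Similarly, your claim that the LAN score $S_{n,\theta_0}$ and $S_n$ differ by $o_{Q_{n,0}}(1)$ ``under the stated conditions'' is an implicit extra assumption, shared with the paper. Pointwise differentiability of $q_{n,\theta}$ at $\theta_0$ gives no uniformity in $n$, so it does not imply this.
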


\paragraph{Proof of Proposition \ref{prop: Extended Representation Thm}}
Let $\{n_j\}$ be any subsequence.  Since $T_n$ is tight under $Q_{n,0}$ and $(S_n,Y_{n,0})$ converges jointly by Assumption \ref{assu: moments}, Prokhorov's theorem yields a further subsequence $\{s\}\subseteq\{n_j\}$ along which
\[
(T_s,S_s,Y_{s,0})\stackrel{d}{\to}(T,S,Y)
\quad\text{under }Q_{s,0}.
\]
By LAN and Le Cam's third lemma, for every $h\in\mathbb{R}^p$ the same subsequence converges under $Q_{s,h}$ to a limit law that we again denote by $(T,S,Y)$, where the marginal law of $(S,Y)$ is
\[
\left(
\begin{array}{c}
S\\
Y
\end{array}
\right)
\sim
N\left(
\left(
\begin{array}{c}
I_0 h\\
-\Psi h
\end{array}
\right),
\left(
\begin{array}{cc}
I_0 & -\Psi'\\
-\Psi & \Omega
\end{array}
\right)
\right).
\]
Let $X=I_0^{-1}S$.  Then $(X,Y)$ has distribution $Q_h$ in \eqref{eq: Limit Experiment}.  By the standard representation for randomized procedures in the limit experiment (see e.g. Lemma 7.11 in \citealt{van_der_vaart_asymptotic_1998}), there exists a function $t:\mathbb{R}^p\times\mathbb{R}^k\times[0,1]\to\mathbb{R}^d$ and an independent $U\sim\mathrm{Unif}[0,1]$ such that
\[
T=t(X,Y,U)
\]
under the limit law.  Finally, by contiguity
\[
Y_{s,h}=Y_{s,0}+\Psi h+o_p(1)\stackrel{d}{\to}Y+\Psi h
\]
under $Q_{s,h}$, and the claim follows. \qed

\begin{lemma}\label{lem: beta-objective}
Let $W_{M,h}=w_M(Y+\Psi h)$ be as in \eqref{eq: W_{M,h}}, and let $A$ be a non-negative random variable with $A\ge 1$ almost surely.  Then the function
\[
G_h(\beta)=E_{Q_h}\left[A\exp\left(\beta'W_{M,h}\right)\right],
\qquad
\beta\in\mathbb{R}^{b},
\]
is strictly convex on its effective domain and has compact sublevel sets.
\end{lemma}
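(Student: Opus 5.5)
Convexity comes from the integrand. For each realization, $\beta\mapsto A\exp(\beta'W_{M,h})$ is convex because $\exp$ is convex and $\beta'W_{M,h}$ is linear in $\beta$. Since $A\ge 0$, taking expectations preserves convexity, so $G_h$ is convex on $\mathbb{R}^b$ with values in $[0,\infty]$, and its effective domain is convex.

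For strict convexity I will show that $\beta'W_{M,h}$ is not almost surely constant for any $\beta\neq 0$. Under $Q_h$, $Y_h=Y+\Psi h\sim N(0,\Omega)$ with $\Omega$ of full rank, so $Y_h$ has a positive Lebesgue density on $\mathbb{R}^k$. The vector $W_{M,h}$ consists of the distinct monomials $\prod_s Y_{h,s}^{m_s}$ with $1\le|m|\le M$, minus constants. Hence $\beta'W_{M,h}=p_\beta(Y_h)-c_\beta$, where $p_\beta$ is a polynomial with no constant term whose coefficients are the entries of $\beta$. If $\beta\ne0$, then $p_\beta$ is a nonzero polynomial, so it is nonconstant on any set of positive Lebesgue measure. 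A polynomial constant on a set of positive measure is constant everywhere, by the analyticity argument that its zero set has measure zero. So $\beta'W_{M,h}$ is nondegenerate.

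Now take $\beta_0\neq\beta_1$ in the domain and $t\in(0,1)$. Strict convexity of $\exp$ gives strict pointwise inequality on the event $\{(\beta_1-\beta_0)'W_{M,h}\neq0\}$. This event has positive probability, and on it $A\ge1>0$, so the strict inequality survives taking expectations, which yields strict convexity of $G_h$.

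For compact sublevel sets, closedness comes from Fatou's lemma: $G_h$ is lower semicontinuous, since the integrand is continuous in $\beta$ and nonnegative, so sublevel sets are closed. Boundedness is the main step. I will use $A\ge1$ to get $G_h(\beta)\ge E[\exp(\beta'W_{M,h})]$ and reduce to the uncertain directions. Fix a unit vector $u$. Since $E[u'W_{M,h}]=0$ (each component is centered at its true Gaussian moment) and $u'W_{M,h}$ is nondegenerate, $q(u):=P(u'W_{M,h}>0)>0$. Let
\[\epsilon(u):=\sup\{\epsilon: P(u'W_{M,h}>\epsilon)\ge q(u)/2\}>0.\]
Then for $\beta=tu$ we have $G_h(tu)\ge \tfrac{q(u)}{2}e^{t\epsilon(u)/2}\to\infty$.

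To make this uniform over the unit sphere, I will use compactness. The map $u\mapsto E[\exp(u'W)\wedge c]$-type quantities are continuous; more simply, the map $u\mapsto E[(u'W_{M,h})_+]$ is continuous, and it is positive because its mean is zero and it is nondegenerate. So it is bounded below by some $\eta>0$ on the sphere. Combining with Jensen on the event, or directly with $e^{x}\ge 1+x$ applied to $t(u'W)_+$ restricted to $\{u'W>0\}$, gives $G_h(tu)\ge t\,E[(u'W)_+]\cdot$ const minus lower-order terms. In particular $\exp(t\,u'W)\ge t\,(u'W)_+$, so $G_h(tu)\ge t\eta$. Since $G_h$ is convex and $G_h(0)=E[A]$, it grows at least linearly along every ray, uniformly in direction. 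Sublevel sets are therefore bounded, hence compact.

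The main obstacle I expect is this uniform coercivity, which must hold even when $G_h$ is infinite on parts of its domain. The bound $\exp(x)\ge x_+$ handles it cleanly, provided I verify continuity of $u\mapsto E[(u'W_{M,h})_+]$. That continuity holds because all Gaussian polynomial moments are finite.
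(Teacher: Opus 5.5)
Your proof is correct, but it takes a somewhat different route from the paper in both halves.

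\textbf{Strict convexity.} The paper differentiates twice under the integral and asserts $v'\nabla^2 G_h(\beta)v=E_{Q_h}[A\exp(\beta'W_{M,h})(v'W_{M,h})^2]>0$ wherever $G_h(\beta)<\infty$. You instead use pointwise strict convexity of $\exp$ on the positive-probability event $\{(\beta_1-\beta_0)'W_{M,h}\neq 0\}$. This avoids differentiation under the integral sign. It also applies at boundary points of the effective domain, where the Hessian formula would need separate justification.

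\textbf{Compact sublevel sets.} The paper takes $\|\beta_n\|\to\infty$ and extracts a convergent direction $u_n\to u$. It then finds a bounded open set $B$ on which $u'w_M\ge 2c$, using that a centered nonzero polynomial is positive on an open set. Uniform closeness of $u_n'w_M$ to $u'w_M$ on $B$ then gives the exponential bound $G_h(\beta_n)\ge Q_h\{Y+\Psi h\in B\}e^{c\|\beta_n\|}$.

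You replace this with the inequality $e^x\ge x_+$. You also note that $u\mapsto E_{Q_h}[(u'W_{M,h})_+]$ is Lipschitz with constant $E_{Q_h}\|W_{M,h}\|<\infty$ and strictly positive on the unit sphere. Together these give the explicit uniform bound $G_h(\beta)\ge \eta\|\beta\|$, so $\{G_h\le c\}\subseteq\{\|\beta\|\le c/\eta\}$.

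\textbf{Comparison.} You obtain only linear rather than exponential growth, but that is all boundedness of sublevel sets requires. In exchange you avoid the subsequence extraction and the uniform-continuity step on $B$, and you get an explicit radius.

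\textbf{Cleanup.} In a final write-up, drop the per-direction $\epsilon(u)$ digression, the ``minus lower-order terms'' remark, and the appeal to $G_h(0)=E[A]$. That last quantity may be infinite and is not used. The bound $G_h(tu)\ge t\eta$ together with lower semicontinuity from Fatou's lemma already finishes the argument.
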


\paragraph{Proof of Lemma \ref{lem: beta-objective}}
Let $v\in\mathbb{R}^{b}$ be nonzero.  Since $\Omega$ has full rank, $Y+\Psi h$ has full support on $\mathbb{R}^k$ under $Q_h$.  The random variable $v'W_{M,h}$ is therefore a nontrivial polynomial in $Y+\Psi h$, and is almost surely nonzero. Hence
\[
v'\nabla^2 G_h(\beta)v
=
E_{Q_h}\left[
A\exp\left(\beta'W_{M,h}\right)(v'W_{M,h})^2
\right]
>0
\]
whenever $G_h(\beta)<\infty$, so $G_h$ is strictly convex on its effective domain.

To prove compactness of sublevel sets, let $\{\beta_n\}$ satisfy $\|\beta_n\|\to\infty$.  After passing to a subsequence, write $\beta_n=\|\beta_n\|u_n$ with $u_n\to u$ and $\|u\|=1$.  Then $u'W_{M,h}$ is again a nonzero polynomial, and because each component of $W_{M,h}$ is centered under $Q_h$, $u'W_{M,h}$ has mean zero and hence takes strictly positive values on a nonempty open set of $Y$ values.  Therefore there exist a bounded open set $B\subset\mathbb{R}^k$ and $c>0$ such that $u'w(y)\ge 2c$ for all $y\in B$.  By continuity of $w$, for all sufficiently large $n$ we have $u_n'w(y)\ge c$ on $B$.  Since $A\ge 1$ and $Q_h\{Y+\Psi h\in B\}>0$,
\[
G_h(\beta_n)
\ge
E_{Q_h}\left[
\exp\left(\beta_n'W_{M,h}\right)\mathbf{1}\{Y+\Psi h\in B\}
\right]
\ge
Q_h\{Y+\Psi h\in B\}\exp(c\|\beta_n\|)\to\infty.
\]
Thus every sublevel set of $G_h$ is bounded.  Since $G_h$ is lower semicontinuous, its sublevel sets are thus compact. \qed

\begin{lemma}\label{lem: fixed-h-liminf}
Suppose that for some $h\in\mathbb{R}^p$ and some subsequence $\{s\}$,
\[
(T_{s,h},W_{M,s,h})\stackrel[d]{}\to(T_h,W_{M,h})
\quad\text{under }Q_{s,h},
\]
where $T_{s,h}\in\mathbb{R}^d$ and $W_{M,s,h}\in\mathbb{R}^{b}$.  Then
\[
\liminf_{s\to\infty}
\inf_{\beta\in\mathbb{R}^{b}}
E_{Q_{s,h}}
\left[
\ell^*(T_{s,h})\exp(\beta'W_{M,s,h})
\right]
\ge
\inf_{\beta\in\mathbb{R}^{b}}
E_{Q_h}
\left[
\ell^*(T_h)\exp(\beta'W_{M,h})
\right].
\]
\end{lemma}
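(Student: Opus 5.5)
The plan is a lower semicontinuity argument for the dual value. The tool is the inequality $\inf_\beta f(\beta)\ge \inf_{\beta\in B}f(\beta)$ combined with compactness of sublevel sets (Lemma \ref{lem: beta-objective}). I will treat the problem as a $\liminf$ of an infimum of expectations of nonnegative functionals. For each fixed $\beta$, the portmanteau theorem (Fatou for weak convergence) gives
\[
\liminf_s E_{Q_{s,h}}[\ell^*(T_{s,h})\exp(\beta'W_{M,s,h})]\ge E_{Q_h}[\ell^*(T_h)\exp(\beta'W_{M,h})].
\]
This holds because $(t,w)\mapsto \ell^*(t)\exp(\beta'w)$ is continuous and nonnegative ($\ell$ is convex and finite, hence continuous). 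The difficulty is that the infimum over $\beta$ may be attained at $\beta_s$ that drift to infinity. So the work is in showing that near-minimizers can be taken to be bounded.

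\emph{Step 1 (reduction).} Pass to a further subsequence along which the $\liminf$ is attained as a limit, and which is finite (otherwise there is nothing to prove). Choose $\beta_s$ with
\[
E_{Q_{s,h}}[\ell^*(T_{s,h})\exp(\beta_s'W_{M,s,h})]\le \inf_\beta(\cdot)+1/s.
\]
Note that $\ell^*\ge 1$ since $\ell\ge 0$, so every value is at least $E[\exp(\beta_s'W_{M,s,h})]$.

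\emph{Step 2 (boundedness of $\beta_s$).} Suppose $\|\beta_s\|\to\infty$ along a subsequence, with $u_s=\beta_s/\|\beta_s\|\to u$. I will mimic the proof of Lemma \ref{lem: beta-objective}. The limit $u'W_{M,h}$ is a nonzero polynomial in $Y+\Psi h$ with mean zero under $Q_h$, so it exceeds $2c>0$ on an open set of positive probability. Then I lower bound
\[
\exp(\beta_s'W_{M,s,h})\ge \exp(c\|\beta_s\|)\mathbf 1\{u_s'W_{M,s,h}>c\}.
\]
Weak convergence $W_{M,s,h}\Rightarrow W_{M,h}$ together with $u_s\to u$ gives $\liminf_s Q_{s,h}(u_s'W_{M,s,h}>c)\ge Q_h(u'W_{M,h}>c)>0$, using portmanteau on open sets and the continuous mapping theorem applied to $(u_s,W_{M,s,h})$. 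Hence the objective diverges, contradicting finiteness. Therefore $\beta_s$ is bounded, and along a further subsequence $\beta_s\to\bar\beta$.

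\emph{Step 3 (conclude).} Since $(T_{s,h},W_{M,s,h},\beta_s)\Rightarrow(T_h,W_{M,h},\bar\beta)$ with $\beta_s$ deterministic and convergent, the map $(t,w,\beta)\mapsto\ell^*(t)\exp(\beta'w)$ is continuous and nonnegative. The portmanteau liminf inequality then gives
\[
\liminf_s E_{Q_{s,h}}[\ell^*(T_{s,h})\exp(\beta_s'W_{M,s,h})]\ge E_{Q_h}[\ell^*(T_h)\exp(\bar\beta'W_{M,h})]\ge\inf_\beta E_{Q_h}[\cdots],
\]
which proves the claim.

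The main obstacle is Step 2. It needs the limit $W_{M,h}$ to have a nondegenerate, full-support law, so that no nonzero direction $u$ makes $u'W_{M,h}\le 0$ almost surely. Here I will invoke the full-rank $\Omega$ and the centering of $W_{M,h}$, exactly as in Lemma \ref{lem: beta-objective}. Two points need care. First, $W_{M,s,h}$ is centered at its finite-$n$ mean, which converges to the Gaussian moments by the moment-convergence hypothesis of Theorem \ref{thm: Constrained LAM}; the weak limit is therefore the centered $W_{M,h}$. Second, the random variable $A=\ell^*(T_h)$ enters only through $A\ge1$.
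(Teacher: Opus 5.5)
Your proof is correct and follows essentially the same route as the paper. Near-minimizers $\beta_s$ that escape to infinity are ruled out by exponential blow-up on $\{u_s'W_{M,s,h}>c\}$, whose probability stays bounded away from zero by portmanteau and by the full support and centering of $W_{M,h}$. Bounded near-minimizers are handled at a limit point $\bar\beta$ by weak-convergence lower semicontinuity; your direct citation of the portmanteau bound for nonnegative continuous functions replaces the paper's hand-built truncations $g_r\chi_r$ plus monotone convergence.

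The opening appeal to Lemma \ref{lem: beta-objective} is not what does the work, since that lemma concerns the limit objective rather than the finite-$s$ objectives. Your Step 2 supplies the needed uniform-in-$s$ boundedness directly, so nothing is missing.
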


\paragraph{Proof of Lemma \ref{lem: fixed-h-liminf}}
For each $s$ and $\beta\in\mathbb{R}^{b}$, define
\[
F_s(\beta)=E_{Q_{s,h}}
\left[
\ell^*(T_{s,h})\exp(\beta'W_{M,s,h})
\right],
\]
and define the limit objective
\[
H(\beta)=E_{Q_h}
\left[
\ell^*(T_h)\exp(\beta'W_{M,h})
\right].
\]
Let
\[
m=\inf_{\beta\in\mathbb{R}^{b}}H(\beta)\in[0,\infty].
\]
We want to show that
\[
\liminf_{s\to\infty}\inf_{\beta\in\mathbb{R}^{b}}F_s(\beta)\ge m.
\]

Let $g_r:\mathbb{R}^d\to\mathbb{R}_+$ be bounded Lipschitz functions such that
$g_r\uparrow \ell^*$ pointwise, where such a sequence exists by Baire's Theorem (see e.g. Theorem 6.4.1 in \citealt{Cobaszetal2019Lipscitz}), and let
$\chi_r:\mathbb{R}^{b}\to[0,1]$ be Lipschitz functions such that
$\chi_r\uparrow 1$ pointwise,
$\chi_r(w)=1$ for $\|w\|\le r$, and
$\chi_r(w)=0$ for $\|w\|\ge r+1$.
For $\beta\in\mathbb{R}^{b}$ define
\[
f_r(t,w;\beta)=g_r(t)\chi_r(w)\exp(\beta'w).
\]
Then for each fixed $\beta$,
\[
0\le f_r(t,w;\beta)\uparrow \ell^*(t)\exp(\beta'w)
\qquad\text{pointwise in }(t,w).
\]
Hence, by the monotone convergence theorem,
\[
H_r(\beta)=E_{Q_h}[f_r(T_h,W_{M,h};\beta)]\uparrow H(\beta)
\qquad\text{for each }\beta.
\]

We now argue by contradiction.  Suppose first that $m<\infty$ and result fails.  Then there exist $\varepsilon>0$, a further subsequence, again denoted $\{s\}$, and vectors
$\beta_s\in\mathbb{R}^{b}$ such that
\[
F_s(\beta_s)\le m-\varepsilon
\qquad\forall s.
\]
Suppose next that $m=\infty$ and that the conclusion fails.  Then there exist a further subsequence, again denoted $\{s\}$, a finite constant $M$, and vectors
$\beta_s\in\mathbb{R}^{b}$ such that
\[
F_s(\beta_s)\le M
\qquad\forall s.
\]
Thus, in either case, there exists a finite constant $C$ such that along a subsequence
\[
F_s(\beta_s)\le C
\qquad\forall s.
\]
We consider two cases.

\textit{Case 1:} $\{\beta_s\}$ is bounded. Passing to a further subsequence if necessary, let $\beta_s\to\beta$.
Fix $r$.  Since $\{\beta_s\}\cup\{\beta\}$ is contained in a compact set, and
$\chi_r(w)=0$ for $\|w\|\ge r+1$, the function
$f_r(t,w;\tilde\beta)$ is bounded and continuous on
$\mathbb{R}^d\times\mathbb{R}^{b}\times K$ for any compact set
$K\supset\{\beta_s:s\ge 1\}\cup\{\beta\}$.
Because
\[
(T_{s,h},W_{M,s,h},\beta_s)\stackrel[d]{}\to (T_h,W_{M,h},\beta),
\]
it follows that
\[
E_{Q_{s,h}}\left[f_r(T_{s,h},W_{M,s,h};\beta_s)\right]
\to
H_r(\beta).
\]
Since
\[
f_r(t,w;\beta_s)\le \ell^*(t)\exp(\beta_s'w)
\qquad\text{pointwise,}
\]
we have
\[
E_{Q_{s,h}}\left[f_r(T_{s,h},W_{M,s,h};\beta_s)\right]\le F_s(\beta_s)\le C
\qquad\forall s.
\]
Therefore
\[
H_r(\beta)\le C
\qquad\forall r.
\]
Letting $r\to\infty$ and using monotone convergence gives
\[
H(\beta)\le C.
\]
If $m<\infty$, then $C<m$, which contradicts the definition of $m$.
If $m=\infty$, then $C<\infty$, which again contradicts the definition of $m$.

\textit{Case 2:} $\|\beta_s\|\to\infty$. Write
\[
\beta_s=c_su_s,
\qquad
c_s=\|\beta_s\|,
\qquad
\|u_s\|=1.
\]
Passing to a further subsequence if necessary, let $u_s\to u$ with $\|u\|=1$.
Since $\Omega$ has full rank, the Gaussian vector $Y+\Psi h$ has full support on
$\mathbb{R}^k$ under $Q_h$.  Hence $u'W_{M,h}$ is a nonzero centered (i.e. mean zero) polynomial in
$Y+\Psi h$, and therefore it is not almost surely zero.  Since it is centered, it follows that
\[
Q_h\{u'W_{M,h}>0\}>0.
\]
Choose $\eta,\delta>0$ such that
\[
Q_h\{u'W_{M,h}>\eta\}>2\delta.
\]
Because
\[
(W_{M,s,h},u_s)\stackrel[d]{}\to (W_{M,h},u),
\]
the continuous mapping theorem implies that
\[
u_s'W_{M,s,h}\stackrel[d]{}\to u'W_{M,h}.
\]
Therefore, by the Portmanteau theorem, for all sufficiently large $s$,
\[
Q_{s,h}\{u_s'W_{M,s,h}>\eta\}\ge \delta.
\]
Since $\ell\ge 0$, we have $\ell^*\ge 1$, so for all sufficiently large $s$,
\[
F_s(\beta_s)
\ge
E_{Q_{s,h}}
\left[
\exp(c_su_s'W_{M,s,h})1\{u_s'W_{M,s,h}>\eta\}
\right]
\ge
\delta e^{c_s\eta}.
\]
Because $c_s\to\infty$, the right-hand side diverges to $\infty$, which contradicts
$F_s(\beta_s)\le C$.

Both cases lead to a contradiction.  Hence
\[
\liminf_{s\to\infty}\inf_{\beta\in\mathbb{R}^{b}}F_s(\beta)\ge m
=
\inf_{\beta\in\mathbb{R}^{b}}H(\beta),
\]
as claimed. \qed

\paragraph{Proof of Theorem \ref{thm: Constrained LAM}}

For $h\in\mathbb{R}^p$, define $T_{n,h}=\sqrt{n}\left(\delta_n(X^n)-\kappa(\theta_{n,h})\right).$
Let $\{n_j\}$ be an arbitrary subsequence.
Suppose first that $\{T_{n_j,0}\}$ is not tight under $Q_{n_j,0}$.
Then there exist $\varepsilon>0$, a further subsequence $\{s\}\subseteq\{n_j\}$,
and an increasing sequence of constants $M_s\to\infty$ such that
\[
Q_{s,0}\{\|T_{s,0}\|>M_s\}\ge\varepsilon
\qquad\forall s.
\]
Since $\ell(u)\to\infty$ as $\|u\|\to\infty$ and $Q_{s,0}\in\mathcal{P}_{s,0}^M$,
$E_{Q_{s,0}}[\ell(T_{s,0})]
\ge
\varepsilon\inf_{\|u\|>M_s}\ell(u)\to\infty.$
Hence along $\{s\}$ the left-hand side of the theorem is equal to $\infty$, and the
desired lower bound is trivial.

We may therefore restrict attention to the case where $\{T_{n_j,0}\}$ is tight.
By Proposition \ref{prop: Extended Representation Thm}, there exists a further
subsequence $\{s\}\subseteq\{n_j\}$ and a possibly randomized statistic
$T=t(X,Y,U)$ in the limit experiment such that for every $h\in\mathbb{R}^p$,
\[
(T_s,Y_{s,h})\stackrel{d}{\to}(T,Y+\Psi h)
\qquad\text{under }Q_{s,h}.
\]
Since
$T_{s,h}=T_s-\sqrt{s}\left(\kappa(\theta_{s,h})-\kappa(\theta_0)\right),$
differentiability of $\kappa$ at $\theta_0$ implies
$T_{s,h}\stackrel{d}{\to}T-Kh
\quad\text{under }Q_{s,h}.$
Hence, by the assumed convergence of moments up to order $M$ and the continuous mapping theorem,
\[
(T_{s,h},W_{M,s,h})\stackrel{d}{\to}(T-Kh,\;W_{M,h})
\quad\text{under }Q_{s,h}
\]
for every $h\in\mathbb{R}^p$.

Let $\mathbb{Q}^p=\{q_1,q_2,\ldots\}$ and define $I_b=\{q_1,\ldots,q_b\}$.  By Lemma \ref{lem: fixed-h-liminf}, for every rational $h\in\mathbb{Q}^p$,
\[
\liminf_{s\to\infty}
\inf_{\beta}
E_{Q_{s,h}}
\left[
\ell^*(T_{s,h})\exp(\beta'W_{M,s,h})
\right]
\ge
\inf_{\beta}
E_{Q_h}
\left[
\ell^*(T-Kh)\exp(\beta'W_{M,h})
\right].
\]
It follows that
\[
\sup_I\liminf_{n\to\infty}\sup_{h\in I}\inf_{\beta}
\lambda\log
E_{Q_{n,h}}
\left[
\ell^*(T_{n,h})\exp(\beta'W_{M,n,h})
\right]
\ge\]
\[
\lim_{b\to\infty}\liminf_{n\to\infty}\sup_{h\in I_b}\inf_{\beta}
\lambda\log
E_{Q_{n,h}}
\left[
\ell^*(T_{n,h})\exp(\beta'W_{M,n,h})
\right]
\ge\]
\[
\sup_{h\in\mathbb{Q}^p}\inf_{\beta}
\lambda\log
E_{Q_h}
\left[
\ell^*(T-Kh)\exp(\beta'W_{M,h})
\right].
\]

We next show that the right-hand side is unchanged if $\mathbb{Q}^p$ is replaced by $\mathbb{R}^p$.  Write
\[
G(h,\beta)=
E_{Q_h}
\left[
\ell^*(T-Kh)\exp(\beta'W_{M,h})
\right].
\]
Using the likelihood ratio of $Q_h$ to $Q_0$, we may write
\[
G(h,\beta)
=
E_{Q_0}
\left[
\ell^*(T-Kh)\exp(\beta'W_{M,h})
\exp\left(h'I_0X-\frac{1}{2}h'I_0h\right)
\right].
\]
For each realization $(T,X,Y)$ the integrand is non-negative and continuous in $(h,\beta)$, so Fatou's lemma implies that $G$ is jointly lower semicontinuous.  Moreover, since $\ell^*\ge 1$,
\[
G(h,\beta)\ge E_{Q_h}\left[\exp(\beta'W_{M,h})\right]
=
E_{Q_0}\left[\exp(\beta'W_{M,0})\right].
\]
By Lemma \ref{lem: beta-objective}, the right-hand side has compact sublevel sets.  Hence Berge's theorem implies that
\[
h\mapsto \inf_{\beta}G(h,\beta)
\]
is lower semicontinuous.  Since $\mathbb{Q}^p$ is dense in $\mathbb{R}^p$, we obtain
\[
\sup_{h\in\mathbb{Q}^p}\inf_{\beta}G(h,\beta)
=
\sup_{h\in\mathbb{R}^p}\inf_{\beta}G(h,\beta).
\]

Finally, the statistic $T=t(X,Y,U)$ may depend on the auxiliary randomization variable $U$.  For each fixed $h$ and $\beta$, the weight $\exp(\beta'W_{M,h})$ depends only on $(X,Y)$, while $\ell^*$ is convex because $\ell$ is convex.  Therefore Jensen's inequality implies
\[
E_{Q_h}\left[
\ell^*(T-Kh)\exp(\beta'W_{M,h})
\right]
\ge
E_{Q_h}\left[
\ell^*(E[T\mid X,Y]-Kh)\exp(\beta'W_{M,h})
\right].
\]
Thus randomization cannot improve the criterion, and the lower bound is further bounded below by the infimum over non-randomized decision rules $\delta:\mathbb{R}^p\times\mathbb{R}^k\to\mathbb{R}^d$.  This shows that for any sequence of sample sizes $\{n_j\}$, we can extract a further subsequence along which the asserted lower bound holds.  It follows that the same lower bound holds for the liminf along the original
sequence of sample sizes $\{n\}$, which proves the theorem. \qed

\paragraph{Proof of Corollary \ref{cor: infinite case}}
By Theorem \ref{thm: Constrained LAM},
\[
\inf_M\sup_I\liminf_{n\to\infty}\sup_{h\in I}\sup_{P\in\mathcal{P}^M_{n,h}}
\left\{
\mathbb{E}_{P}\left[l_n(\delta(X^{n}),\theta_{n,h})\right]-\lambda \KL(P\|Q_{n,h})
\right\}
\]
\[
\ge
\inf_M\inf_{\delta} \sup_{h\in\mathbb{R}^p} \inf_{\beta}
\lambda\cdot\log\left(E_{Q_{h}}\left[\ell^{*}\left(\delta\left(X,Y\right)-Kh\right)\exp\left(\beta'W_{M,h}\right)\right]\right).
\]
By duality,
\[
\inf_{\beta} \lambda \log\left(E_{Q_{h}}\left[\ell^{*}\left(\delta(X,Y)-Kh\right)\exp\left(\beta'W_{M,h}\right)\right]\right)=\]
\[
\sup_{P\in\mathcal{P}^M_h}\left\{E_{P}\left[\ell\left(\delta(X,Y)-Kh\right)\right]-\lambda \KL(P\|Q_h)\right\},
\]
where
\[
\mathcal{P}^M_h=
\left\{
P\in\Delta(\mathbb{R}^{p+k})
:
E_P\left[\prod_{s=1}^k\left(Y_s+(\Psi h)_s\right)^{m_s}\right]
=
E\left[\prod_{s=1}^k \xi_s^{m_s}\right]
\text{ for all }m\in\mathbb{N}_0^k,\ 1\le |m|\le M
\right\}.
\]
Since $\mathcal{P}^M_h$ is decreasing in $M$, the lower bound is bounded below by
\[
\inf_\delta \sup_{h\in\mathbb{R}^p}\sup_{P\in\mathcal{P}^\infty_h}
\left\{E_{P}\left[\ell\left(\delta(X,Y)-Kh\right)\right]-\lambda \KL(P\|Q_h)\right\},
\]
where $\mathcal{P}^\infty_h=\bigcap_{M=1}^\infty \mathcal{P}^M_h$.
Because the normal distribution is determined by its moments, $P\in\mathcal{P}^\infty_h$ if and only if $Y+\Psi h\sim N(0,\Omega)$ under $P$, or equivalently $Y\sim N(-\Psi h,\Omega)$ under $P$.

Using the chain rule for KL divergence,
\[
\KL(P_{X,Y}|Q_{X,Y})=\KL(P_Y|Q_Y)+E_{P_Y}\left[\KL(P_{X|Y}|Q_{X|Y})\right],
\]
we can therefore rewrite the $M=\infty$ problem as
\[
\inf_\delta\sup_{h\in\mathbb{R}^p}\sup_{P_{X|Y}}
E_{Q_{Y,h}}\left[
E_{P_{X|Y}}[\ell(\delta(X,Y)-Kh)\mid Y]
-
\lambda \KL(P_{X|Y}\|Q_{X|Y,h})\right].
\]
Applying Proposition \ref{thm:multiplier-dual} conditional on $Y$ gives
\[
\inf_\delta\sup_{h\in\mathbb{R}^p}
\lambda\cdot E_{Q_{Y,h}}\left[
\log\left(
E_{Q_{X|Y,h}}\left[\ell^*(\delta(X,Y)-Kh)\mid Y\right]
\right)
\right],
\]
which is the desired bound. \qed

\subsection{Proofs for Results in Section \ref{sec: optimal rules}}

\paragraph{Proof of Theorem \ref{thm: equivariant}}

As a first step, it is helpful to note that every distribution in $\mathcal{P}_h^M$
can be obtained from a distribution in $\mathcal{P}_0^M$ by a group transformation.
Specifically, let $g\in\mathbb{R}^p$ act on $Z=(X,Y)$ by
$g\circ Z=(X+g,\;Y-\Psi g).$
If $P$ denotes the distribution of $Z$, write $g\circ P$ for the distribution of
$g\circ Z$.  The induced action on the parameter-distribution pair $(h,P)$ is then
$g\circ(h,P)=(h+g,\;g\circ P).$
\begin{lemma}\label{lem: parameterization}
    $P\in \mathcal{P}_{0}^M \iff g \circ P \in \mathcal{P}_{g}^M$
\end{lemma}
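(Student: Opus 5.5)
The plan is to unwind the definition of $\mathcal{P}_h^M$ in the limit experiment and check that the group action maps one constraint set onto the other. In the limit experiment,
\[
\mathcal{P}_h^M=\left\{P:E_P[W_{M,h}]=0\right\},
\]
where $W_{M,h}$ depends on $Z=(X,Y)$ only through $Y_h=Y+\Psi h$. Concretely,
\[
W_{M,h}=w_M(Y+\Psi h),\qquad w_M(y)=\left(\prod_{s=1}^k y_s^{m_s}-E\left[\prod_{s=1}^k\xi_s^{m_s}\right]\right)_{1\le|m|\le M}.
\]
The centering constants in $w_M$ do not depend on $h$. Hence $P\in\mathcal{P}_h^M$ if and only if $E_P[w_M(Y+\Psi h)]=0$.

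Fix $g\in\mathbb{R}^p$ and let $Z\sim P$, so that $g\circ Z=(X+g,\,Y-\Psi g)\sim g\circ P$. By the change of variables formula for pushforward measures,
\[
E_{g\circ P}\left[w_M(Y+\Psi g)\right]=E_P\left[w_M\big((Y-\Psi g)+\Psi g\big)\right]=E_P\left[w_M(Y)\right]=E_P\left[w_M(Y+\Psi\cdot 0)\right].
\]
Therefore $g\circ P$ satisfies the defining moment equalities of $\mathcal{P}_g^M$ exactly when $P$ satisfies those of $\mathcal{P}_0^M$. This gives both directions of the equivalence at once. One could also use that the map $P\mapsto g\circ P$ is a bijection of $\Delta(\mathbb{R}^{p+k})$ with inverse $(-g)\circ$, but it is not needed because the displayed identity is already an equivalence.

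The only point needing care is integrability. The moment conditions should be read as requiring that $E_P[|w_M|]<\infty$ and that the expectation equals zero. The same change of variables applied to $|w_M|$ shows that integrability is preserved in both directions, so no real obstacle arises. I would also remark, for use in the proof of Theorem \ref{thm: equivariant}, that $\KL(g\circ P\|Q_g)=\KL(P\|Q_0)$. This holds because $Q_g=g\circ Q_0$ (the mean of $Q_h$ in \eqref{eq: Limit Experiment} shifts by $(g,-\Psi g)$ while the covariance is unchanged), and KL divergence is invariant under a common bijective measurable transformation of both arguments.
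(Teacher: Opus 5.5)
Your proposal is correct and follows essentially the same route as the paper: write $W_{M,h}=w_M(Y+\Psi h)$ with $h$-independent centering constants, and note that under $g\circ P$ the shifted vector $Y+\Psi g$ has the same law as $Y$ under $P$, so the moment equalities defining $\mathcal{P}_g^M$ and $\mathcal{P}_0^M$ coincide. Your additional remarks on integrability and on $\KL(g\circ P\|Q_g)=\KL(P\|Q_0)$ are accurate; the latter is the invariance the paper invokes next in the proof of Theorem \ref{thm: equivariant}.
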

\paragraph{Proof of Lemma \ref{lem: parameterization}} For $W_{M,h}=w_M(Y+\Psi h),$ observe that
    \begin{align*}
        P \in \mathcal{P}_{0}^M \iff \mathbb{E}_P[w_M(Y)] = 0 \iff \mathbb{E}_{g\circ P}[w_{M}(Y+\Psi g)] = 0 \iff g\circ P \in \mathcal{P}_{g}^M,
    \end{align*}
from which the result is immediate. \qed

This lemma implies that we can parameterize $\mathcal{P}_{g}^M$ by $\mathcal{P}_{0}^M.$  Specifically, let $\phi \in \Phi = \mathcal{P}_{0}^M$,
and observe that by Lemma \ref{lem: parameterization}
\begin{align*}
    \sup_{h} \sup_{P\in \mathcal{P}_{h}^M} E_{P}\left[\ell\left(\delta\left(X,Y\right)-Kh\right)\right] - \lambda \KL(P\|Q_h) = \\
    \sup_{h, \phi}E_{P(h,\phi)}\left[\ell\left(\delta\left(X,Y\right)-Kh\right)\right] - \lambda \KL(P(h, \phi)\|Q_h),
\end{align*}
where $P(h, \phi) = h\circ \phi$.  Our statistical model is then $\{P(h,\phi): h \in \mathbb{R}^p, \phi \in \Phi \}.$  We can define the group operation on this parameter space by
\begin{align*}
        g \circ (h,\phi) = (g+h, \phi),
    \end{align*}
and observe that
    \begin{align*}
        P(g\circ(h,\phi)) = P(g+h, \phi) = g \circ P(h, \phi),
    \end{align*}
    so this definition in the reparametrized model is compatible with our earlier definition.

For the purpose of the decision problem that we're considering, it's without loss of generality to focus on $\phi$ which has a probability density function with respect to Lebesgue measure. This is because $Q_0$ is a multivariate normal distribution and if $\phi \not \ll Q_0$, then $\KL(P(0,\phi)\|Q_0) = +\infty$.

Since the decision problem is invariant, by the generalized Hunt-Stein theorem (Theorem 48.16 of \citealt{strasser1985}) and level-compactness of $\ell$, for any $\delta$ there exists an equivariant $\delta^E$ such that 
    \begin{align*}
        \sup_{g \in \mathbb{R}^p}E_{P(g\circ (h,\phi))}\left[\ell\left(\delta\left(X,Y\right)-K(g+h)\right)\right] \geq E_{P(0,\phi)}\left[\ell\left(\delta^E\left(X,Y\right)\right)\right], \forall \phi
    \end{align*}
    Since $g\in G = \mathbb{R}^p$ operates transitively on $H = \mathbb{R}^p$, this is equivalent to
    \begin{align*}
         \sup_{h \in \mathbb{R}^p}E_{P(h,\phi)}\left[\ell\left(\delta\left(X,Y\right)-Kh\right)\right] \geq E_{P(0,\phi)}\left[\ell\left(\delta^E\left(X,Y\right)\right)\right], \forall \phi
    \end{align*}
    Notice that for given $\phi$, 
    \begin{align*}
        \KL(P(h, \phi)\|Q_h) = \KL(P(g+h, \phi)\|Q_{g+h}).
    \end{align*}
    Therefore,
    \begin{align*}
        \sup_{h \in \mathbb{R}^p}E_{P(h,\phi)}\left[\ell\left(\delta\left(X,Y\right)-Kh\right)\right] - \lambda \KL(P(h, \phi)\|Q_h) \geq  \\
        E_{P(0,\phi)}\left[\ell\left(\delta^E\left(X,Y\right)\right)\right] - \lambda \KL(P(0, \phi)\|Q_0), \forall \phi
    \end{align*}
    Take supremum over $\phi$ on both sides of the inequality,
    \begin{align*}
        \sup_{h \in \mathbb{R}^p, \phi \in \Phi}E_{P(h,\phi)}\left[\ell\left(\delta\left(X,Y\right)-Kh\right)\right] - \lambda \KL(P(h, \phi)\|Q_h) \geq  \\
        \sup_{\phi\in \Phi}E_{P(0,\phi)}\left[\ell\left(\delta^E\left(X,Y\right)\right)\right] - \lambda \KL(P(0, \phi)\|Q_0),
    \end{align*}
    as we aimed to show. \qed

\paragraph{Proof of Proposition \ref{prop: best equivariant finite M}}

\paragraph{Part (a).}  For each $\beta$, the minimization of
$E_{Q_0}\left[\ell^*(\delta^E(X,Y))\exp(\beta'W_{M,0})\right]$
over $\delta^E\in\mathcal{D}^E$ is equivalent to minimizing
$E_{Q_0}\left[\ell^*(\delta^E(X,Y))
\frac{\exp(\beta'W_{M,0})}{E_{Q_0}[\exp(\beta'W_{M,0})]}\right],$
which is the problem of finding the best equivariant rule under the tilted distribution with density proportional to
$q_0(X,Y)\exp(\beta'W_{M,0})$.  This is again a location problem for the group $G=\mathbb{R}^p$, since under the group action
\[
(X,Y,h)\mapsto (X+g,Y-\Psi g,h+g),
\]
the term $W_{M,h}$ is preserved:
\[
W_{M,h+g}(X+g,Y-\Psi g)=W_{M,h}(X,Y).
\]
Hence Theorem 6.5 of \citet{Eaton1989} implies that the best equivariant rule is the Bayes rule under the right Haar, that is, flat, prior.  The posterior density under this prior is
\[
\pi_\beta(h\mid X,Y)=
\frac{q_0(X-h,Y+\Psi h)\exp(\beta'W_{M,h})}
{\int q_0(X-h',Y+\Psi h')\exp(\beta'W_{M,h'})\,dh'},
\]
and the optimal rule is
\[
\delta^*_\beta(X,Y)\in
\argmin_{a\in\mathbb{R}^d}
\int \ell^*(a-Kh)\pi_\beta(h\mid X,Y)\,dh.
\]

\paragraph{Part (b).}  
Let $Z^I=Y+\Psi X$, and observe that this is a maximal invariant for our problem.  Since the difference between any two equivariant rules is invariant, and $KX$ is equivariant, any equivariant rule can be written as
\[
\delta^E(X,Y)=KX+d(Z^I)
\]
for some function $d:\mathbb{R}^k\to\mathbb{R}^d$.
Under $Q_0$, the pair $(X,Z^I)$ is jointly Gaussian with
\[
\Cov_{Q_0}(X,Z^I)=\Cov_{Q_0}(X,Y)+\Cov_{Q_0}(X,\Psi X)
=-I_0^{-1}\Psi'+I_0^{-1}\Psi'=0,
\]
so $X$ and $Z^I$ are independent.  Writing $W_{M,0}=w_M(Y)=w_M(Z^I-\Psi X)$ for the polynomial $w_M$ defining the moment vector, we obtain
\[
E_{Q_0}\left[\ell^*(\delta^E(X,Y))\exp(\beta'W_{M,0})\right]
=
E_{Q_{Z^I,0}}\left[\Gamma_\beta(d(Z^I),Z^I)\right],
\]
where
\[
\Gamma_\beta(a,z)=
E_{Q_{X,0}}
\left[
\ell^*(KX+a)\exp\left(\beta'w_M(z-\Psi X)\right)
\right].
\]

For each fixed $z$, the function $(a,\beta)\mapsto \Gamma_\beta(a,z)$ is jointly lower semicontinuous by Fatou's lemma.  Moreover, for every compact set $B\subset\mathbb{R}^b$,
\[
\inf_{\beta\in B}\Gamma_\beta(a,z)\to\infty
\qquad\text{as }\|a\|\to\infty.
\]
To see this, note that $P_{Q_{0}}\{\|X\|\le 1\}>0$, that $\ell^*(Kx+a)\to\infty$ uniformly over $\|x\|\le 1$ as $\|a\|\to\infty$, and that
\[
\inf_{\beta\in B,\ \|x\|\le 1}\exp\left(\beta'w_M(z-\Psi x)\right)>0
\]
because $B$ and $\{z-\Psi x:\|x\|\le 1\}$ are compact.  Consequently, we may define
\[
g_\beta(z)=\inf_{a\in\mathbb{R}^d}\Gamma_\beta(a,z)=\min_{a\in\mathbb{R}^d}\Gamma_\beta(a,z),
\] which is lower semicontinuous in $(\beta,z)$ by Berge's theorem of the maximum, and
\[
m(\beta)=\min_{\delta^E\in\mathcal{D}^E}
E_{Q_0}\left[\ell^*(\delta^E(X,Y))\exp(\beta'W_{M,0})\right]
=
E_{Q_{Z^I,0}}\left[g_\beta(Z^I)\right].
\]

Since $g_\beta(Z^I)\ge 0$, lower semicontinuity of $ g_\beta(z)$ and Fatou's lemma imply that $m$ is lower semicontinuous.  In addition, since $\ell^*\ge 1$,
\[
m(\beta)\ge E_{Q_0}\left[\exp(\beta'W_{M,0})\right].
\]
By Lemma \ref{lem: beta-objective} with $A=1$, the right-hand side has compact sublevel sets.  Hence $m$ also has compact sublevel sets.  Therefore $m$ attains its infimum at some $\beta^*\in\mathbb{R}^b$.

Part (c).  By part (a), $\delta^*_{\beta^*}$ minimizes the joint objective \eqref{eq: joint opt} over $(\delta^E,\beta)$.  By Theorem \ref{thm: equivariant}, the minimax value over all decision rules equals the minimax value over equivariant rules, and by the duality of Proposition \ref{prop:constrained-dual init} the latter equals \eqref{eq: joint opt}.  Hence $\delta^*_{\beta^*}$ is minimax optimal. \qed

\paragraph{Proof of Proposition \ref{prop: convexity}}

\paragraph{Part (a).}  The function $\ell$ is convex by Assumption \ref{assu: loss}, so $\Gamma\mapsto\ell(\Gamma\phi)$ is convex in $\Gamma$.  The product $\ell^*(\Gamma\phi)\exp(\beta'W_{M,0})=\exp(\ell(\Gamma\phi)/\lambda+\beta'W_{M,0})$ is the exponential of a function that is jointly convex in $(\Gamma,\beta)$, and the exponential of a convex function is convex.  Taking expectations preserves convexity.

\paragraph{Part (b).}  Fix $\Gamma_1,\Gamma_2\in\mathbb{R}^{d\times J}$ and $t\in[0,1]$, and let
$\Gamma_t=t\Gamma_1+(1-t)\Gamma_2$.  Since $\Gamma\mapsto \Gamma\phi(X,Y)$ is linear and
$\ell$ is convex, $\ell(\Gamma_t\phi)\le t\,\ell(\Gamma_1\phi)+(1-t)\,\ell(\Gamma_2\phi)$
pointwise.  Exponentiating and using $\ell^*(u)=\exp(\ell(u)/\lambda)$ gives
\[
\ell^*(\Gamma_t\phi)\le \ell^*(\Gamma_1\phi)^t\,\ell^*(\Gamma_2\phi)^{1-t}
\]
pointwise.  Hence, by Hölder's inequality,
\[
E_{Q_{X|Y,0}}[\ell^*(\Gamma_t\phi)\mid Y]
\le
E_{Q_{X|Y,0}}[\ell^*(\Gamma_1\phi)\mid Y]^t
E_{Q_{X|Y,0}}[\ell^*(\Gamma_2\phi)\mid Y]^{1-t}.
\]
Taking logs yields
\[
\log E_{Q_{X|Y,0}}[\ell^*(\Gamma_t\phi)\mid Y]
\le
t\log E_{Q_{X|Y,0}}[\ell^*(\Gamma_1\phi)\mid Y]
+
(1-t)\log E_{Q_{X|Y,0}}[\ell^*(\Gamma_2\phi)\mid Y].
\]
Thus $\Gamma\mapsto \log E_{Q_{X|Y,0}}[\ell^*(\Gamma\phi)\mid Y]$
is convex for each $Y$, and taking expectation over $Q_{Y,0}$ preserves convexity. \qed

\paragraph{Proof of Proposition \ref{prop: optim estimator square loss}}

We interpret the objective throughout in the extended-real sense.  If every equivariant rule has infinite risk, the proposition is immediate.  Hence we may restrict attention to the case where at least one equivariant rule has finite risk.

Let $Z^I=Y+\Psi X$, and observe that this is a maximal invariant for our problem.  Since the difference between any two equivariant rules is invariant, and $KX$ is equivariant, any equivariant rule can be written as
\[
\delta^E(X,Y)=KX+d(Z^I)
\]
for some function $d:\mathbb{R}^k\to\mathbb{R}^d$.

For $M<\infty$, define
\[
\mathcal{R}_M(d,\beta)=E_{Q_0}\left[\exp\left(\frac{1}{\lambda}\|KX+d(Z^I)\|^2+\beta'W_{M,0}\right)\right].
\]

\paragraph{Part (a): $M=0$ and $M=1$}
For $M=0$, there is no $\beta$.  Since $X\perp Z^I$ under $Q_0$, we have
\[
\mathcal{R}_0(d)=E_{Q_{Z^I,0}}\left[\varphi(d(Z^I))\right],\text{ where }
\varphi(a)=E_{Q_{X,0}}\left[\exp\left(\frac{1}{\lambda}\|KX+a\|^2\right)\right].
\]
The map $a\mapsto\varphi(a)$ is strictly convex, and by symmetry of the distribution of $X$ under $h=0$,
\[
\nabla\varphi(0)=\frac{2}{\lambda}E_{Q_{X,0}}\left[KX\exp\left(\frac{1}{\lambda}\|KX\|^2\right)\right]=0.
\]
Hence $\varphi$ is uniquely minimized at $a=0$, so $\mathcal{R}_0(d)$ is minimized by $d\equiv 0$, i.e. $\delta^*(X,Y)=KX$.

For $M=1$, write  $W_{1,0}=Y$.  Using $Y=Z^I-\Psi X$ and $X\perp Z^I$:
\[
\mathcal{R}_1(d,\beta)=E_{Q_{Z^I,0}}\left[e^{\beta'Z^I}H_\beta(d(Z^I))\right],\qquad
H_\beta(a)=E_{Q_{X,0}}\left[\exp\left(\frac{1}{\lambda}\|KX+a\|^2-\beta'\Psi X\right)\right].
\]
For each $\beta$, $H_\beta(\cdot)$ has a unique minimizer $a_\beta$ so long as the minimized value is finite, and
\[
\inf_d\mathcal{R}_1(d,\beta)=E_{Q_0}\left[e^{\beta'Z^I}\right]H_\beta(a_\beta).
\]
Thus
\[
\inf_{d,\beta}\mathcal{R}_1(d,\beta)=\inf_{a,\beta}E_{Q_0}\left[e^{\beta'Z^I}\right]H_\beta(a).
\]
The objective is jointly convex in $(a,\beta)$.  At $(a,\beta)=(0,0)$,
\[
\nabla_a\mathcal{R}_1(0,0)=\frac{2}{\lambda}E_{Q_0}\left[KX\exp\left(\frac{1}{\lambda}\|KX\|^2\right)\right]=0,
\]
and
\[
\nabla_\beta\mathcal{R}_1(0,0)=E_{Q_0}\left[Y\exp\left(\frac{1}{\lambda}\|KX\|^2\right)\right]=0,
\]
again by symmetry of the distribution of $X$.  Convexity therefore implies $(a,\beta)=(0,0)$ is globally optimal, so $\delta^*(X,Y)=KX$ for $M=1$ as well.

\paragraph{Part (b): $M\ge2$}
Define the primal constrained multiplier risk
\[
\mathcal{R}_M(\delta^E)=\sup_{P\in\mathcal{P}_{0}^M}\left\{E_P\left[\|\delta^E(X,Y)\|^2\right]-\lambda \KL(P\|Q_0)\right\}.
\]
For the $M=\infty$ problem, write
\[
R_\infty(d)=\lambda E_{Q_{Y,0}}\left[\log G_d(Y)\right],\qquad
G_d(y)=E_{Q_{X|Y,0}}\left[\exp\left(\frac{1}{\lambda}\|KX+d(Z^I)\|^2\right)\Bigm|Y=y\right],
\]
where $\delta^E(X,Y)=KX+d(Z^I)$.

\begin{lemma}\label{lem: subgradient M infinity square}
Suppose $\ell(u)=\|u\|^2$ and $R_\infty(d)<\infty$.  Define
\[
w_d(X,Y)=\frac{\exp\left(\frac{1}{\lambda}\|KX+d(Z^I)\|^2\right)}{G_d(Y)}
\]
and
\[
g_d(Z^I)=2E_{Q_0}\left[\left(KX+d(Z^I)\right)w_d(X,Y)\mid Z^I\right].
\]
Then for every measurable perturbation $\gamma(Z^I)$ such that
$R_\infty(d+\gamma)<\infty$ and
$E_{Q_0}\!\left[\left|g_d(Z^I)'\gamma(Z^I)\right|\right]<\infty$,
\[
R_\infty(d+\gamma)\ge R_\infty(d)+E_{Q_0}\left[g_d(Z^I)'\gamma(Z^I)\right].
\]
In particular, if $g_d(Z^I)=0$ almost surely, then $d$ is globally optimal for the $M=\infty$ problem.
\end{lemma}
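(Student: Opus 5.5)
The inequality will follow from convexity of $R_\infty$ in $d$, applied pointwise in $Y$. The only delicate part is justifying differentiation under the conditional expectation. Fix $y$ and consider the conditional functional
\[
\Lambda_y(d)=\log G_d(y)=\log E_{Q_{X|Y,0}}\left[\exp\left(\tfrac{1}{\lambda}\|KX+d(Z^I)\|^2\right)\Bigm|Y=y\right].
\]
Conditional on $Y=y$, $Z^I=y+\Psi X$ is a function of $X$, so the perturbation $d\mapsto d+t\gamma$ acts on the integrand through $x\mapsto \gamma(y+\Psi x)$. Define $\phi(t)=\Lambda_y(d+t\gamma)$ for $t\in[0,1]$. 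By the Hölder argument in the proof of Proposition \ref{prop: convexity}(b), $\phi$ is convex on $[0,1]$; this uses convexity of $u\mapsto\|u\|^2$ together with the log-sum-exp/Hölder step. Finiteness at both endpoints holds for $Q_{Y,0}$-a.e.\ $y$, because $R_\infty(d)$ and $R_\infty(d+\gamma)$ are finite. Therefore $\phi(1)\ge\phi(0)+\phi'(0^+)$, where the right derivative exists in $[-\infty,\infty)$ by convexity.

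Next I compute $\phi'(0^+)$. The difference quotients $\frac{1}{t}\bigl(e^{\|a+t b\|^2/\lambda}-e^{\|a\|^2/\lambda}\bigr)$, with $a=KX+d(Z^I)$ and $b=\gamma(Z^I)$, are monotone nonincreasing as $t\downarrow0$, since the map $t\mapsto e^{\|a+tb\|^2/\lambda}$ is convex. They are bounded above at $t=1$ by an integrable quantity, because both endpoint integrals are finite. Monotone convergence then lets me pass the limit inside the conditional expectation, which gives
\[
\phi'(0^+)=\frac{E\left[\tfrac{2}{\lambda}(KX+d(Z^I))'\gamma(Z^I)\,e^{\|KX+d(Z^I)\|^2/\lambda}\mid Y=y\right]}{G_d(y)}.
\]
The integrability hypothesis on $g_d'\gamma$ is needed to control the other direction, so that this expression is finite or at least well defined. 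Multiplying by $\lambda$, this becomes $E[2(KX+d)'\gamma\,w_d\mid Y=y]$. Integrating the pointwise inequality $\lambda\phi(1)\ge\lambda\phi(0)+\lambda\phi'(0^+)$ over $Q_{Y,0}$ then yields
\[
R_\infty(d+\gamma)\ge R_\infty(d)+E_{Q_0}\left[2(KX+d(Z^I))'\gamma(Z^I)w_d(X,Y)\right].
\]

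Finally, I rewrite the last term in terms of $g_d$. Since $\gamma$ depends only on $Z^I$, the tower property conditioning on $Z^I$ converts it into $E_{Q_0}[g_d(Z^I)'\gamma(Z^I)]$. This step requires the assumed integrability $E|g_d'\gamma|<\infty$, together with a splitting into positive and negative parts to justify Fubini. The main obstacle is exactly this bookkeeping of integrability, and in particular ensuring that $E[\phi'(0^+)]$ is well defined rather than of the form $\infty-\infty$.

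For the final claim, suppose $g_d=0$ a.s. Then any $d'$ with $R_\infty(d')<\infty$ can be written as $d+\gamma$ with $\gamma=d'-d$, and the integrability condition holds trivially. The inequality gives $R_\infty(d')\ge R_\infty(d)$, and rules with infinite risk are trivially no better, so $d$ is globally optimal.
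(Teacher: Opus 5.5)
Your proof is correct, at the paper's level of rigor, but it takes a different route. You work along the segment $t\mapsto d+t\gamma$. H\"older gives convexity of $\phi(t)=\log G_{d+t\gamma}(y)$, hence $\phi(1)\ge\phi(0)+\phi'(0^+)$, and you compute $\phi'(0^+)$ by monotone convergence of the convex difference quotients, which are bounded above at $t=1$.

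The paper never differentiates. For each $y$ it uses the exact identity $\mathcal F_y(u_d+v)-\mathcal F_y(u_d)=\lambda\log E[e^{v/\lambda}w_d\mid Y=y]$. It then applies Jensen under the tilted conditional law to bound this below by $E[v\,w_d\mid Y=y]$, with $v=2(KX+d)'\gamma+\|\gamma\|^2$, and drops the nonnegative term $E_{Q_0}[w_d\|\gamma\|^2]$. That argument is shorter, avoids any limit interchange or chain rule for the right derivative of $\log$, and gives a slightly stronger bound before the quadratic term is discarded.

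Your route gives more information. The same monotone-convergence step passes through the outer $Q_{Y,0}$-expectation, because the quotients are bounded above by the integrable $\phi_Y(1)-\phi_Y(0)$. So it identifies the pairing with $g_d$ as the one-sided directional derivative of $R_\infty$ at $d$, not merely a subgradient. Applied to $\pm DZ^I$, it yields the first-order condition $E_{Q_0}[g_{d_{C^*}}(Z^I)'DZ^I]=0$. The paper later asserts this condition for $C^*$, and the subgradient inequality alone does not give it.

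The step you flag as delicate is replacing $\int E[\xi\mid Y]\,dQ_{Y,0}$ by $E_{Q_0}[g_d(Z^I)'\gamma(Z^I)]$, with $\xi=2(KX+d)'\gamma\,w_d$. The paper takes the same step silently when it splits $E_{Q_0}[v\,w_d]$. Strictly, it needs $\xi$ to be quasi-integrable under $Q_0$ (e.g.\ $E_{Q_0}[\xi^+]<\infty$). The hypothesis $E_{Q_0}|g_d'\gamma|<\infty$ does not imply this, because it controls only $E[\xi\mid Z^I]$, so it is cleanest to assume quasi-integrability outright. It does hold in the linear Gaussian application, where $w_d\,dQ_0$ is the Gaussian law $P^\infty_C$.
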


\paragraph{Proof of Lemma \ref{lem: subgradient M infinity square}}
For each $y$, define the convex functional
\[
\mathcal F_y(u)=\lambda\log E_{Q_{X|Y,0}}\left[\exp\left(\frac{1}{\lambda}u(X,y)\right)\mid Y=y\right]
\]
on the set of $u(\cdot,y)$ for which the expectation is finite.  Let
$u_d(X,Y)=\|KX+d(Z^I)\|^2$
and define the tilted conditional density
\[
w_d(X,Y)=\frac{\exp\left(\frac{1}{\lambda}u_d(X,Y)\right)}{E_{Q_{X|Y,0}}\left[\exp\left(\frac{1}{\lambda}u_d(X,Y)\right)\mid Y\right]}.
\]
Then for any perturbation $v(X,Y)$ such that both sides are finite,
\[
\mathcal F_Y(u_d+v)-\mathcal F_Y(u_d)
=
\lambda\log E_{Q_{X|Y,0}}\left[\exp\left(\frac{1}{\lambda}v(X,Y)\right)w_d(X,Y)\mid Y\right].
\]
If we apply Jensen's inequality to the tilted measure with Radon-Nikodym derivative
$w_d(\cdot,Y)$ with respect to $Q_{X|Y,0}(\cdot\mid Y)$, we thus have
\[
\mathcal F_Y(u_d+v)-\mathcal F_Y(u_d)
\ge
E_{Q_{X|Y,0}}[v(X,Y)w_d(X,Y)\mid Y].
\]

Now take
\[
v(X,Y)=u_{d+\gamma}(X,Y)-u_d(X,Y)
=
2\left(KX+d(Z^I)\right)'\gamma(Z^I)+\|\gamma(Z^I)\|^2.
\]
Taking expectations over $Q_{Y,0}$ yields
$R_\infty(d+\gamma)-R_\infty(d)
\ge
E_{Q_0}\left[v(X,Y)w_d(X,Y)\right].$ 

Define
\[
g_d(Z^I)=2E_{Q_0}\left[\left(KX+d(Z^I)\right)w_d(X,Y)\mid Z^I\right].
\]
Then
\[
E_{Q_0}[v(X,Y)w_d(X,Y)]
=
E_{Q_0}\left[g_d(Z^I)'\gamma(Z^I)\right]
+
E_{Q_0}\left[w_d(X,Y)\|\gamma(Z^I)\|^2\right]
\ge
E_{Q_0}\left[g_d(Z^I)'\gamma(Z^I)\right].
\]
Therefore
\[
R_\infty(d+\gamma)\ge R_\infty(d)+E_{Q_0}\left[g_d(Z^I)'\gamma(Z^I)\right].
\]
Thus $g_d$ is a subgradient of $R_\infty$ at $d$.  If $g_d(Z^I)=0$ almost surely, then $0$ is a subgradient at $d$, so convexity of $R_\infty$ implies that $d$ is globally optimal. \qed

Now consider the linear class
\[
\delta_C(X,Y)=KX+CZ^I,
\qquad
C\in\mathbb{R}^{d\times k},
\]
and define $r(C)=R_\infty(\delta_C)$.
Since $\lambda\log E[e^{U/\lambda}\mid Y]\ge E[U\mid Y]$ by Jensen's inequality,
\[
r(C)\ge E_{Q_0}\left[\|KX+CZ^I\|^2\right]
=E_{Q_0}\left[\|KX\|^2\right]+\tr\left(C\,\Var_{Q_0}(Z^I)\,C'\right),
\]
where we used $E_{Q_0}[XZ^{I\prime}]=0$.  Let $\Sigma_I=\Var_{Q_0}(Z^I)$ and let $\Pi$ be the orthogonal projection onto the range of $\Sigma_I$.  Since $Z^I=\Pi Z^I$ almost surely, $r(C)=r(C\Pi)$, and on this reduced space $\Sigma_I$ is positive definite.  Hence $\tr(C\Sigma_I C')\to\infty$ whenever $\|C\Pi\|\to\infty$, so sublevel sets of $r$ are compact, and there exists a minimizing value $C^*$ whenever $\inf_{C}r(C)$ is finite.

Define
\[
\frac{dP^\infty_C}{dQ_0}(X,Y)=
\frac{\exp\left(\frac{1}{\lambda}\|\delta_C(X,Y)\|^2\right)}{E_{Q_{X|Y,0}}\left[\exp\left(\frac{1}{\lambda}\|\delta_C(X,Y)\|^2\right)\mid Y\right]}.
\]
For any matrix $D\in\mathbb{R}^{d\times k}$, consider $\gamma_D(Z^I)=DZ^I$.  By Lemma \ref{lem: subgradient M infinity square}, optimality of $C^*$ relative to the class of linear rules implies
\[
0=E_{Q_0}\left[g_{d_{C^*}}(Z^I)'DZ^I\right]
=2E_{P^\infty_{C^*}}\left[\delta_{C^*}(X,Y)'DZ^I\right]
\quad\text{for all }D,
\]
and thus
\[
E_{P^\infty_{C^*}}\left[\delta_{C^*}(X,Y)Z^{I\prime}\right]=0.
\]
Since $\delta_C$ is linear and we consider quadratic loss,  $(X,Y)$ is Gaussian under $P^\infty_{C^*}$, so $(\delta_{C^*}(X,Y),Z^I)$ is jointly Gaussian.  Moreover, $E_{P^\infty_{C^*}}[\delta_{C^*}(X,Y)]=0$, since the marginal distribution of $Y$ under $P^\infty_{C^*}$ is $Q_{Y,0}$ and the tilted conditional mean of $X$ given $Y$ is linear in $Y$ with no constant term.  Since uncorrelated jointly Gaussian random vectors are independent,
\[
E_{P^\infty_{C^*}}\left[\delta_{C^*}(X,Y)\mid Z^I\right]=0.
\]
Equivalently, $g_{d_{C^*}}(Z^I)=0$ almost surely, so Lemma \ref{lem: subgradient M infinity square} implies that $\delta_{C^*}$ is globally optimal for the $M=\infty$ problem over all equivariant rules.

Next fix any linear rule $\delta_C$ and define
\[
\phi_C(y)=\lambda\log E_{Q_{X|Y,0}}\left[\exp\left(\frac{1}{\lambda}\|\delta_C(X,Y)\|^2\right)\mid Y=y\right].
\]
Since $X|Y=y$ is Gaussian and $\delta_C$ is affine in $X$, $\phi_C(y)$ corresponds to the moment generating function of a non-central $\chi^2$ distribution and (when finite) is quadratic in $y$.

For any $P\in\mathcal{P}^2_0$, decompose
$P=P_{X|Y}P_Y$ and
$Q_0=Q_{X|Y,0}Q_{Y,0}$.
Then
\begin{align*}
&E_P\left[\|\delta_C(X,Y)\|^2\right]-\lambda \KL(P\|Q_0) \\
=\;&E_{P_Y}\left[E_{P_{X|Y}}\left[\|\delta_C(X,Y)\|^2\mid Y\right]
-\lambda \KL\left(P_{X|Y}\|Q_{X|Y,0}(\cdot\mid Y)\right)\right]-\lambda \KL(P_Y\|Q_{Y,0}).
\end{align*}
Applying Proposition \ref{thm:multiplier-dual} conditional on $Y$ gives
\[
E_{P_{X|Y}}\left[\|\delta_C(X,Y)\|^2\mid Y\right]
-\lambda \KL\left(P_{X|Y}\|Q_{X|Y,0}(\cdot\mid Y)\right)
\le \phi_C(Y),
\]
from which it follows that
\[
E_P\left[\|\delta_C(X,Y)\|^2\right]-\lambda \KL(P\|Q_0)
\le
E_{P_Y}[\phi_C(Y)]-\lambda \KL(P_Y\|Q_{Y,0}).
\]
Because $P\in\mathcal{P}^2_0$ imposes $E_P[Y]=0$ and $E_P[YY']=\Omega$, and $\phi_C$ is quadratic,
\[
E_{P_Y}[\phi_C(Y)]=E_{Q_{Y,0}}[\phi_C(Y)].
\]
Hence
\[
\mathcal{R}_2(\delta_C)\le E_{Q_{Y,0}}[\phi_C(Y)]=\mathcal{R}_\infty(\delta_C).
\]
The reverse inequality $\mathcal{R}_2(\delta_C)\ge\mathcal{R}_\infty(\delta_C)$ is immediate from
$\mathcal{P}^\infty_0\subseteq\mathcal{P}^2_0$, where
$\mathcal{P}^\infty_0=\cap_{m\ge1}\mathcal{P}^m_0$.
Therefore $\mathcal{R}_2(\delta_C)=\mathcal{R}_\infty(\delta_C)$.
Moreover, $P^\infty_C$ attains $\mathcal{R}_\infty(\delta_C)$, is Gaussian, and has $Y\sim Q_{Y,0}$, so $P^\infty_C\in\mathcal{P}^M_0$ for every $M\ge2$.

Using $\mathcal{P}^\infty_0\subseteq \mathcal{P}^M_0\subseteq \mathcal{P}^2_0$ for $M\ge2$, we obtain for every linear rule $\delta_C$,
\[
\mathcal{R}_M(\delta_C)=\mathcal{R}_\infty(\delta_C).
\]
Applying this at $C=C^*$,
\[
\mathcal{R}_M(\delta_{C^*})=\mathcal{R}_\infty(\delta_{C^*})=\inf_C\mathcal{R}_\infty(\delta_C).
\]
Finally, for any equivariant $\delta^E$ and $M\ge2$,
\[
\mathcal{R}_M(\delta^E)\ge\mathcal{R}_\infty(\delta^E)\ge \mathcal{R}_\infty(\delta_{C^*})
=\mathcal{R}_M(\delta_{C^*}),
\]
so $\delta_{C^*}$ is minimax optimal for each $M\ge2$, and the optimal value is the same for all $M\ge2$. \qed
\paragraph{Proof of Proposition \ref{prop: Convergence - Estimation Error}}
For $h\in\mathbb{R}^p$, define
$T_{n,h}=\sqrt{n}\left(\delta_n^c-\kappa(\theta_{n,h})\right).$
By the definition of $\delta_n^c$,
\[
T_{n,h}
=
\sqrt{n}\left(\kappa(\hat\theta_n^{MLE})-\kappa(\theta_{n,h})\right)
+\delta^c\left(0,\frac{1}{\sqrt{n}}\sum_{i=1}^n\psi(\hat\theta_n^{MLE},X_i);\hat K_n,\hat\Sigma_n\right).
\]
By differentiability of $\kappa$ at $\theta_0$ and Assumption \ref{assu: smoothness},
\[
\sqrt{n}\left(\kappa(\hat\theta_n^{MLE})-\kappa(\theta_{n,h})\right)\stackrel[d]{Q_{n,h}}{\to}K(X-h).
\]
Again by Assumption \ref{assu: smoothness}, under $Q_{n,h}$,
\[
\left(\frac{1}{\sqrt{n}}\sum_{i=1}^n\psi(\hat\theta_n^{MLE},X_i),\hat K_n,\hat\Sigma_n\right)\stackrel[d]{Q_{n,h}}{\to}(Y+\Psi X,K,\Sigma),
\]
so continuity of $\delta^c$ and the continuous mapping theorem give
\[
\delta^c\left(0,\frac{1}{\sqrt{n}}\sum_{i=1}^n\psi(\hat\theta_n^{MLE},X_i);\hat K_n,\hat\Sigma_n\right)\stackrel[d]{Q_{n,h}}{\to}\delta^c(0,Y+\Psi X;K,\Sigma).
\]
Using equivariance with $g=-X$,
\[
\delta^c(0,Y+\Psi X;K,\Sigma)=\delta^c(X,Y;K,\Sigma)-KX.
\]
Therefore
\[
T_{n,h}\stackrel[d]{Q_{n,h}}{\to}K(X-h)+\delta^c(0,Y+\Psi X;K,\Sigma)=\delta^c(X,Y;K,\Sigma)-Kh,
\]
as claimed. \qed

\paragraph{Proof of Corollary \ref{corr: Attainability}}

For $h\in\mathbb{R}^p$, let
\[
T_{n,h}=\sqrt{n}\left(\delta_n^c-\kappa(\theta_{n,h})\right),\quad
T_h=\delta^c(X,Y;K,\Sigma)-Kh,
\]
and define $f(t,w;\beta)=\ell^*(t)\exp(\beta'w)$.
By Proposition \ref{prop: Convergence - Estimation Error} and the definition of $W_{M,n,h}$, under $Q_{n,h}$,
\[
(T_{n,h},W_{M,n,h})\stackrel[d]{Q_{n,h}}{\to} (T_h,W_{M,h}).
\]
By Lemma \ref{lem: fixed-h-liminf}, for each $h$,
\begin{equation}\label{eq: fixed h risk}
\liminf_{n\to\infty}\inf_{\beta}E_{Q_{n,h}}\left[f(T_{n,h},W_{M,n,h};\beta)\right]
\ge
\inf_{\beta}E_{Q_h}\left[f(T_h,W_{M,h};\beta)\right].
\end{equation}

Next, by equivariance of $\delta^c$ and the Gaussian shift structure of the limit experiment,
\[
E_{Q_h}\left[f(T_h,W_{M,h};\beta)\right]
=
E_{Q_0}\left[\ell^*\left(\delta^c(X,Y;K,\Sigma)\right)\exp\left(\beta'W_{M,0}\right)\right]
\]
for every $h$ and $\beta$. Hence $\beta^{*,c}$ from Assumption \ref{assu: UI} minimizes the right-hand side for every $h$.
Therefore, for each $h$,
\[
\limsup_{n\to\infty}\inf_{\beta}E_{Q_{n,h}}\left[f(T_{n,h},W_{M,n,h};\beta)\right]
\le
\lim_{n\to\infty}E_{Q_{n,h}}\left[f(T_{n,h},W_{M,n,h};\beta^{*,c})\right]
\]
\begin{equation}\label{eq: beta^c risk}
=
E_{Q_h}\left[f(T_h,W_{M,h};\beta^{*,c})\right]
=
\inf_{\beta}E_{Q_h}\left[f(T_h,W_{M,h};\beta)\right],
\end{equation}
where the equality in the middle follows from Assumption \ref{assu: UI}. Combining \eqref{eq: fixed h risk} and \eqref{eq: beta^c risk}, for each $h$,
\[
\lim_{n\to\infty}\inf_{\beta}E_{Q_{n,h}}\left[f(T_{n,h},W_{M,n,h};\beta)\right]
=
\inf_{\beta}E_{Q_h}\left[f(T_h,W_{M,h};\beta)\right].
\]
Since $I$ is finite, taking $\sup_{h\in I}$ preserves convergence, and applying $\lambda\log(\cdot)$ gives
\[
\lim_{n\to\infty}\sup_{h\in I}\inf_\beta\lambda\cdot\log\left(E_{Q_{n,h}}\left[\ell^*(T_{n,h})\exp\left(\beta'W_{M,n,h}\right)\right]\right)
=
\sup_{h\in I}\inf_\beta\lambda\cdot\log\left(E_{Q_h}\left[\ell^*(T_h)\exp\left(\beta'W_{M,h}\right)\right]\right).
\]
This is the first claim.

For the asymptotic optimality claim, choose $\delta^c$ optimal in the limit experiment. The convergence argument above gives that the asymptotic risk is
\[
\sup_{h\in\mathbb{R}^p}\inf_\beta\lambda\cdot\log\left(E_{Q_h}\left[\ell^*\left(\delta^c(X,Y)-Kh\right)\exp\left(\beta'W_{M,h}\right)\right]\right).
\]
By Theorem \ref{thm: Constrained LAM}, however, this is the best attainable risk, proving the claim. \qed

\end{document}